\renewcommand{\epsilon}{\varepsilon}
\newcommand{\eps}{\varepsilon}
\newcommand{\poly}{\operatorname{\mathrm{poly}}}
\newcommand{\polylog}{\poly\log}
\newcommand{\R}{\mathbbm{R}}
\renewcommand{\tilde}{\widetilde}
\def\tO{\tilde{O}}
\newcommand{\vol}{\mathbf{Vol}}
\theoremstyle{plain}
\newtheorem{prob}{Problem}
\newtheorem{theorem}{Theorem}[section]
\newtheorem{lemma}[theorem]{Lemma}
\newtheorem{prop}[theorem]{Proposition}
\newtheorem{fact}[theorem]{Fact}
\newtheorem{hypo}[theorem]{Hypothesis}
\newtheorem{claim}[theorem]{Claim}
\theoremstyle{definition}
\newtheorem{definition}[theorem]{Definition}
\title{Tight Dynamic Problem Lower Bounds from \\ Generalized BMM and OMv}
\author{Ce Jin\thanks{\textcolor{black}{Supported by NSF Grant CCF-2129139. }}\\MIT\\cejin@mit.edu \and Yinzhan Xu\thanks{Supported by NSF Grant CCF-1528078.}\\MIT\\xyzhan@mit.edu}
\date{}
\begin{document}

	\setcounter{page}{0} \clearpage
	\maketitle
	\thispagestyle{empty}
	\begin{abstract}

Popular fine-grained hypotheses have been successful in proving conditional lower bounds for many dynamic problems. Two of the most widely applicable hypotheses in this context are the \emph{combinatorial Boolean Matrix Multiplication (BMM) hypothesis}  and the closely-related \emph{Online Matrix Vector Multiplication (OMv) hypothesis}. The main theme of this paper is using $k$-dimensional generalizations of these two hypotheses to prove new \emph{tight} conditional lower bounds for dynamic problems.

The \emph{combinatorial $k$-Clique hypothesis}, which is a standard hypothesis in the literature, naturally generalizes the combinatorial BMM hypothesis. In this paper, we prove tight lower bounds for several dynamic problems under the combinatorial $k$-Clique hypothesis. For instance, we show that:
\begin{itemize}
    \item  The \emph{Dynamic Range Mode} problem has no combinatorial algorithms with $\mathrm{poly}(n)$ pre-processing time, $O(n^{2/3-\epsilon})$ update time and $O(n^{2/3-\epsilon})$ query time for any $\epsilon > 0$, matching the known upper bounds for this problem. Previous lower bounds only ruled out  algorithms with $O(n^{1/2-\epsilon})$ update and query time under the OMv hypothesis.
    \item  The \emph{Dynamic Subgraph Connectivity} problem on undirected graphs with $m$ edges has no combinatorial algorithms with $\mathrm{poly}(m)$ pre-processing time, $O(m^{2/3-\epsilon})$ update time and $O(m^{1-\epsilon})$ query time for $\epsilon > 0$, matching the upper bound given by  Chan, P{\u{a}}tra{\c{s}}cu, and Roditty~[SICOMP'11], and improving the previous update time lower bound (based on OMv) with exponent $1/2$.
\end{itemize}
Other examples include tight combinatorial lower bounds for \emph{Dynamic 2D Orthogonal Range Color Counting}, \emph{Dynamic 2-Pattern Document Retrieval}, and \emph{Dynamic Range Mode} in higher dimensions.
	
Furthermore, we propose the OuMv$_k$ hypothesis as a natural generalization of the OMv hypothesis. Under this hypothesis, we prove tight lower bounds for various dynamic problems. For instance, we show that:
\begin{itemize}
    \item The \emph{Dynamic Skyline Points Counting} problem in $(2k-1)$-dimensional space has no algorithm with $\mathrm{poly}(n)$ pre-processing time and $O(n^{1-1/k-\epsilon})$ update and query time for $\epsilon > 0$, even if the updates are semi-online. 
\end{itemize}
Other examples include tight conditional lower bounds for (semi-online) Dynamic Klee's measure for unit cubes, and high-dimensional generalizations of Erickson’s problem and Langerman’s problem. 
	\end{abstract}
	\newpage
\maketitle

\section{Introduction}

In dynamic (data structure) problems, we need to maintain some data $D$ (e.g., graphs, sequences, geometric objects) that undergoes small updates, and to support querying $f(D)$ for some function $f$. 
Such problems are motivated by practical scenarios where we want to maintain large data sets that are constantly changing, such as social network graphs, large collaborative documents, or real-time flight trackers.
A large body of work in theoretical computer science has been devoted to designing efficient data structures to solve dynamic problems.
These data structures are not only useful on their own,
but also turn out to have applications in solving static problems in many areas of computer science, e.g., computational geometry \cite{ShamosH76}, optimization~\cite{CohenLS21}, and graph theory \cite{GoldbergT88,Cabello19}.

Some dynamic problems have efficient data structures that only require sub-polynomial  time for each update and query. One such example is the Graph Connectivity problem, where we need to maintain an undirected graph under edge insertions and deletions, and support querying whether two vertices are connected~\cite{connectivity1, connectivity3, connHolmLT01,connectivity2,connKapronKM13,connGibbKKT15,conn-Wulff-Nilsen17,chuzhoy2020deterministic}. 
However, many other dynamic problems only have way slower data structures that run in polynomial time in the data size. For instance, if we change the graph in the Graph Connectivity problem from undirected to directed (known as the Dynamic Reachability problem), the current best data structure runs in $O(n^{1.407})$ time per update or query~\cite{Sankowski04,van2019dynamic}. It is thus natural to seek lower bounds for such problems. Unfortunately, proving unconditional super poly-logarithmic  data structure lower bounds is beyond the reach of current techniques~\cite{CliffordGL15}. 

People have thus tried to prove  conditional lower bounds for dynamic problems. An important tool for proving conditional lower bounds is \emph{fine-grained complexity} (see \cite{williams2018some} for a survey),  which uses fine-grained reductions to prove conditional lower bounds for various computational problems under some hypotheses.
There has been a great success in proving dynamic problem lower bounds under various popular hypotheses, including the $3$SUM hypothesis \cite{patrascu2010towards, kopelowitz2016higher, AbboudW14, AbboudWY18, Dahlgaard16}, the APSP hypothesis \cite{RodittyZ11, AbboudW14, williams2020monochromatic, AbboudWY18, Dahlgaard16, AbboudD16, dynamiclislb}, the Strong Exponential Time Hypothesis (SETH)~\cite{AbboudW14, AnconaHRWW19, AbboudWY18, Dahlgaard16}, the combinatorial Boolean Matrix Multiplication (BMM) hypothesis~\cite{RodittyZ11, AbboudW14, clifford2018upper} and the Online Matrix Vector Multiplication (OMv) hypothesis~\cite{HenzingerKNS15, dynamiclislb, berkholz2017answering, Dahlgaard16, clifford2018upper, lau2021algorithms}. 

The  combinatorial BMM hypothesis and the closely-related OMv hypothesis are two versatile hypotheses that have been used in proving conditional lower bounds for a wide range of dynamic problems.

In the BMM problem, one is asked to compute the product of two given $n \times n$ matrices over the Boolean semiring $\{0,1\}$.
We could of course use any fast matrix multiplication algorithm to solve BMM in $O(n^{\omega})$ time, where $\omega < 2.37286$ \cite{alman2021refined} denotes the square matrix multiplication exponent. However, fast matrix multiplication algorithms use ``Strassen-like'' techniques (see e.g.~\cite{ballard2013graph}) that do not perform well in practice. This has motivated the study of ``combinatorial'' algorithms for BMM that do not use any heavy algebraic techniques, in the hope of getting a both theoretically and practically fast algorithm for BMM. Unfortunately, despite considerable amount of efforts \cite{arlazarov1970economical, bansal2009regularity, chan2014speeding, yu2018improved}, the current fastest combinatorial BMM algorithm runs in $n^3 (\log \log n)^{O(1)}/ (\log n)^4$ time \cite{yu2018improved}, still not gaining any polynomial speed-up over the brute-force $O(n^3)$ time algorithm.
Therefore, the \emph{combinatorial BMM hypothesis}, which states that no combinatorial algorithm for BMM can run in $O(n^{3-\epsilon})$ time for $\epsilon > 0$, is popular in fine-grained complexity\footnote{Throughout this work, we consider the word-RAM model of computation with $O(\log n)$-bit words.}. 

Historically, the combinatorial BMM hypothesis was first used to prove conditional lower bounds for static problems. Lee \cite{lee2002fast} first used the combinatorial BMM hypothesis to show a lower bound for Context Free Grammar Parsing. Following this work, BMM-based conditional lower bounds have found a wide range of applications, including Colored Orthogonal Range Counting \cite{KaplanRSV08}, Range Mode \cite{ChanDLMW14}, $2$-Pattern Document Retrieval \cite{larsen2015hardness},  $6$-Cycle Detection in undirected graphs \cite{DahlgaardKS17}, and many more, e.g.,  \cite{williams2018subcubic, chan2020range, DurajK0W20}. The combinatorial BMM hypothesis is also widely used in the context of dynamic problems. For instance, under the combinatorial BMM hypothesis, Roditty and Zwick  \cite{RodittyZ11} showed the hardness of partially dynamic unweighted Single-Source Shortest Paths problems,
Abboud and Vassilevska Williams~\cite{AbboudW14} showed the hardness of a variety of dynamic graph problems and some dynamic set problems, and Clifford, Gr{\o}nlund, Larsen, and Starikovskaya \cite{clifford2018upper} showed the hardness of some string problems.

Another related hypothesis that has been successful in proving dynamic problem lower bounds is the OMv hypothesis. 
In the OMv problem, we first pre-process an $n \times n$ Boolean matrix $M$, and then receive multiple length-$n$ Boolean vectors arriving one by one, and we are asked to compute the Boolean product $Mv$ for each received vector $v$ in an online fashion.
The \emph{OMv hypothesis}, proposed by Henzinger, Krinninger, Nanongkai, and Saranurak \cite{HenzingerKNS15}, states that there is no algorithm for OMv with $\poly(n)$ pre-processing time and $O(n^{1-\epsilon})$  query time for $\epsilon > 0$.\footnote{The original version of the OMv hypothesis is defined for the OMv problem with $n$ queries, but it was shown to be equivalent to the version with an arbitrary polynomial number of queries  \cite{HenzingerKNS15}.} The OMv problem can be viewed as an online version of BMM, and it was proposed in order to remove the ``combinatorial'' notion in the combinatorial BMM hypothesis. Another key advantage of OMv-based lower bounds for dynamic problems is that they hold even when the algorithms are allowed to have arbitrary polynomial pre-processing time \cite{HenzingerKNS15}. Prior to~\cite{HenzingerKNS15}, this type of results were only seen in some SETH-based lower bounds in \cite{AbboudW14}. 

People have established a wide range of hardness results based on the OMv hypothesis.  \cite{HenzingerKNS15} showed over $15$ tight hardness results under the OMv hypothesis, including many dynamic graph problems, Erickson’s problem, Pagh's problem, and the Multiphase problem. Following this work,  the OMv hypothesis has been applied  to more problems, such as database query problems~\cite{berkholz2017answering}, dynamic string problems~\cite{clifford2018upper}, $2$D range query problems \cite{lau2021algorithms}, and Dynamic Longest Increasing Subsequence \cite{dynamiclislb}.

\vspace{0.1cm}
In this paper, we study the natural high-dimensional generalizations of the BMM hypothesis and the OMv hypothesis, and show \emph{tight} conditional lower bounds for a wide range of dynamic problems under these hypotheses.

\paragraph{Combinatorial $k$-Clique hypothesis.} It is known that via combinatorial reductions, BMM is subcubically equivalent to the Triangle Detection problem, which asks to determine whether an $n$-node graph contains a triangle \cite{williams2018subcubic}. Therefore, the combinatorial BMM hypothesis is equivalent to the hypothesis stating that there is no truly subcubic combinatorial  algorithm for Triangle Detection. 

The natural generalization of Triangle Detection is $k$-Clique Detection, which asks to determine whether an $n$-node graph contains a $k$-clique, for any constant $k \ge 3$. Although the current fastest algorithm for $k$-Clique Detection runs in $O(n^{\omega(\lfloor k/3 \rfloor, \lceil k/3 \rceil, \lceil (k-1)/3 \rceil)})$ time~\cite{ItaiR78,nesetril1985,eisenbrand2004complexity}, where $\omega(a, b, c)$ denotes the exponent for  multiplying an $n^a \times n^b$ matrix with an $n^b \times n^c$ matrix, the algorithm heavily relies on fast matrix multiplication, and is thus not efficient in practice. If we restrict the algorithm to be combinatorial, then there is currently no combinatorial algorithm for $k$-Clique that runs polynomially faster than $O(n^k)$-time brute-force, for any constant $k$. 
Therefore, the following combinatorial $k$-Clique hypothesis is a popular natural generalization of the combinatorial BMM hypothesis. 

\begin{hypo}[Combinatorial $k$-Clique Hypothesis]
\label{hypo:kclique}
There is no $O(n^{k-\eps})$ time \emph{combinatorial} algorithm for $k$-Clique Detection on $n$-vertex graphs, for any $\eps >0$.
\end{hypo}

We remark that it is not new to study $k$-Clique Detection in the context of fine-grained complexity (e.g. some previous works include \cite{Chan10, bringmann2017dichotomy, BringmannW17, abboud2018valiant, Chang19, Li19,AbboudGIKPTUW19, GutenbergWW20}). For instance, Chan~\cite{Chan10} reduced $k$-Clique Detection to the $k$-dimensional Klee's measure problem, showing a matching combinatorial lower bound for the latter problem;
Bringmann, Gr{\o}nlund, and Larsen \cite{bringmann2017dichotomy} reduced $k$-Clique Detection to the Word Break problem; Abboud, Backurs, Vassilevska Williams \cite{abboud2018valiant} reduced $k$-Clique Detection to Context-Free Grammar Parsing. Nonetheless, previous applications of $k$-Clique Detection to dynamic problems are much rarer. To the best of our knowledge, the only known example is a result by Gutenberg, Vassilevska Williams, and Wein \cite{GutenbergWW20}, who reduced $4$-Clique Detection to partially dynamic Single Source Shortest Path.

\paragraph{OuMv$_k$ hypothesis. } 

A problem that is often used as an intermediate step in showing OMv-based lower bounds is the OuMv problem. In OuMv, we need to first pre-process an $n \times n$ Boolean matrix $M$. Then for each pair of length $n$ Boolean vectors $u, v$ that arrive in an online fashion, we need to compute $u^T M v$. The OuMv hypothesis states that there is no algorithm for OuMv with polynomial pre-processing time and  $O(n^{2-\eps})$  query time for $\eps > 0$. It was shown that the OMv hypothesis is equivalent to the OuMv hypothesis~\cite{HenzingerKNS15}.

We propose the following OuMv$_k$ problem for any constant integer $k \ge 2$, which is a natural high-dimensional generalization of the OuMv problem (OuMv is equivalent to OuMv$_2$).

\begin{definition}[OuMv$_k$ Problem]
During pre-processing we are given a subset $M\subseteq [n]^k$.\footnote{We use $[n]$ to denote the set $\{1, 2, \ldots, n\}$.}
Then we receive online queries each specifying $k$ sets $U^{(1)},U^{(2)},\dots,U^{(k)} \subseteq [n]$, and we need to answer whether $U^{(1)}\times U^{(2)}\times \dots\times U^{(k)}$ has a non-empty intersection with $M$.
\end{definition}

Clearly, we can handle each OuMv$_k$ query in $O(n^k)$ time, by explicitly computing $U^{(1)}\times U^{(2)}\times \dots\times U^{(k)}$ and then comparing it with $M$. We propose the following OuMv$_k$ hypothesis which states that the $O(n^k)$ brute-force algorithm is essentially the best. 

\begin{hypo}[OuMv$_k$ Hypothesis]
\label{hypo:oumvk1}
 There is no algorithm for the OuMv$_k$ problem with $n$ queries in  $O(n^{1 + k-\eps})$ total time (pre-processing time plus total query time) for any $\eps > 0$. 
\end{hypo}

Using techniques similar to \cite{HenzingerKNS15}, we can show that the following hypothesis is equivalent. For completeness, we include a proof in the appendix. 

\begin{hypo}
\label{hypo:oumvk2}
 There is no algorithm for the OuMv$_k$ problem with $\poly(n)$ pre-processing time and \sloppy $O(n^{\gamma + k-\eps})$ total query time for $n^\gamma$ queries, for any $\gamma, \eps>0$.
\end{hypo}

In the following, we explain why we believe the OuMv$_k$ Hypothesis is plausible.
The OuMv$_k$ problem can be viewed as a variant of the $(k+1)$-Clique Detection problem in $(k+1)$-partite graphs. Imagine we have $k$ vertex parts $V_1, \ldots, V_k$ each of size $n$, and we add a hyperedge among $(v_1, \ldots, v_k)$ if and only if $(v_1, \ldots, v_k) \in M$. Each OuMv$_k$ query represents a vertex $u$ in a $(k+1)$-th vertex part, and for each $i \in [k]$, we connect $u$ with $v_i \in V_i$ if and only if $v_i \in U^{(i)}$. Clearly, the answer to the OuMv$_k$ query is YES if and only if $u$ is in a ``$(k+1)$-clique'' with some vertices $v_1, \ldots, v_k$, where there is a hyperedge among $(v_1, \ldots, v_k)$ and there is an edge between $u$ and $v_i$ for every $i \in [k]$. 
Since hyperedges are more powerful than edges, and online vertices are harder than static vertices, the OuMv$_k$ problem is clearly harder than $(k+1)$-Clique Detection. 

All known algorithms \cite{ItaiR78,nesetril1985,eisenbrand2004complexity} for $(k+1)$-Clique Detection that are polynomially faster than brute-force use the following idea: grouping the vertex parts to three groups, reducing $(k+1)$-Clique Detection to Triangle Detection where each group corresponds to one vertex part in the Triangle Detection instance, and finally using fast (rectangular) matrix multiplication to solve the Triangle Detection instance. If we try to apply this idea to OuMv$_k$, we have to assign $V_1, \ldots, V_k$ to at most $2$ groups, since otherwise there is no way to encode the hyperedges. This leaves $V_{k+1}$ to its own group.

Recall the vertices in $V_{k+1}$ arrive in an online fashion. Thus, we have essentially reduced OuMv$_k$ to a Triangle Detection instance in a tripartite graph, where vertices in one vertex part arrive in an online fashion. It can be further viewed as a possibly rectangular instance of OuMv, which is known to be equivalent to OuMv \cite{HenzingerKNS15}.

Therefore, to solve OuMv$_k$ polynomially faster than the $O(n^k)$ time per query brute-force algorithm, we either need a new algorithm for $(k+1)$-Clique Detection that is drastically different from all previous algorithms, or a polynomially faster algorithm for OuMv. Thus, it is natural to consider the OuMv$_k$ hypothesis.

Gutenberg, Vassilevska Williams, and Wein \cite{GutenbergWW20} studied another generalization of the OuMv hypothesis, the OMv$3$ hypothesis, which was used to show conditional lower bound for partially dynamic Single Source Shortest Paths.
In contrast to our OuMv$_3$ problem defined on $3$-dimensional tensors, their OMv$3$ problem is defined on matrices, and admits speedup via fast matrix multiplication. Hence, their hypothesis is based on an easier problem with a lower hypothesized running time exponent, and is not directly comparable to our OuMv$_3$ hypothesis.

\subsection{Our Contributions}

\paragraph*{Tight combinatorial lower bounds based on the $k$-Clique hypothesis.} We show tight combinatorial lower bounds for dynamic problems such as Dynamic Range Mode, Dynamic Subgraph Connectivity,  and Dynamic $2$D Orthogonal Range Color Counting. These tight lower bounds are not known to be possible under either the BMM hypothesis or the OMv hypothesis. Moreover, all these lower bounds hold even if the data structures are allowed to use arbitrary
polynomial pre-processing time. Interestingly, the static variants of many problems we study had tight combinatorial lower bounds based on the BMM hypothesis, such as Range Mode \cite{ChanDLMW14} and $2$D Orthogonal Range Color Counting \cite{kaplan2007counting}; we in turn design tight combinatorial lower bounds for their dynamic variants under the combinatorial $4$-Clique hypothesis. This identifies an intriguing pattern that relates $k$-Clique-based lower bound for a static problem and $(k+1)$-Clique-based lower bound for its dynamic variant. We believe this pattern could potentially be useful for designing combinatorial clique-based lower bounds for many other dynamic problems.

We also show that many previous BMM-based lower bounds for dynamic problems \cite{AbboudW14} can be easily strengthened to arbitrary
polynomial pre-processing time under the combinatorial $4$-Clique hypothesis, without lowering the combinatorial lower bounds on update or query time.

\paragraph{Tight lower bounds based on the OuMv$_k$ hypothesis.} 
There are many problems that are parameterized by some constant integer parameters and become much harder when the integer parameters increase. For instance, such parameters could be the dimension for computational geometry problems or tensor problems, or edge cardinality for hypergraph problems. It is thus natural to seek conditional lower bounds parameterized by such integer parameters. 
However, it is unclear how to use the OMv hypothesis to explain the increased difficulties of this type of problems when the parameters increase.  Using our proposed OuMv$_k$ hypothesis, we are able to show increasing conditional lower bounds for such problems when their parameters increase. Such examples include Dynamic Skyline Points Counting, Dynamic Klee's measure for unit hypercubes, high-dimensional Erickson’s problem and high-dimensional Langerman’s problem. We believe the OuMv$_k$ hypothesis could potentially have further applications in proving dynamic problem lower bounds.

\begin{table}[p]
\scriptsize
    \centering
    \begin{tabular}{|c|c|c|c|c|c|c|}
    \hline
        \multirow{2}{*}{Problems} & \multicolumn{3}{c|}{Lower Bounds} & \multirow{2}{*}{Hypotheses} & \multirow{2}{*}{References} & \multirow{2}{*}{Upper Bounds}\\
    \cline{2-4}
         &  Pre-processing & Update & Query  & & & \\
    \hline
    
    Dynamic Range Mode & $\poly(n)$ & \multicolumn{2}{c|}{$n^{2/3-\epsilon}$} & $4$-Clique & Thm.~\ref{thm:mode} & \cite{ChanDLMW14, minority1}\\
    \hline
    
    Dynamic Range Minority & $\poly(n)$ & \multicolumn{2}{c|}{$n^{2/3-\epsilon}$} & $4$-Clique & Thm.~\ref{thm:mode} & \cite{minorityChanDSW15, minority1} \\
    \hline
    
    \begin{tabular}{c}
         Dynamic $d$-Dimensional  \\
         Orthogonal Range Mode
    \end{tabular} & $\poly(n)$ & \multicolumn{2}{c|}{$n^{1-1/(2d+1)-\epsilon}$} & $(2d+2)$-Clique & Thm.~\ref{thm:dmode} & Prop.~\ref{prop:dmode_upper} \\
    \hline
    
    $st$ Subgraph Connectivity & $\poly(m)$ & $m^{2/3-\epsilon}$ & $m^{1-\epsilon}$ & $4$-Clique & Thm.~\ref{thm:subconn} &
    \begin{tabular}{c}
         \cite{ChanPR11}  \\
         amortized
    \end{tabular}\\
    \hline
    
    \begin{tabular}{c}
         Dynamic $2$-Pattern   \\
         Document Retrieval 
    \end{tabular} & $\poly(n)$ & \multicolumn{2}{c|}{$n^{2/3-\epsilon}$} & $4$-Clique & Thm.~\ref{thm:2pattern} & Prop.~\ref{prop:2pattern_upper} \\
    \hline
    
    \begin{tabular}{c}
         Dynamic $2$D Orthogonal    \\
         Range Color Counting 
    \end{tabular} & $\poly(n)$ & \multicolumn{2}{c|}{$n^{2/3-\epsilon}$} & $4$-Clique & Thm.~\ref{thm:color} & Prop.~\ref{prop:color_upper} \\
    \hline
    
    Dynamic $st$-Reachability & \multirow{3}{*}[-0.3cm]{$\poly(n)$} & \multicolumn{2}{c|}{\multirow{3}{*}[-0.3cm]{$n^{2-\epsilon}$}} &  \multirow{3}{*}[-0.3cm]{$4$-Clique} & \multirow{3}{*}[-0.3cm]{Thm.~\ref{thm:polypre}} & \multirow{3}{*}[-0.3cm]{trivial} \\
    \cline{1-1}
    
    \begin{tabular}{c}
         Dynamic  \\
         Strong Connectivity 
    \end{tabular}  &  & \multicolumn{2}{c|}{}   &  &  & \\
    \cline{1-1}
    
    \begin{tabular}{c}
         Dynamic Bipartite   \\
         Perfect Matching 
    \end{tabular} &  & \multicolumn{2}{c|}{}   &  &  & \\
    \hline
    
    \begin{tabular}{c}
         Dynamic Skyline Points \\
         Counting in $\R^{2k-1}$ 
    \end{tabular} & $\poly(n)$ & \multicolumn{2}{c|}{$n^{1-1/k-\epsilon}$} & OuMv$_k$ & Thm.~\ref{thm:skyline} & \begin{tabular}{c}
    Prop.~\ref{prop:skyline_upper} \\
    semi-online
    \end{tabular}\\
    \hline
    
    \begin{tabular}{c}
         Dynamic Klee’s  
         measure \\
  for         unit hypercubes in $\R^{2k-1}$ 
    \end{tabular} & $\poly(n)$ & \multicolumn{2}{c|}{$n^{1-1/k-\epsilon}$} & OuMv$_k$ & Thm.~\ref{thm:klee} &
    \begin{tabular}{c}
    \cite{dyn:Chan03} \\
    semi-online\\
    only for $k=2$ 
    \end{tabular}\\
    \hline
    
    \begin{tabular}{c}
         Chan's Halfspace problem in $\R^{k}$
    \end{tabular} & $\poly(n)$ & \multicolumn{2}{c|}{$n^{1-1/k-\epsilon}$} & OuMv$_k$ & Thm.~\ref{thm:chan} &
    \begin{tabular}{c}
         \cite{dyn:Chan03}  \\
         amortized 
    \end{tabular}\\
    \hline
    
    \begin{tabular}{c}
         Dynamic $s$-$k$-Uniform     \\
         $(k+1)$-Hyperclique 
    \end{tabular} & $\poly(n)$ & $n^{1-\epsilon}$ & $n^{k-\epsilon}$ & OuMv$_k$ & Thm.~\ref{thm:s-k-hyperclique} & trivial \\
    \hline
    
    \begin{tabular}{c}
         $k$-Dimensional      \\
         Erickson’s problem
    \end{tabular} & $\poly(n)$ & $n^{k-1-\epsilon}$ & \hspace{1pt} $n^{k-\epsilon}$ \hspace{1pt} & OuMv$_k$ & Thm.~\ref{thm:erickson} & trivial \\
    \hline
    
    \begin{tabular}{c}
         $(k-1)$-Dimensional      \\
         Langerman’s problem
    \end{tabular} & $\poly(n)$ & \multicolumn{2}{c|}{$n^{(k-1)^2/k-\epsilon}$} & OuMv$_k$ & Thm.~\ref{thm:langerman} & Prop.~\ref{prop:Langerman_algo} \\
    \hline
    \end{tabular}
    \caption{Our lower bounds for dynamic problems. The lower bounds based on the $k$-Clique hypothesis work for combinatorial algorithms and the lower bounds based on the OuMv$_k$ hypothesis work for arbitrary algorithms. \\
    The lower bounds state that there are no algorithms achieving the stated pre-processing time, update time and query time simultaneously for $\epsilon > 0$, under the corresponding hypotheses, even if the algorithms have amortized update and query time, and the updates are semi-online. \\All our lower bounds have matching upper bounds unless otherwise stated. The upper bounds column references algorithms that run in the stated pre-processing time, update time and query time simultaneously for $\epsilon = 0$, up to poly-logarithmic factors. All algorithms work for fully dynamic inputs with worst-case time guarantees, unless otherwise stated. }
    \label{tab:my_label}
\end{table}

\subsubsection{Lower Bounds based on \texorpdfstring{$k$}{k}-Clique} 

\paragraph*{Dynamic Range Mode.}

Given an integer sequence $a_1,a_2,\dots,a_n$, a range mode query $l, r$ asks to report the integer that appears most frequently (breaking ties arbitrarily) among $a_l,a_{l+1},\dots,a_r$. 
In the Dynamic Range Mode problem, we need to maintain an integer sequence that undergoes insertions and deletions, and support range mode queries. In a certain batched version of static range mode, Batch Range Mode \cite{ChanDLMW14}, we are given a  length $n$ sequence and $n$ range mode queries, and need to answer these $n$ queries at once. 

The time complexity of combinatorial algorithms for Batch Range Mode is quite well-understood. It is known that we can solve Batch Range Mode in $\tO(n^{1.5})$ time\footnote{In this paper, $\tO$ hides poly-logarithmic factors in the input size.} by combinatorial algorithms, and any polynomially faster ($O(n^{1.5-\eps})$ time for $\eps > 0$)  combinatorial algorithm will contradict the combinatorial BMM hypothesis~\cite{ChanDLMW14}. Faster algorithms are known for Batch Range Mode if fast matrix multiplication is allowed~\cite{williams2020truly, Gu0WX21}. 

In contrast, time complexity of combinatorial algorithms for Dynamic Range Mode was much less understood. 
There are in fact multiple combinatorial algorithms for Dynamic Range Mode with $\tilde O(n^{2/3})$ update and query times \cite{ChanDLMW14, minority1}, and the $\tilde O(n^{2/3})$ bound was a seeming barrier faced by these algorithms. On the lower bound side, the combinatorial $n^{1.5-o(1)}$ lower bound of Batch Range Mode under the combinatorial BMM hypothesis  \cite{ChanDLMW14} can be easily adapted to show an $n^{0.5-o(1)}$ per update and query (non-combinatorial) lower bound under the OMv hypothesis, which has a big gap from the $\tilde O(n^{2/3})$ upper bound. People have used fast matrix multiplication to design $O(n^{2/3-\eps})$  time (for $\eps > 0$) algorithms for Dynamic Range Mode \cite{SandlundX20,Gu0WX21}, but besides the lack of progress of purely combinatorial algorithms, 
there was no other evidence why fast matrix multiplication is necessary. In fact, even an $\tO(n^{1/2})$ time combinatorial algorithm could exist under previous knowledge. 

We finally resolve this gap between the upper and lower bounds for combinatorial Dynamic Range Mode. We show that the previous $\tilde O(n^{2/3})$ seeming barrier for combinatorial Dynamic Range Mode algorithms is actually supported by a strong reason: assuming the combinatorial $4$-Clique hypothesis, no combinatorial algorithm for Dynamic Range Mode can have $\poly(n)$ pre-processing time and $O(n^{2/3-\eps})$  update and query time for any $\eps > 0$.

Our techniques can also show conditional lower bound for a similar problem, Dynamic Range Minority~\cite{minority1,minorityChanDSW15}, which asks for the least frequent integer (that appears at least once) in the query range $a_l,a_{l+1},\dots,a_r$ for a dynamic sequence $a$. 

\begin{restatable}{theorem}{modeLowerBound}
\label{thm:mode}
Assuming the combinatorial $4$-Clique hypothesis, there is no combinatorial data structure that solves Dynamic Range Mode in $\poly(n)$ pre-processing time, $O(n^{2/3-\eps})$ amortized query time and $O(n^{2/3-\eps})$ amortized update time for $\eps > 0$.  The same lower bound also holds for the Dynamic Range Minority problem.
\end{restatable}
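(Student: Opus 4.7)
The plan is to reduce combinatorial 4-Clique detection to Dynamic Range Mode with parameters tuned so that the hypothesized $O(n^{2/3-\epsilon})$ algorithm would yield a combinatorial 4-Clique algorithm in time $O(N^{4-3\epsilon})$, contradicting Hypothesis~\ref{hypo:kclique}. Given a 4-partite graph $G$ with parts $V_1,V_2,V_3,V_4$ of size $N$, we aim to construct a Range Mode sequence of length $n = \Theta(N^3)$ and perform $O(N^2)$ updates together with $O(N^2)$ queries to decide whether $G$ contains a 4-clique; plugging in the hypothetical bound then gives the stated contradiction, since $N^2 \cdot (N^3)^{2/3-\epsilon} = N^{4-3\epsilon}$.

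At a high level, the construction follows the classical pattern for Range Mode lower bounds: the sequence is organized as $N$ contiguous blocks (one per $c \in V_3$), each of $\Theta(N^2)$ positions indexed by $(a,b) \in V_1 \times V_2$. Preprocessing populates each position $(c,a,b)$ with colors encoding whether $\{a,b,c\}$ is a triangle in $G[V_1 \cup V_2 \cup V_3]$, together with padding copies that set a known baseline frequency per color. The crucial property to enforce is that, for any contiguous union of blocks, a range mode query returns (modulo the baseline) the color $(a,b)$ that participates in the largest number of triangles with $c$ in the range.

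In the online phase we sweep through vertices $d \in V_4$. For each $d$ we toggle padding anchors so that (in effect) only blocks $c \in N(d)$ contribute to frequency counts, using $O(N)$ amortized updates per $d$ across a global sliding-window sweep over $V_4$. Then, for each $(a,b) \in E_{12}$ with $(a,d),(b,d) \in E$, we issue a single range mode query; a mode color $(a,b)$ with frequency exceeding the baseline certifies some $c \in N(d) \cap V_3$ with $(a,b,c)$ a triangle, hence a 4-clique $(a,b,c,d)$. A global counting argument shows that the total update and query counts are both $O(N^2)$.

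To handle the $\poly(n)$ preprocessing---which could in principle be $n^c$ for any fixed $c$ and would otherwise dwarf $N^4$---we apply the above reduction separately to $K$ sub-instances obtained by partitioning $V_1$ (or $V_1$ and $V_2$) into equal groups; a standard parameter balance between preprocessing and online cost then yields a combinatorial 4-Clique algorithm of time $N^{4-\delta(\epsilon)}$ for some $\delta > 0$. Dynamic Range Minority admits an analogous reduction with dualized color semantics: multiplicities are tuned so that a 4-clique witness color appears strictly less frequently than fillers in the query range, and the minority query isolates it. The main technical obstacle is the combinatorial calibration of the color multiplicities and the sliding-window update schedule---ensuring that $c \in N(d)$ effectively becomes a contiguous range via $O(N)$ toggle updates per $d$, and that the frequency separation between 4-clique witnesses and non-witnesses can be resolved cleanly by a single mode (or minority) query.
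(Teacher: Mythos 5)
There are genuine gaps, and the central one is the selection mechanism. In Dynamic Range Mode the only available operations are insertions/deletions of sequence elements and mode queries over \emph{contiguous} ranges. For a fixed $d$, the union of blocks $c\in N(d)$ is not contiguous, and ``toggling padding anchors so that only blocks $c\in N(d)$ contribute'' has no clear implementation: to erase a block's influence on the mode you would essentially have to delete its $\Theta(N^2)$ elements, far beyond the $O(N)$ amortized updates per $d$ you budget, and padding/dilution does not remove a block's color counts. Even granting such a selection, a single mode query returns the globally most frequent color over the range, not information tied to a designated pair $(a,b)$, and nothing in your range encodes the edges $(a,d)$ and $(b,d)$, so the per-pair query does not certify a $4$-clique. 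Finally, the claimed $O(N^2)$ total query count is unsupported: you issue one query per pair $(a,b)$ per vertex $d$, which is $\Theta(N^3)$ on dense instances, and with sequence length $n=\Theta(N^3)$ this alone costs $N^3\cdot (N^3)^{2/3-\epsilon}=N^{5-3\epsilon}\gg N^4$, so this balanced parameterization cannot yield the $n^{2/3}$ lower bound.

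Your fix for arbitrary polynomial preprocessing also breaks down: if you partition $V_1$ (or $V_1,V_2$) into $K$ groups, the per-$d$ update schedule over blocks indexed by $V_3$ must be repeated in every sub-instance, giving total update time at least $K\cdot N^2\cdot (N^3/K)^{2/3}=K^{1/3}N^4$, which exceeds $N^4$ for any $K>1$; partitioning $V_3$ instead multiplies the query count by $K$. The paper avoids all of these issues with a different encoding: it reduces from an \emph{unbalanced} $4$-Clique instance with parts $A,B,C,D$ of sizes $n^{1/3},n^{1/3},n^{t},n^{2/3}$, where $n^t$ is the assumed preprocessing time (legitimate by the equivalence in Fact~\ref{fact:clique_unbalanced_eq}), so preprocessing is absorbed by iterating phases over the huge part $C$. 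The array elements are vertices of $D$, arranged in permutation blocks $P_a$ ($a\in A$) and $Q_b$ ($b\in B$) with neighborhoods placed at the block boundaries; each phase for $c\in C$ inserts $\mathcal{N}_D(c)$ into the middle of the array, and each triple $(a,b,c)$ gets one contiguous query whose multiset is several full copies of $D$ plus $\mathcal{N}_D(a)$, $\mathcal{N}_D(b)$, $\mathcal{N}_D(c)$, so the mode exposes a common neighbor in $D$. This yields $O(n^{t+2/3})$ updates and queries against a clique budget of $n^{t+4/3-o(1)}$, which is what makes the $2/3$ exponent tight. You would need to replace both your selection mechanism and your preprocessing trick with something of this kind for the proof to go through.
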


Our techniques generalize to high-dimensional Range Mode as well, which was studied in \cite{ChanDLMW14}.

\paragraph*{Subgraph Connectivity.}

In the Subgraph Connectivity problem (SubConn) \cite{FrigioniI00,Chan06}, we need to pre-process a static undirected graph $G=(V,E)$ with $|V|=n$ and $|E|=m$, and maintain a dynamic vertex subset $S\subseteq V$ that undergoes insertions and deletions. For each query specified by vertices $s,t$, we need to report whether $s$ and $t$ are connected in the induced subgraph of $S$ in $G$.

By running breadth-first search for every query or update, it is trivial to solve SubConn in $O(m)$ query time and $O(1)$ update time, or $O(1)$ query time and $O(m)$ update time respectively. 
The first nontrivial solution was an algorithm given by Chan~\cite{Chan06} that uses fast matrix multiplication, with $\tilde O(m^{0.94})$ amortized update time and $\tilde O(m^{1/3})$ worst-case query time.
The  algorithm with current fastest update time, due to Chan, P{\u{a}}tra{\c{s}}cu, and Roditty \cite{ChanPR11}, has 
$\tilde O(m^{2/3})$ amortized update time  and $\tilde O(m^{1/3})$ worst-case query time, and does not need fast matrix multiplication.
There exist other algorithms \cite{Duan10, DuanZ17, baswana2016dynamic, chen2016improved} that achieve some combinations of almost linear space, worst-case update time guarantee, or different update-query time trade-off, but the $\tilde O(m^{2/3})$ time per update bound remains unbeaten.

The algorithm of Chan, P{\u{a}}tra{\c{s}}cu, and Roditty \cite{ChanPR11} also supports the following trade-off: for any parameter $1\le \Delta\le n$, their data structure can achieve $\tilde O(\Delta^2 + m/\Delta)$ update time  and $\tilde O(\Delta)$ query time  (with $\tilde O(m\Delta)$ pre-processing time). An interesting question is whether this trade-off curve is tight; in particular, it was asked in \cite{ChanPR11} as an open question whether the $m^{2/3}$ update time can be improved (while keeping a sublinear query time).

Previous conditional lower bounds have ruled out algorithms for SubConn with any of the following running times (for any $\eps>0$): 

\begin{enumerate}
\item   (under 3SUM \cite{AbboudW14}) $\tilde O(m^{4/3-\eps})$ pre-processing time, $O(m^{a-\eps})$ update time and $O(m^{2/3-a-\eps})$ query time, for any $a\in [1/6,1/3]$. 
\label{item1}

\item (under OMv \cite{HenzingerKNS15}) 
 polynomial pre-processing time, $O(m^{a-\eps})$ update time and $O(m^{1-a-\eps})$ query time, for any $a\in (0,1)$.
\label{item2}
 \item (under OMv \cite{HenzingerKNS15}) polynomial pre-processing time, $O(m^{1/2-\eps})$ update time and $O(m^{1-\eps})$ query time.  
\label{item3}
\end{enumerate}
Item~\ref{item1} and Item~\ref{item3} also apply to the easier $st$-Subconn problem, where each query involves two fixed vertices $s,t$ given during pre-processing. Item~\ref{item2} partly matches the trade-off curve, showing that the product of update time and query time cannot be much smaller than $m$. However, it remains open whether we can achieve $\tilde O(m^{2/3-\alpha})$ update time and $\tilde O(m^{1/3+\alpha})$ query time for some $\alpha>0$; Item~\ref{item3} only ruled out the possibility of $\alpha>1/6$.

We answer this open question,
showing that  Chan, P{\u{a}}tra{\c{s}}cu, and Roditty's combinatorial algorithm for SubConn \cite{ChanPR11} is near-optimal under the combinatorial 4-Clique hypothesis.  Our lower bound also holds for the easier $st$-SubConn problem.

\begin{restatable}{theorem}{SubConnLowerBound}
\label{thm:subconn}
Assuming the combinatorial 4-Clique hypothesis, there is no combinatorial algorithm that solves $st$-SubConn 
 in $poly(m)$ pre-processing time,  $O(m^{2/3-\epsilon})$ amortized update time, and $O(m^{1-\epsilon})$ amortized query  time for $\epsilon > 0$.
\end{restatable}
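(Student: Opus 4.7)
The plan is to reduce combinatorial 4-Clique Detection on a 4-partite graph $G=A\cup B\cup C\cup D$ with $|A|=|B|=|C|=|D|=n$ to $st$-SubConn. I will construct an $st$-SubConn instance $H$ with $m=\Theta(n^2)$ vertices and edges, together with a schedule of $n_U=O(n^2)$ updates and $n_Q=O(n^2)$ queries whose outcomes determine whether $G$ has a 4-clique. If a hypothetical algorithm achieves $U=O(m^{2/3-\eps})$ amortized update time and $Q=O(m^{1-\eps})$ amortized query time, the total running time of the reduction becomes
\[
\poly(n)+O(n^2)\cdot O(n^{4/3-2\eps})+O(n^2)\cdot O(n^{2-2\eps})=O(n^{4-2\eps}),
\]
contradicting the combinatorial 4-Clique hypothesis.

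First I would construct $H$. Besides $s,t$, it contains a dynamic ``filter'' vertex $x_c$ for each $c\in C$ and $y_d$ for each $d\in D$; a static ``pair'' vertex $p_{ab}$ for each $(a,b)\in E_{AB}$; and per-edge gadget vertices $\alpha_{ac},\beta_{bc},\gamma_{ad},\delta_{bd}$ for each edge in $E_{AC},E_{BC},E_{AD},E_{BD}$, respectively. The edges of $H$ are wired statically so that, when only $x_c$ and $y_d$ are activated (alongside all static vertices), $s$ and $t$ are connected in $H[S]$ iff there exists $(a,b)\in E_{AB}$ with all four of $ac,bc,ad,bd\in E(G)$. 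Since the schedule only queries pairs $(c,d)\in E_{CD}$, the additional requirement $cd\in E(G)$ holds automatically, and the connectivity captures exactly the existence of a 4-clique $(a,b,c,d)$ in $G$.

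The schedule enumerates $(c,d)\in E_{CD}$ (at most $n^2$ pairs); for each such pair it performs two updates to activate $x_c$ and $y_d$, one $st$-connectivity query, and two updates to deactivate. The reduction reports ``4-clique exists'' iff some query returns YES.

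The main obstacle will be the precise wiring of $H$ to simultaneously enforce $m=O(n^2)$ and the soundness direction ``$s$-$t$ connected $\Rightarrow$ 4-clique''. A naive design that puts a direct edge $p_{ab}-x_c$ iff $ac,bc\in E(G)$ blows up the edge count to $\Theta(n^3)$, since $\sum_{(a,b)\in E_{AB}}|N(a)\cap N(b)\cap C|$ can be that large. The per-edge gadget vertices $\alpha_{ac},\beta_{bc},\gamma_{ad},\delta_{bd}$ factor these potential edges into two steps (e.g., $x_c-\alpha_{ac}$ and $\alpha_{ac}-p_{ab}$, the latter existing whenever $(a,b)\in E_{AB}$), keeping each category of edges at $O(n^2)$. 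The subtler challenge is to rule out spurious $s$-$t$ walks that splice witnesses from two different pairs $(a,b)$ and $(a',b')$: I plan to split each $p_{ab}$ into two ports with no direct edge between them, route the ``$a$-arm'' ($\alpha_{ac},\gamma_{ad}$) and the ``$b$-arm'' ($\beta_{bc},\delta_{bd}$) through opposite ports, and arrange the only crossing between the two ports to be through an interlocked path that revisits the same $p_{ab}$ from both sides. A careful case analysis then shows that any $s$-$t$ walk in $H[S]$ must pin down a consistent pair $(a,b)$ throughout, which together with the activation of $x_c,y_d$ yields a valid 4-clique.
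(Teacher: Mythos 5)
There are two genuine gaps, and the second one is the heart of the matter. First, your framework cannot give the theorem as stated: you reduce a \emph{balanced} 4-Clique instance to a single $st$-SubConn instance with $m=\Theta(n^2)$, so the data structure's pre-processing cost enters your total running time as $m^t=n^{2t}$ for an arbitrary constant $t$ (the theorem permits any $\poly(m)$ pre-processing). For $t>2$ there is no contradiction with the $n^{4-o(1)}$ bound, so your argument only rules out pre-processing up to roughly $m^{2-\eps}$. The paper handles this by reducing from an \emph{unbalanced} 4-partite instance (sizes $m^{2/3},m^{1/3},m^{1/3},m^{r}$, where $m^r$ is the assumed pre-processing time, using the equivalence in Fact~\ref{fact:clique_unbalanced_eq}), so the hardness threshold scales to $m^{r+4/3-o(1)}$ while pre-processing is paid once.

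Second, your central gadget is not actually constructed, and the sketch points in a direction that cannot work. With only $x_c,y_d$ dynamic and everything else static, you are asking a single connectivity query to verify the constraint cycle $a$--$c$, $a$--$d$, $b$--$c$, $b$--$d$ together with $a$--$b$; but an $s$--$t$ path ``forgets'' the pinned $a$ or $b$ every time it passes through a color vertex such as $x_c$ or $y_d$, so natural wirings admit spurious paths certifying, say, $ac,ab,bc,b'c,b'd$ for two different $b\neq b'$. Your proposed fix, forcing ``an interlocked path that revisits the same $p_{ab}$ from both sides,'' cannot be enforced by connectivity: connectivity is monotone, so if \emph{any} spurious $s$--$t$ path exists the query answers YES, and no gadget can require a walk to come back through a particular vertex. (Also, the specific factoring you give is off on edge counts: edges $\alpha_{ac}$--$p_{ab}$ number $\sum_a |N_C(a)|\,|N_B(a)|=\Theta(n^3)$ in the worst case, not $O(n^2)$.) The paper avoids all of this by delegating to the data structure only a \emph{star} of constraints centered at $a$ --- ``does $a\in A$ exist adjacent to $b$, $c$, and $d$?'' --- realized by a layered path $s\to b^{V_B}\to a^{U_B}\to a^{U_D}\to a^{U_C}\to c^{V_C}\to t$ in which three consecutive copies of $a$ force consistency; the adjacencies among $b,c,d$ are checked offline by the reduction itself, which activates only the appropriate $b^{V_B}$ and $c^{V_C}$ (and the $a^{U_D}$ with $(a,d)\in E$) via vertex updates in each phase. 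This costs more updates and queries per phase, but with the unbalanced part sizes the totals still come out to $O(m^{2/3+r})$ updates and $O(m^{1/3+r})$ queries, yielding the contradiction. If you want to salvage your route, you would need to redesign the schedule so that the query only has to verify a star-shaped (single-existential) predicate, which essentially lands you back at the paper's construction.
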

We leave it as an open problem to improve the $2/3$ exponent in the update time using fast matrix multiplication or determine it's impossible.

\paragraph*{Dynamic $2$-Pattern Document Retrieval.} 

In the $2$-Pattern Document Retrieval problem, one is given a list of strings $S_1, \ldots, S_D$ of total length $\sum_i |S_i| = n$, and needs to support the following query: given a pair of strings $(T_1, T_2)$, report/count all indices $i$ where $S_i$ contains both $T_1$ and $T_2$. The reporting variant was first considered by Muthukrishnan \cite{muthukrishnan2002efficient}, who gave a combinatorial data structure with $\tO(n^{1.5})$ pre-processing time and $O(|T_1|+|T_2|+\sqrt{n} + output)$ query time where $output$ is the output size, by using a previous algorithm due to Ferragina, Koudas, Muthukrishnan, and Srivastava \cite{ferragina2003two} for a related problem.
Larsen, Munro, Nielsen, and Thankachan \cite{larsen2015hardness} studied the counting variant of the $2$-Pattern Document Retrieval problem and noted that for the counting variant, the query time can be improved to $O(|T_1|+|T_2|+\sqrt{n})$.  They also provided a conditional lower bound for the counting variant of the  $2$-Pattern Document Retrieval problem, showing that any combinatorial algorithm  answering $O(n)$ queries requires $n^{1.5-o(1)}$ time under the combinatorial BMM hypothesis, even if the algorithm is only required to determine if the counts are zeros. 

Document Retrieval has a wide range of applications (see \cite{navarro2014spaces} for a survey) in many scenarios such as web search \cite{page1999pagerank}, bioinformatics \cite{bartsch2011genereporter}, software repositories \cite{linstead2007mining}, chemoinformatics~\cite{brown2005editorial} and symbolic music sequences \cite{typke2005survey} . For instance, in the important web search application, each document string $S_i$ can represent each website, and the patterns can represent  keywords sent to a search engine. It is also natural to formalize this application as a dynamic problem instead of a static one: websites are constantly down and up, and it makes sense for a search engine to have the option to only search for websites that are currently online. Thus, we propose the following natural dynamic variant of the $2$-Pattern Document Retrieval problem. 
\begin{restatable}[Dynamic $2$-Pattern Document Retrieval]{prob}{patterndef}
Given a list of strings $S_1, \ldots, S_D$ of total length $\sum_{i=1}^D |S_i| = n$, where each string is on or off, maintain a data structure that supports the following operations:
\begin{compactitem}
    \item Turn on or turn off a string;
    \item Given a pair of strings $(T_1, T_2)$, count the number of $i$ such that $S_i$ is on and contains both $T_1$ and $T_2$. 
\end{compactitem}
\end{restatable}

We show that this problem can be solved by a combinatorial data structure with $\tO(n^{2/3})$ time per update and $\tO(|T_1| + |T_2| + n^{2/3})$ time per query. Under the combinatorial $4$-Clique hypothesis, this data structure is in fact optimal among combinatorial ones. 

\begin{restatable}{theorem}{documentLowerBound}
\label{thm:2pattern}
Assuming the combinatorial $4$-Clique hypothesis, there is no combinatorial data structure that solves the Dynamic $2$-Pattern Document Retrieval problem in $\poly(n)$ pre-processing time, $O(n^{2/3-\eps})$ amortized query time and $O(n^{2/3-\eps})$ amortized update time for $\eps > 0$, even when all patterns have lengths $O(1)$ and the algorithm is only required to determine if the counts are zeros. 
\end{restatable}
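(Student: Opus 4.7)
The plan is to adapt the classical BMM-to-$2$-Pattern-Document-Retrieval reduction of Larsen, Munro, Nielsen, and Thankachan to the $4$-Clique setting. Let $G=(A\cup B\cup C\cup D,E)$ be a $4$-partite instance of $4$-Clique detection with $|A|=|B|=|C|=|D|=N$. I will use an alphabet $\Sigma=\{T_v : v\in A\cup B\}$ consisting of one distinct single-character symbol per vertex of $A\cup B$, so every pattern has length one, as required. For each ordered pair $(c,d)\in E\cap(C\times D)$, I pre-declare a document $S_{(c,d)}$ whose content is, in any fixed order, the concatenation of the single characters $T_v$ over all $v\in A\cup B$ with $v\in N(c)\cap N(d)$. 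All documents start in the on state. This gives at most $N^2$ documents of length $O(N)$ each, and hence total length $n=O(N^3)$.

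Next, for each edge $(a,b)\in E\cap(A\times B)$ I will issue the query $(T_a,T_b)$. By construction the returned count equals $|\{(c,d)\in E\cap(C\times D) : \{a,b\}\subseteq N(c)\cap N(d)\}|$, which is positive if and only if there exist $(c,d)$ such that $(a,b,c,d)$ is a $4$-clique of $G$. Thus a single positive count over these $\Theta(N^2)$ queries certifies a $4$-clique, matching the claim that it suffices to determine zero-vs-nonzero counts. To exercise the amortized update bound symmetrically, I will interleave $\Theta(N^2)$ dummy updates among the queries (for example, repeatedly toggling a sentinel document that participates in no query), so that the total number of operations is $\Theta(N^2)$ of each type.

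Assuming the hypothesized combinatorial data structure, the entire procedure runs in $\poly(n)+\Theta(N^2)\cdot n^{2/3-\eps}=\poly(n)+O(N^{4-3\eps})$ time. With $n=\Theta(N^3)$, the operation cost $O(N^{4-3\eps})$ already contradicts the combinatorial $4$-Clique hypothesis with $\eps''=3\eps$, so the only remaining issue is to suppress the $\poly(n)$ term for an \emph{arbitrary} polynomial pre-processing bound. I will handle this by the same block-partitioning template used in the proof of \Cref{thm:mode} for Dynamic Range Mode: partition $C$ and $D$ (or $A$ and $B$) into blocks of an appropriately chosen size $t$ depending on the pre-processing exponent and run the reduction on each sub-instance, so that the effective document-retrieval size becomes a smaller power of $N$ and the polynomial pre-processing is evaluated at a strictly smaller argument. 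The parameter $t$ is balanced so that the aggregate pre-processing cost and the aggregate operation cost are both $O(N^{4-\Omega(\eps)})$.

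The main obstacle is precisely this balancing step: a naive single-instance reduction only contradicts the hypothesis for pre-processing polynomials of small degree, so making the argument work for all polynomial pre-processing requires the block-partition trick to amortize the pre-processing over many structurally identical sub-instances. This is also where the tight $2/3$ exponent originates, mirroring the matching upper bound discussed in \Cref{prop:2pattern_upper}.
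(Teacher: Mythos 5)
There is a genuine gap, and it is exactly the step you flag as the ``main obstacle.'' Your reduction makes no essential use of the dynamic operations: all documents stay on and the updates are dummy toggles, so what you really have is a reduction to \emph{static} batch $2$-pattern queries. Against a data structure allowed \emph{arbitrary} polynomial pre-processing this proves nothing: since your patterns are single characters from an alphabet of size $2N=O(n^{1/3})$, the data structure could simply precompute the answer to every one of the $O(n^{2/3})$ pattern pairs during its $\poly(n)$ pre-processing and answer each of your queries in $O(1)$ time, so no contradiction with the $4$-Clique hypothesis arises. The block-partitioning patch you propose does not rescue this. If you split the $(c,d)$-indexed documents into $g$ sub-instances of total length $N^3/g$ each, every query $(T_a,T_b)$ must be issued against every sub-instance, so the total query cost becomes $N^2\cdot g\cdot(N^3/g)^{2/3-\eps}=N^{4-3\eps}g^{1/3+\eps}$, forcing $g\le N^{O(\eps)}$; but to push the aggregate pre-processing $g\cdot(N^3/g)^t$ below $N^4$ you need $g\ge N^{(3t-4)/(t-1)}$, which exceeds $N^{2}$ already for moderate $t$ and tends to $N^{3}$ as $t$ grows. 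The two constraints are incompatible, so no choice of block size balances them. (You also misremember the mechanism in the Dynamic Range Mode proof: it is not a block partition.)

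The missing idea is to let the updates themselves encode one of the clique vertices, so that a single pre-processing is reused across polynomially many phases, and to use the \emph{unbalanced} form of the hypothesis (\cref{fact:clique_unbalanced_eq}) with one part sized to the pre-processing exponent. Concretely, the paper's proof takes parts $A,B,C,D$ of sizes $n^{1/3},n^{1/3},n^{t},n^{2/3}$ (where $O(n^t)$ is the assumed pre-processing time), creates one document $S_d$ per vertex $d\in D$ whose content lists the neighbors of $d$ in $A\cup B$ (total length $O(n)$), and then runs one phase per $c\in C$: turn on exactly the documents $S_d$ with $d\in \mathcal{N}(c)$ using $O(n^{2/3})$ updates, and query every pair $(a,b)\in A\times B$ to test whether $a,b,c$ have a common neighbor in $D$. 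This yields $O(n^{t+2/3})$ operations of total cost $O(n^{t+4/3-\eps})$ after a single $O(n^t)$ pre-processing, contradicting the unbalanced $4$-Clique lower bound of $n^{t+4/3-o(1)}$. Your document-per-$(c,d)$-edge encoding cannot be reorganized this way without such a phase structure, so as written the proposal does not establish the theorem.
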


\paragraph*{Dynamic 2D Orthogonal Range Color Counting.}

In the Orthogonal Range Color Counting problem, we are given a set of $n$ points in $\mathbb{R}^d$, each associated with a color. Each query is given as an axis-aligned box, asking the number of distinct colors of points in the box. 
The Orthogonal Range Color Counting problem and its reporting variants have been extensively studied, e.g., \cite{KaplanRSV08, color1, color2, color3, color4, color5, color6, color7, color8}. In this paper, we focus on the $2$-dimensional case. 

There are data structures with $\tO(n^2)$ pre-processing time and $\tO(1)$ query time for static 2D Orthogonal Range Color Counting \cite{color3, KaplanRSV08, munro2015range, color6}. More generally, Kaplan, Rubin, Sharir, and Verbin \cite{KaplanRSV08}  gave a data structure for static $2$D Orthogonal Range Color Counting with  a trade-off between pre-processing time and query time (see \cref{sec:color} for more details about this trade-off). 

Kaplan, Rubin, Sharir, and Verbin \cite{KaplanRSV08} also showed that, assuming the combinatorial BMM hypothesis, no algorithm can answer $n$ static 2D Orthogonal Range Color Counting queries in $O(n^{1.5-\eps})$ time for $\eps > 0$. This is tight for combinatorial algorithms as we can use their trade-off to obtain a combinatorial algorithm that solves $n$ static 2D Orthogonal Range Color Counting queries in $\tO(n^{1.5})$ time. 

As we will show in \cref{sec:color}, their trade-off also implies a combinatorial data structure with $\tO(n^{2/3})$ update and query time for the dynamic version of $2$D Orthogonal Range Color Counting where it is allowed to insert or delete points. 
Based on ideas from the reduction in \cite{KaplanRSV08}, we show that this combinatorial data structure is essentially optimal under the combinatorial $4$-Clique hypothesis. 

\begin{restatable}{theorem}{colorLowerBound}
\label{thm:color}
Assuming the combinatorial $4$-Clique hypothesis, there is no combinatorial data structure that solves Dynamic 2D Orthogonal Range Color Counting in $\poly(n)$ pre-processing time, $O(n^{2/3-\eps})$ amortized query time and $O(n^{2/3-\eps})$ amortized update time for $\eps > 0$.  
\end{restatable}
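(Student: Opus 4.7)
The plan is to reduce combinatorial $4$-Clique Detection to Dynamic $2$D Orthogonal Range Color Counting, extending Kaplan-Rubin-Sharir-Verbin's (KRSV) reduction from combinatorial BMM to the static variant. Given a $4$-partite graph with vertex parts $V_1, V_2, V_3, V_4$ of size $N$ each, a $4$-clique $\{v_1, v_2, v_3, v_4\}$ (with $v_i \in V_i$) exists iff there is an edge $(v_3, v_4) \in E(V_3, V_4)$ together with an edge $(v_1, v_2) \in E(V_1, V_2)$ satisfying $v_1, v_2 \in N(v_3) \cap N(v_4)$. Our job is to detect any such configuration efficiently using a hypothetical dynamic 2D range color counting data structure.

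First I would recall (and lightly reuse) KRSV's static construction: encode $E(V_1, V_2)$ as $\tilde O(N^2)$ colored points laid out in the plane so that, for any subsets $S_1 \subseteq V_1$ and $S_2 \subseteq V_2$, the question ``does $E(V_1, V_2) \cap (S_1 \times S_2) \ne \emptyset$?'' can be decided by a small number of axis-aligned rectangle color counting queries whose combined footprint equals $S_1 \times S_2$; nonzero distinct color counts witness the existence of a qualifying edge. To extend this to $4$-clique, I would iterate $v_4 \in V_4$ via update operations and, for each $v_4$, iterate $v_3 \in V_3$ with $(v_3, v_4) \in E$ via query operations. The update for $v_4$ installs an encoding of $N(v_4)$'s restriction into the structure; the subsequent query for $v_3$ performs KRSV-style rectangle color counting on the sets $S_i = N(v_3) \cap N(v_4) \cap V_i$ for $i \in \{1,2\}$, returning a nonzero count iff a $4$-clique involving $v_3, v_4$ is present.

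The hard part is making the per-$v_4$ update cheap. A naive implementation that physically toggles every point of $E(V_1, V_2)$ incompatible with $v_4$ requires $\Theta(N^2)$ point updates per $v_4$, driving the total update cost past the $N^4$ budget needed to contradict $4$-Clique. I will resolve this with a more parsimonious encoding---for instance, augmenting the point set with per-$V_4$ auxiliary ``row'' and ``column'' markers so that $v_4$'s restriction is captured by $\tilde O(N)$ point changes rather than $\Theta(N^2)$, and reformulating the effective query rectangle as a product of such markers with the $v_3$-based selector. In parallel, I will use a Henzinger-Krinninger-Nanongkai-Saranurak-style amortization (running the reduction in $\poly(N)$-many batches with fresh preprocessing on sub-instances whose size is chosen so the aggregate operation cost dominates each $\poly(n)$ preprocessing charge), which is what lets us rule out arbitrary $\poly(n)$ preprocessing in the data structure rather than a specific polynomial.

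Putting the pieces together, with $n = \tilde\Theta(N^2)$ the reduction performs $\tilde O(N)$ updates and $\tilde O(N^2)$ queries, each of amortized cost $O(n^{2/3-\eps}) = O(N^{4/3 - 2\eps/3})$, for a total running time $\tilde O(N^{10/3 - 2\eps/3}) = O(N^{4-\delta})$ with $\delta = \Omega(\eps) > 0$. This contradicts the combinatorial $4$-Clique hypothesis and establishes Theorem~\ref{thm:color}.
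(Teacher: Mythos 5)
Your high-level plan (extend the KRSV reduction so that dynamic operations simulate a fourth clique vertex) is the right instinct, but the proposal has a genuine gap exactly at the step you yourself flag as ``the hard part.'' You propose to encode $E(V_1,V_2)$ as the colored point set and to detect, for each pair $(v_3,v_4)$, whether $E(V_1,V_2)$ meets $S_1\times S_2$ with $S_i=\mathcal{N}(v_3)\cap\mathcal{N}(v_4)\cap V_i$, using $O(1)$ rectangle queries after only $\tO(N)$ point changes per $v_4$. No construction is given for this, and the sketched ``row/column markers'' do not resolve it: the sets $S_1,S_2$ are intersections of two neighborhoods that vary with both $v_3$ and $v_4$, so they are not intervals in any fixed ordering of $V_1,V_2$, and a constant number of axis-aligned rectangles cannot select $S_1\times S_2$. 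The KRSV machinery (neighbors-first permutation blocks plus distinct-color counting) handles \emph{unions} of neighborhood conditions on a single ``inner'' index and recovers intersections by inclusion--exclusion; it does not let you detect an edge inside an arbitrary combinatorial rectangle, which is an inherently two-dimensional restriction. In effect, the subproblem you defer (answering $u^{\top}Mv$-type questions with $O(1)$ cheap queries after few updates) is of the same hardness type the reduction is meant to establish, so asserting its solvability is where the proof breaks. The paper sidesteps this by a different role assignment: the \emph{colors} are the vertices of the fourth part $D$, the static points encode the $A$--$D$ and $B$--$D$ adjacencies on two opposite diagonal rays (so each query rectangle captures exactly the neighbor lists of one $a$ and one $b$), each phase for $c\in C$ inserts only $|\mathcal{N}_D(c)|$ points at the origin, and the number of common neighbors of $a,b,c$ in $D$ is extracted by inclusion--exclusion from four color-counting queries, with the triangle condition on $(a,b,c)$ checked outside the data structure.

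A secondary gap is your treatment of arbitrary polynomial pre-processing. The ``HKNS-style amortization in batches with fresh pre-processing on sub-instances'' is not spelled out, and re-preprocessing smaller sub-instances forces every query to be answered across all sub-instances, which changes the operation counts you rely on. The paper's clean mechanism is different: it reduces from an \emph{unbalanced} $4$-Clique instance in which the part driving the phases has size $n^{t}$ (where $O(n^{t})$ is the assumed pre-processing time), so that the $\Theta(n^{t+2/3})$ updates and queries dominate the single pre-processing charge and the total $O(n^{t+4/3-\eps})$ contradicts Hypothesis~\ref{hypo:unbalkclique}. (Your operation count also undercounts: with $\tO(N)$ point changes per $v_4$ and $N$ choices of $v_4$ you get $\tO(N^2)$ updates, not $\tO(N)$, though this alone would not break your arithmetic.)
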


\paragraph{High pre-processing time bound for other dynamic graph problems. } 

One main weakness of the combinatorial BMM hypothesis is that it traditionally does not imply update and query lower bounds for data structures that can use arbitrary polynomial pre-processing time. Using reductions from \cite{HenzingerKNS15}, we could obtain lower bounds for data structures with arbitrary polynomial pre-processing time under the OMv hypothesis (and thus the same combinatorial lower bounds under the combinatorial BMM hypothesis), but the lower bounds for update and query times might get lower. 

For instance, under the combinatorial BMM hypothesis, Abboud and Vassilevska Williams \cite{AbboudW14} showed that no combinatorial algorithm can achieve $O(n^{3-\eps})$  pre-processing time and $O(n^{2-\eps})$ update and query times for the Dynamic $st$-Reachability problem ($st$-Reach), in which one needs to maintain a directed graph undergoing edge insertions and deletions, and needs to answer whether a fixed node $s$ can reach a fixed node $t$. Their lower bound for update and query times are very high, showing that any combinatorial algorithm essentially needs to run breadth-first search from scratch for each update or query. On the other hand, their lower bound for the pre-processing time is less desirable: their  bound does not rule out combinatorial algorithms with, say, $O(n^3)$ pre-processing time and $\tO(1)$  update and query times. 

\cite{HenzingerKNS15} improved the lower bound for pre-processing, by showing that under the OMv hypothesis, there is no algorithm for $st$-Reach that achieves $\poly(n)$ time pre-processing, $O(n^{1-\eps})$ time update and $O(n^{2-\eps})$ time query for $\eps > 0$.
Note that because $st$-Reach has fast algebraic algorithms that run in $\poly(n)$ pre-processing and $O(n^{1.407})$ time per update and query \cite{van2019dynamic}, the drop of the update time bound is inevitable for conditional lower bounds of general algorithms. 

We resolve the gap of pre-processing time for $st$-Reach, in the world of combinatorial algorithms. We show that, under the combinatorial $4$-Clique hypothesis, no combinatorial algorithm can achieve $\poly(n)$  pre-processing time and $O(n^{2-\eps})$ update and query times for $st$-Reach. We also show similar results for other dynamic graph problems such as Dynamic Strong Connectivity. See Section~\ref{sec:prepo} for the definitions of these problems.

\begin{restatable}{theorem}{polypre}
\label{thm:polypre}
Assuming the combinatorial $4$-Clique hypothesis, there is no combinatorial data structure that solves $st$-Reach, Dynamic Strong Connectivity, or Dynamic Bipartite Perfect Matching, in $\poly(n)$ pre-processing time, $O(n^{2-\eps})$ amortized query time and $O(n^{2-\eps})$ amortized update time for $\eps > 0$.  
\end{restatable}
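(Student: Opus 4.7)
My plan is to reduce combinatorial $4$-Clique Detection on $n$-vertex graphs to Dynamic $st$-Reach (and then to the other two problems via standard gadgets), following the template of Abboud and Vassilevska Williams~\cite{AbboudW14} but with $4$-Clique in place of BMM. The extra ``dimension'' of $4$-Clique gives the reduction room for $\Theta(n^2)$ dynamic operations rather than the $\Theta(n)$ dynamic operations used in the BMM-based reduction, which is exactly what aligns the $n^{2-\eps}$ update/query bound with a $4$-Clique-breaking total running time of $n^{4-\eps}$ while absorbing arbitrary polynomial pre-processing.

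Concretely, given a $4$-partite instance $G$ on $V_1\cup V_2\cup V_3\cup V_4$ with $|V_i|=n$, I would construct a directed graph $H$ on $O(n)$ vertices containing $s,t$, selector vertices $a_{v_1}, a'_{v_1}$ for each $v_1\in V_1$ and $b_{v_2}, b'_{v_2}$ for each $v_2\in V_2$, and single copies $x_{v_3}, y_{v_4}$ of each $v_3\in V_3$ and $v_4\in V_4$. All cross-adjacency relations of $G$ involving $V_3$ or $V_4$ become static directed edges in $H$ (for instance $a_{v_1}\to x_{v_3}$ iff $v_1\sim v_3$; $x_{v_3}\to y_{v_4}$ iff $v_3\sim v_4$; $y_{v_4}\to a'_{v_1}$ iff $v_4\sim v_1$; and analogous edges for $V_2$). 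The only dynamic edges are $s\to a_{v_1^*}$ and $b'_{v_2^*}\to t$, flipped to select the currently queried pair $(v_1^*,v_2^*)$. I then iterate over the at most $n^2$ edges $(v_1^*,v_2^*)\in E(V_1,V_2)$, toggle the two dynamic edges ($O(1)$ updates per iteration), and issue one $s$-$t$ reachability query per pair, for a total of $O(n^2)$ dynamic operations; at amortized cost $n^{2-\eps}$ per operation this gives $O(n^{4-\eps})$ post-pre-processing time. By invoking the assumed data structure on an appropriately scaled input size so that the $\poly$ pre-processing contribution fits within the same $n^{4-\delta}$ budget, the combined running time will contradict the combinatorial $4$-Clique hypothesis.

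The main obstacle I anticipate is the \emph{gadget design}: because $st$-reachability admits arbitrary walks, a naive layered construction would only certify a weak path condition, since a walk could use inconsistent witnesses $v_3\neq v_3'$ or $v_4\neq v_4'$ in different hops. To handle this, my plan is to force any $s$-$t$ walk for the active pair $(v_1^*,v_2^*)$ to revisit each candidate vertex $x_{v_3}$ (and similarly $y_{v_4}$) with incoming edges from both $a_{v_1^*}$ and $b_{v_2^*}$ and an outgoing edge to $y_{v_4}$, so that the conditions $v_1^*\sim v_3$, $v_2^*\sim v_3$, and $v_3\sim v_4$ are all certified simultaneously on one shared vertex; the remaining edge $v_1^*\sim v_2^*$ is guaranteed by restricting the outer iteration to $E(V_1,V_2)$. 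Finally, the Dynamic Strong Connectivity lower bound will follow by adding a return arc $t\to s$ to $H$ (so that strong connectivity tracks $s$-$t$ reachability up to easily handled side conditions), and the Dynamic Bipartite Perfect Matching lower bound will follow by composing the reduction with the classical $st$-reach-to-matching staircase gadget from~\cite{AbboudW14}, both of which preserve the $O(n^2)$ dynamic operation count.
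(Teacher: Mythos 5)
Your proposal has a genuine gap at exactly the point the theorem is designed to address: absorbing \emph{arbitrary} polynomial pre-processing. In your reduction the dynamic graph $H$ has $O(n)$ vertices and encodes all four parts (the $V_3$- and $V_4$-adjacencies statically, and $V_1,V_2$ via the two toggled edges), and you perform only $O(n^2)$ dynamic operations. The total ``useful work'' you can charge against the hypothesis is therefore at most $O(n^2\cdot n^{2-\eps})=O(n^{4-\eps})$, while the data structure may legitimately spend $n^{r}$ pre-processing time for an arbitrarily large constant $r$; since the hardness of the clique instance you encode is only $n^{4-o(1)}$, there is no contradiction once $r>4$. The phrase ``invoking the assumed data structure on an appropriately scaled input size'' cannot rescue this: any rescaling of part sizes still forces all four parts to live inside the $N$-vertex dynamic graph, so the pre-processing $N^{r}$ and the clique hardness (at most roughly $N^4$) cannot be separated. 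The paper's proof avoids this by reducing from an \emph{unbalanced} instance $A,B,C,D$ with $|A|=|B|=|C|=n$ and $|D|=n^{r}$ (via the unbalanced form of the hypothesis, Fact~\ref{fact:clique_unbalanced_eq}), and crucially never placing $D$ in the static graph: each $d\in D$ is ``streamed'' in its own phase by $O(n)$ updates that re-program matching edges $b\in B_1\to b\in B_2$ and $c\in C_1\to c\in C_2$ according to adjacency with $d$, followed by $O(n)$ queries (one per neighbor $a$ of $d$, attaching $s\to a\in A_1$ and $a\in A_2\to t$). This yields $O(n^{r+1})$ operations on an $O(n)$-vertex graph, so the operation cost $n^{r+3-\eps}$ dominates the $n^{r}$ pre-processing and contradicts the $n^{r+3-o(1)}$ bound. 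The number of dynamic operations, not the static graph size, must scale with the pre-processing exponent; your design inverts this.

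Your consistency gadget is also not sound as described. In plain $s$--$t$ reachability you cannot ``force a walk to revisit $x_{v_3}$ with incoming edges from both $a_{v_1^*}$ and $b_{v_2^*}$'': a walk enters a vertex through a single in-edge, so having edges into $x_{v_3}$ from both selectors certifies only a disjunction, not the conjunction $v_1^*\sim v_3\wedge v_2^*\sim v_3$ (and similarly for $v_4$'s three required adjacencies). With all $V_3/V_4$ adjacencies static and only $O(1)$ edges toggled per query, the natural static encodings of such conjunctions require product vertices, blowing up the graph. The paper enforces consistency differently: each pairwise adjacency of the 4-clique is checked on a distinct consecutive pair of layers of the path $s\to a\to b\to b\to c\to c\to a\to t$, identity ``matching'' edges between the two copies of $B$ (and of $C$) are what get dynamically re-programmed per $d$ to check $b\sim d$ and $c\sim d$, and consistency of $a$ at both ends is guaranteed because only one $a$ is attached to $s$ and $t$ at any time. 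So both the architecture (per-phase $\Theta(n)$ re-programming of a streamed large part) and the layered consistency mechanism are missing from your sketch; the reductions to Dynamic Strong Connectivity and Dynamic Bipartite Perfect Matching via the gadgets of~\cite{AbboudW14}, which you invoke at the end, are fine once the $st$-Reach reduction is repaired.
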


\subsubsection{Lower Bounds based on OuMv\texorpdfstring{$_k$}{k}}

\paragraph*{Skyline Points Counting.}

Given a set $P$ of points in $\mathbb{R}^d$, a point $p \in P$ is called a \emph{skyline point} if there does not exist another point $q \in P$ such that $p_i \le q_i$ for every $i \in [d]$ (a.k.a. $q$ dominates $p$). 
In the Dynamic Skyline Points Counting problem, we need to maintain a set $P \subseteq \mathbb{R}^d$ of at most $n$ points that undergoes insertions and deletions, and query the number of skyline points in $P$.  

The Skyline Counting problem (and its variants) has been studied in various settings, e.g., \cite{KalavagattuDKS11,RahulJ12,BrodalL14,Chan20a}. For $d \le 2$, Dynamic Skyline Points Counting can be solved in amortized $\polylog(n)$ time per update \cite{OvermarsL81}. For $d = 3$, 
Chan \cite{Chan20a} designed a data structure for Skyline Points Counting in $\R^3$ with $\tilde O(n)$ pre-processing and $\tilde O(n^{2/3})$ amortized insertion and deletion time. No nontrivial upper bound is known for the fully dynamic Skyline Counting problem when $d > 3$. 

Better upper bounds are possible in the easier \emph{semi-online model},  where during the insertion of a point we are told when the point is to be deleted.  By  adapting the techniques of Chan \cite{dyn:Chan03}, it is possible to solve Skyline Points Counting in $\R^{2k-1}$ with $\tilde O(n^{1-1/k})$ time per semi-online update for any $k \ge 2$. We show that this upper bound is tight (for odd dimension) under the OuMv$_{k}$ conjecture. 

\begin{restatable}{theorem}{skyline}
\label{thm:skyline}
Let $k \ge 2$ be a positive integer. 
Assuming the OuMv$_k$ hypothesis, there is no data structure for Dynamic Skyline Points Counting in $\mathbb{R}^{2k-1}$ with $\poly(n)$ pre-processing time, $O(n^{1-1/k -\epsilon})$ amortized update and query time for $\epsilon > 0$, even in the semi-online model. 
\end{restatable}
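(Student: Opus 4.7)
The plan is to reduce OuMv$_k$ (in the form of Hypothesis~\ref{hypo:oumvk2}) to Dynamic Skyline Points Counting in $\R^{2k-1}$. Given an OuMv$_k$ instance with $M \subseteq [n]^k$, fix a large constant $L > n$ and embed each $m = (m_1, \ldots, m_k) \in M$ as the point
\[
  p_m \;=\; (m_1, -m_1,\, m_2, -m_2,\, \ldots,\, m_{k-1}, -m_{k-1},\, m_k) \;\in\; \R^{2k-1},
\]
feeding all $|M| \le n^k$ points to the skyline data structure during its $\poly(n)$ pre-processing. The first $2(k{-}1)$ coordinates are arranged in pairs of the form $(x, -x)$, so that equality in one input coordinate can be enforced by a matching $(u,-u)$ pair; the remaining single coordinate will be used only for a one-dimensional threshold test against $m_k$.

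To handle a query $(U^{(1)}, \ldots, U^{(k)})$, I will first switch to complements $\bar U^{(i)} = [n] \setminus U^{(i)}$ for $i < k$, since the filters to be inserted will naturally produce a disjunction under skyline dominance rather than the conjunction that the query demands. For each $i < k$ and each $u \in \bar U^{(i)}$, insert an ``equality filter'' $f_{i,u}$ that places $(u,-u)$ in the coordinate pair for index $i$ and $L$ in every other coordinate; by construction $f_{i,u}$ dominates $p_m$ iff $m_i = u$, so the collection of all such filters dominates $p_m$ iff $\exists i < k: m_i \in \bar U^{(i)}$. For each $v \in V := U^{(k)}$, also insert a ``range filter'' $g_v$ with $L$ in the first $2k-2$ coordinates and $v$ in the last, which dominates $p_m$ iff $m_k \le v$. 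A direct check shows no two filters dominate each other and no $p_m$ dominates a filter, so writing $\tilde Y_v$ for the number of $p_m$ still in the skyline in this configuration, $\tilde Y_v$ is recovered from the DS's reported count by subtracting a known additive constant. A short De-Morgan/inclusion-exclusion calculation then yields
\[
  \tilde Y_{v-1} - \tilde Y_v \;=\; \bigl|\{m \in M : m_k = v \text{ and } m_i \in U^{(i)} \ \forall i<k\}\bigr|,
\]
so summing over $v \in V$ isolates the exact count whose positivity is the OuMv$_k$ answer.

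Each query thus triggers $O(n)$ skyline operations: $\sum_{i<k} |\bar U^{(i)}| = O(n)$ equality-filter insertions (and later deletions), plus an insert/query/delete of a range filter at each $v \in V \cup (V-1)$. Every insertion is declared together with its (almost-immediate) deletion time, so the reduction is semi-online. Since the DS never holds more than $|M| + O(n) = O(n^k)$ points, the assumed $O(N^{1-1/k-\eps})$ amortized per-op bound becomes $O\!\left(n^{k-1-k\eps}\right)$ per operation and $O\!\left(n^{k - k\eps}\right)$ per OuMv$_k$ query; over $n^\gamma$ queries the total is $O(n^{\gamma + k - k\eps}) + \poly(n)$, contradicting Hypothesis~\ref{hypo:oumvk2}.

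The main obstacle will be squaring the conjunction/disjunction mismatch inside only $2k-1$ coordinates: the OuMv$_k$ query is a $k$-fold $\bigwedge$, but simultaneously inserted filters naturally give a $\bigvee$ under dominance, and the coordinate budget forbids baking the whole conjunction into a single combined filter (which would cost $\prod_i |U^{(i)}|$ points). The resolution is the pair of tricks above: complementing $U^{(i)}$ for $i < k$ uses De Morgan to turn the disjunction of $k-1$ equality tests into ``not the conjunction,'' while telescoping $\tilde Y_{v-1} - \tilde Y_v$ peels off the $m_k$ layer using only one coordinate rather than the two an equality encoding would require --- exactly matching the $2k-1$ budget.
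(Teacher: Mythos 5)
Your reduction follows the same high-level plan as the paper's proof (complemented equality filters on the first $k-1$ coordinate pairs, a sweep of threshold points in the last coordinate, and telescoping differences of skyline counts), but there is a genuine gap in the embedding: you verify that no filter dominates a filter and no $p_m$ dominates a filter, yet you never check domination \emph{among the initial points themselves}, and it does occur. If $m,m'\in M$ agree in their first $k-1$ coordinates but $m_k<m'_k$, then $p_{m'}$ dominates $p_m$, so $p_m$ can never be a skyline point no matter which filters are present. Consequently, writing $M_c:=\{j:(c,j)\in M\}$ for a prefix $c\in[n]^{k-1}$, your count $\tilde Y_v$ equals the number of prefixes $c\in U^{(1)}\times\cdots\times U^{(k-1)}$ with $\max M_c>v$, and the telescoping identity actually gives $\tilde Y_{v-1}-\tilde Y_v=\bigl|\{c\in U^{(1)}\times\cdots\times U^{(k-1)}:\max M_c=v\}\bigr|$, not the claimed $\bigl|\{m\in M:m_k=v,\ m_i\in U^{(i)}\ \forall i<k\}\bigr|$. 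Summing over $v\in U^{(k)}$ therefore counts prefixes whose \emph{maximum} last coordinate lies in $U^{(k)}$, which can vanish on YES instances: for $k=2$, $M=\{(1,1),(1,2)\}$, $U^{(1)}=U^{(2)}=\{1\}$, your reduction reports an empty intersection although $(1,1)\in M\cap(U^{(1)}\times U^{(2)})$.

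This is precisely the issue the paper's construction averts with the perturbation in the first coordinate: the tuple $(a_1,\dots,a_k)$ is mapped to $(a_1-\delta a_k,\,N-a_1,\,a_2,\,N-a_2,\dots,a_{k-1},\,N-a_{k-1},\,a_k)$ with $\delta=o(1/N)$, so two points sharing the first $k-1$ coordinates but differing in $a_k$ have oppositely ordered first coordinates and are incomparable; the paper proves (its first claim in the proof) that no two initial points dominate each other, which is what makes the telescoping difference equal the true slice count. Adding the analogous term $-\delta m_k$ to the first coordinate of your $p_m$ repairs the argument: the equality filters still test $m_i=u$ (since $\delta\ll 1/n$), the threshold filter still tests $m_k\le v$, and sibling domination is eliminated. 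The remaining ingredients of your write-up are sound and differ from the paper only cosmetically: you keep the OuMv$_k$ instance at dimension $n$ and let the data structure hold up to $n^k$ points, invoking Hypothesis~\ref{hypo:oumvk2} directly, whereas the paper rescales to dimension $N=\Theta(n^{1/k})$ so that only $O(n)$ points are ever stored; the semi-online observation and the $O(n)$ operations per query are the same in both.
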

We leave it as an open problem to determine the correct exponent for even dimensions $d\ge 4$.

Independent to our work, Dallant and Iacono \cite{dallant2021conditional} recently showed an $n^{1/2 - o(1)}$ update and query lower bound for Dynamic Skyline Points Counting in $\mathbb{R}^3$ based on the OMv hypothesis, which is the special case of Theorem~\ref{thm:skyline} for $k = 2$. They also showed an $n^{1/3-o(1)}$ lower bound for the same problem under the $3$SUM hypothesis.

\paragraph*{Klee's measure.}
The Klee's measure problem \cite{klee1977can} is an important problem in computational geometry. In the original Klee's measure problem, one is given $n$ axis-aligned boxes in $\mathbb{R}^d$, and needs to determine the volume of their union. For $d \le 2$, this problem can be solved in $O(n \log n)$ time \cite{preparata2012computational}. In higher dimensions, the best algorithms run in $\tO(n^{d/2})$ time \cite{overmars1991new, Chan10, Chan13}. It is known that no combinatorial algorithm can improve this bound significantly, under the combinatorial $k$-Clique hypothesis~\cite{Chan10}. 
As an important special case, when the input boxes are guaranteed to be unit hypercubes, the upper bound can be improved to $n^{d/3+O(1)}$ time~\cite{bringmann2010klee, Chan13}.

In this paper, we consider the Dynamic Klee's measure problem for unit hypercubes: maintain a data structure for a set of at most $n$ axis-aligned unit hypercubes in $\mathbb{R}^d$ that supports inserting a unit hypercube, deleting a unit hypercube, and querying the volume of the union of the unit hypercubes. 

In the semi-online model, where during the insertion of a hypercube we are told when the hypercube is to be deleted, Dynamic Klee's measure for unit cubes can be solved in $\tilde O(\sqrt{n})$ update time in $\R^3$ \cite[Theorem 6.1]{dyn:Chan03}. We show that the $\tilde O(\sqrt{n})$ upper bound in the semi-online model is tight under the OuMv$_k$ hypothesis.
(We remark that our lower bound is not tight for dimensions higher than 3.)

\begin{restatable}{theorem}{klee}
\label{thm:klee}
Let $k \ge 2$ be a positive integer. 
Assuming the OuMv$_k$ hypothesis, there is no data structure for Dynamic Klee's measure for unit hypercubes in $\mathbb{R}^{2k-1}$ with $\poly(n)$ pre-processing time, $O(n^{1-1/k -\epsilon})$ amortized update and query time for $\epsilon > 0$, even in the semi-online model. 
\end{restatable}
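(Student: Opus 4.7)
The plan is to reduce the OuMv$_k$ problem to Dynamic Klee's measure for unit hypercubes in $\R^{2k-1}$ in the semi-online setting, paralleling the reduction for Dynamic Skyline Points Counting in Theorem~\ref{thm:skyline}. Suppose for contradiction a data structure achieves $\poly(n)$ preprocessing and amortized $O(N^{1-1/k-\epsilon})$ update and query time (with $N$ the current number of hypercubes) even in the semi-online model. We will use it to solve OuMv$_k$ on $n$ queries in $O(n^{k+1-k\epsilon})$ total time, violating Hypothesis~\ref{hypo:oumvk1}.

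Given an OuMv$_k$ instance $M\subseteq[n]^k$, the preprocessing places a family of $N=\poly(n)$ unit hypercubes in $\R^{2k-1}$ that encodes $M$. We group the coordinates of $\R^{2k-1}$ into $k-1$ pairs and a singleton: each pair encodes an index $i_j$ (for $j=1,\dots,k-1$) in an \emph{equality-sensitive} way via coordinates $(N_0 i_j,-N_0 i_j)$ with a large integer $N_0$, while the singleton coordinate encodes $i_k$ linearly. For each $(i_1,\dots,i_k)\in M$ we place one or a small cluster of unit hypercubes at the prescribed position, possibly stacking multiple hypercubes along the singleton direction. With the spacing $N_0$ chosen so that the cubes for distinct $M$-elements are pairwise disjoint, the initial union volume is a known function of $|M|$. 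We also precompute, in $O(n^k)$ time, some aggregate statistics of $M$ (for instance, the one-dimensional marginals of $M$ along each coordinate and their pairwise analogues).

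For each online query $(U^{(1)},\dots,U^{(k)})$, we perform $O(n)$ unit-hypercube insertions followed by a single volume query and then $O(n)$ deletions; the deletion times are declared at insertion, as allowed in the semi-online model. The $O(n)$ insertions consist of $O(1)$ auxiliary hypercubes per element $u\in U^{(j)}$ for each $j$, positioned so that, by an inclusion--exclusion argument across the $k$ sets $U^{(1)},\dots,U^{(k)}$, the union volume after insertion equals $f(M,U^{(1)},\dots,U^{(k)}) + c\cdot |M\cap (U^{(1)}\times\cdots\times U^{(k)})|$ for a fixed nonzero constant $c$ depending only on the geometry, where $f$ is computable in $O(n)$ time from the precomputed aggregates. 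A nonzero correction term thus indicates the OuMv$_k$ answer is \emph{yes}.

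With $N=O(n^k)$ throughout, each update costs $O(n^{k-1-k\epsilon})$ amortized, and each OuMv$_k$ query contributes $O(n)$ updates plus $O(n)$ aggregate-lookup work, for $O(n^{k-k\epsilon})$ amortized per OuMv$_k$ query. Summing, $n$ queries use $O(n^{k+1-k\epsilon})$ time beyond the polynomial preprocessing, contradicting Hypothesis~\ref{hypo:oumvk1} for any $\epsilon'<k\epsilon$. The main technical obstacle is the geometric design: a single unit hypercube at an integer position in $\R^{2k-1}$ overlaps only $O(1)$ other unit hypercubes, so individual auxiliary hypercubes cannot ``probe'' the up to $n^k$ possible witness tuples on their own. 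The inclusion--exclusion structure is what makes the $O(n)$ hypercube budget sufficient: the aggregate volume change encodes the desired indicator once we subtract the quantity $f$ computed from precomputed marginals. Engineering the positions so as to produce exactly this clean inclusion--exclusion identity, while keeping the $M$-cubes pairwise disjoint and the per-query overlap patterns with the auxiliary cubes under control, is the principal constructive challenge.
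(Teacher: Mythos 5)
Your overall framing (reduce from OuMv$_k$, $O(n)$ insert/delete operations per OuMv query, $n$ cubes total of size $\mathrm{poly}$ in the OuMv dimension, contradiction with Hypothesis~\ref{hypo:oumvk1}) matches the paper's outline, and your time accounting is consistent with it. But the heart of the theorem is exactly the geometric construction you defer as ``the principal constructive challenge,'' and the specific plan you commit to cannot be made to work. You place the $M$-cubes pairwise disjoint with large spacing $N_0$ and hope that a \emph{single} volume query returns $f + c\cdot|M\cap(U^{(1)}\times\cdots\times U^{(k)})|$ via inclusion--exclusion over $O(n)$ auxiliary cubes. With disjoint, far-apart $M$-cubes the union volume decomposes over local clusters, each containing one $M$-cube and the $O(1)$ auxiliary cubes geometrically near it; an auxiliary cube encodes membership of a single value in a single set $U^{(j)}$, and its effect on the volume is a \emph{pairwise} overlap relation with whatever $M$-cubes it touches. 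To make the cluster of the cube encoding $(i_1,\dots,i_k)$ contribute differently exactly when $i_j\in U^{(j)}$ for \emph{all} $j$ simultaneously, you would need auxiliary cubes placed per tuple (up to $n^{k-1}$ or more insertions per query), or else the volume simply cannot realize the $k$-wise conjunction --- this is the very obstacle you name, and inclusion--exclusion across the $k$ sets does not circumvent it, because in the disjoint regime there are no multi-way overlaps among auxiliary cubes attached to different coordinate pairs at the location of a given $M$-cube.

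The paper's proof takes the opposite route on every point where your plan is rigid. All encoded corners are packed into a box of the same scale as the cube side length (the cubes are heavily overlapping), and a set of $2^{2k-1}-1$ background cubes covers everything outside the nonnegative orthant, so that the union volume inside the orthant equals the volume \emph{dominated} by the set of largest corners --- a continuous analogue of the skyline encoding $(a_1-\delta a_k,\,N-a_1,\,a_2,\,N-a_2,\dots,a_k)$ from Theorem~\ref{thm:skyline}. Per OuMv query it inserts cubes for the \emph{complements} $[N]\setminus U^{(i)}$ (to cover the non-selected indices), then performs $\Theta(N)$ insert--query--delete steps sweeping the last coordinate, obtaining volumes $V_0,\dots,V_N$; nonemptiness of $M\cap(U^{(1)}\times\cdots\times U^{(k-1)}\times\{j\})$ is detected by whether $V_j-V_{j-1}=V_{j+1}-V_j$, using the fact that the relevant encoded corners are maximal points so their presence \emph{strictly} decreases the dominated volume (Claims~\ref{cl:klee1} and \ref{cl:klee2}), with a separate $O(N^{k-1})$ brute force for the slab $j=N$. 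In particular there is no linear identity with a known constant $c$, no single-query extraction of the count, and no disjointness --- these are not optional simplifications but the mechanisms that make the reduction possible. As written, your proposal has a genuine gap: the construction it promises is not supplied, and the constraints it imposes (disjoint $M$-cubes, one volume query, exact linear dependence on the intersection size) are incompatible with what a union-volume oracle over $O(n)$ auxiliary unit cubes can reveal.
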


Relatedly, Dallant and Iacono \cite{dallant2021conditional}  obtained various conditional lower bounds for Dynamic Klee’s Measure Problem with Squares in $\mathbb{R}^2$ of arbitrary side lengths. This is different from our problem which requires the hypercubes to have unit side lengths. 

\paragraph*{Chan's Halfspace problem.}

The following problem appears in \cite[Section4]{dyn:Chan03}. We need to maintain  a dynamic set $H$ of hyperplanes in $\R^d$, and  a dynamic set $Q$  of points in $\R^d$. Each update operation can insert (resp.\ delete) a hyperplane to (resp.\ from) $H$ or a point to (resp.\ from) $Q$. 
Define mapping $c_H\colon \R^d \to \R$ where $c_H(q)$ is the number of hyperplanes in $H$ that contain $q$.
We need to implicitly maintain the multiset of numbers $c_H(Q) = \{c_H(q): q\in Q\}$. More precisely, Chan \cite{dyn:Chan03} originally considered outputting $\square c_H(Q)$ for any fixed operator $\square$ that is decomposable and allows computing $\square (S+j)$ from $\square S$ in constant time (where $S+j = \{i+j:i \in S\}$). For instance, $\square c_H(Q)$ can be the minimum value in $c_H(Q)$ or the sum of all values in $c_H(Q)$. 
We call this problem Chan's Halfspace problem. 

Chan \cite{dyn:Chan03} solved this problem in $\tilde O(n^{1-1/d})$ amortized time per update using $O(n)$ space. Furthermore, he used this problem as an intermediate step in obtaining faster algorithms for a wide range of computational geometry problems such as the  decision version of Dynamic Hausdorff Distance and Dynamic Bichromatic Nearest Neighbor Search. Therefore, it is important to understand the computational complexity of this problem, since any improvements to it would carry over to various other problems. 

Unfortunately, we show that the $\tilde O(n^{1-1/d})$ upper bound is essentially optimal, under the OuMv$_k$ hypothesis.

\begin{restatable}{theorem}{chan}
\label{thm:chan}
Let $k \ge 2$ be a positive integer. 
Assuming the OuMv$_k$ hypothesis, there is no data structure for Chan's Halfspace problem in $\mathbb{R}^{k}$ with $\poly(n)$ pre-processing time, $O(n^{1-1/k -\epsilon})$ amortized update and query time for $\epsilon > 0$, even if we only need to output  $\min c_H(Q)$ for each query.
\end{restatable}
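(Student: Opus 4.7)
The plan is to reduce the OuMv$_k$ problem to Chan's Halfspace problem in $\R^k$ with the $\min$ aggregator, so that any sufficiently fast algorithm for the latter would beat the OuMv$_k$ hypothesis. The pre-processed set $M\subseteq [n]^k$ will be encoded as a fixed collection of points in $Q$, and each online query $(U^{(1)},\dots,U^{(k)})$ will be encoded as a small edit to a collection of axis-aligned hyperplanes in $H$, arranged so that $\min c_H(Q)=0$ holds precisely when the OuMv$_k$ answer is yes.

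Concretely, during pre-processing I would insert into $Q$ the point $p_{\vec{\imath}}=(i_1,\ldots,i_k)\in \R^k$ for every $\vec{\imath}=(i_1,\ldots,i_k)\in M$, and insert into $H$ all $kn$ axis-aligned hyperplanes $\{x_l=v\}$ for $l\in [k]$ and $v\in [n]$. To answer the OuMv$_k$ query $(U^{(1)},\dots,U^{(k)})$, I would delete from $H$ every hyperplane $\{x_l=v\}$ with $v\in U^{(l)}$, read off $\min c_H(Q)$, and then re-insert the deleted hyperplanes so the data structure is ready for the next query. Since the hyperplane $\{x_l=v\}$ contains $p_{\vec{\imath}}$ iff $i_l=v$, after the deletions we have $c_H(p_{\vec{\imath}})=|\{l\in[k]:i_l\notin U^{(l)}\}|$, which equals $0$ iff $\vec{\imath}\in U^{(1)}\times\cdots\times U^{(k)}$. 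Hence $\min_{\vec{\imath}\in M}c_H(p_{\vec{\imath}})=0$ iff some $\vec{\imath}\in M$ witnesses a positive OuMv$_k$ answer.

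For the complexity analysis, $|Q|+|H|=O(n^k)$ is maintained throughout the reduction, so a hypothetical Chan's Halfspace data structure with $\poly(N)$ pre-processing and $O(N^{1-1/k-\epsilon})$ amortized update/query time processes each operation on our instance in $O(n^{k-1-k\epsilon})$ amortized time. The reduction's pre-processing performs $|M|+kn=O(n^k)$ insertions, taking $\poly(n)$ time. Each OuMv$_k$ query then triggers $O(kn)$ updates plus a single $\min$ query, costing $O(n^{k-k\epsilon})$ amortized; summed over $n$ OuMv$_k$ queries the total query time becomes $O(n^{k+1-k\epsilon})$, contradicting Hypothesis~\ref{hypo:oumvk2} with $\gamma=1$ via the savings $k\epsilon>0$.

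I do not expect a significant obstacle here. The most delicate points are verifying that reading off the $\min$ aggregator alone is sufficient---which the theorem statement explicitly permits---and confirming that the amortized accounting transfers cleanly, so that any $\epsilon$-polynomial improvement on Chan's Halfspace yields a $k\epsilon$-polynomial improvement on OuMv$_k$.
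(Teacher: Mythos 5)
Your reduction is correct and is essentially the paper's own proof: both encode $M$ as the point set $Q$, translate each OuMv$_k$ query into $O(n)$ updates of axis-parallel constraints so that $\min c_H(Q)$ hits a known target value iff the query answer is YES, and close with the same polynomial-preprocessing / total-query-time accounting against the hypothesis (your choice of making the data-structure instance have size $\Theta(n^k)$ rather than rescaling the OuMv$_k$ dimension to $\Theta(n^{1/k})$ is an equivalent parametrization). The only cosmetic differences are that the paper inserts, per query, the strict halfspaces $\{x_i<j-0.5\}$ and $\{x_i>j+0.5\}$ for $j\in U^{(i)}$ and tests $\min c_H(Q)=\sum_i(|U^{(i)}|-1)$, whereas you pre-insert all hyperplanes $\{x_l=v\}$ and delete those with $v\in U^{(l)}$, testing $\min c_H(Q)=0$; since Chan's problem (and its $\tilde O(n^{1-1/k})$ upper bound) concerns halfspace containment rather than exact incidence on hyperplanes, you should replace each hyperplane by its two closed bounding halfspaces and shift the target constant accordingly, which changes nothing in your correctness argument or time analysis.
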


\paragraph{Generalizations of Known OMv-Hard Problems. } 

A wide range of dynamic problems were shown to be hard under the OMv hypothesis \cite{HenzingerKNS15}. Among these problems, many of them have natural generalizations (e.g. graph problems generalize to hypergraph problems, matrix problems generalize to tensor problems). We show that many of these generalizations in fact have tight conditional lower bounds under the OuMv$_k$ hypothesis. 

For instance, \cite{HenzingerKNS15} showed that no algorithm for the Dynamic $s$-Triangle Detection problem, in which one needs to maintain a graph undergoing edge insertions and deletions, and needs to answer whether a fixed node $s$ is in a triangle, has $\poly(n)$ pre-processing time, $O(n^{1-\eps})$ update time and $O(n^{2-\eps})$ query time for $\eps > 0$. Its natural generalization to $k$-uniform hypergraphs, Dynamic $s$-$k$-Uniform $(k+1)$-Hyperclique,  is the following: given a $k$-uniform hypergraph that undergoes hyperedge insertions and deletions, determine whether a fixed node $s$ is in  a $k$-uniform $(k+1)$-hyperclique. We show that under the OuMv$_k$ hypothesis, no algorithm for Dynamic $k$-uniform $(k+1)$-hyperclique has $\poly(n)$ pre-processing time, $O(n^{1-\eps})$ amortized update time, and $O(n^{k-\eps})$ amortized query time for $\eps > 0$. 

We also show tight conditional lower bounds for natural generalizations of Erickson's problem and  Langerman's problem \cite{patrascu2010towards}.

\subsection{Further Related Works}

P{\u{a}}tra{\c{s}}cu \cite{patrascu2010towards} was arguably the first to systematically study fine-grained conditional lower bounds for dynamic problems. In this groundbreaking work, P{\u{a}}tra{\c{s}}cu first reduced the $3$SUM problem
to some triangle reporting problem, which is then further reduced to many dynamic problems such as Dynamic Reachability, Dynamic Shortest Paths and Subgraph Connectivity. 
This series of reductions show polynomial lower bounds for dynamic problems under the $3$SUM hypothesis.
This work was later generalized by, for instance, Abboud and Vassilevska Williams~\cite{AbboudW14}, and Kopelowitz, Pettie, and Porat~\cite{kopelowitz2016higher} to show polynomial lower bounds for more problems under the $3$SUM hypothesis. Both \cite{AbboudW14} and \cite{kopelowitz2016higher} use some variants of the triangle reporting problem as  intermediate steps in their reductions. 

The APSP hypothesis is also widely used to show conditional lower bounds for dynamic problems~\cite{RodittyZ11, AbboudW14, AbboudD16, dynamiclislb}. For instance, Abboud and Dahlgaard \cite{AbboudD16} showed hardness for Dynamic APSP and Dynamic Maximum Weight Bipartite Matching in planar graphs under the APSP hypothesis. Based on their techniques, 
Gawrychowski and Janczewski \cite{dynamiclislb} proved conditional hardness for Dynamic Longest Increasing Subsequence.
Vassilevska Williams and Xu~\cite{williams2020monochromatic} related the APSP hypothesis and the $3$SUM hypothesis in the context of dynamic problem lower bounds. In~\cite{williams2020monochromatic}, they showed that the above-mentioned variants of triangle reporting problems are actually also hard under the APSP hypothesis. Combined with previous reductions from versions of triangle reporting to many dynamic problems, e.g.,  \cite{patrascu2010towards, AbboudW14, kopelowitz2016higher}, these dynamic problems get polynomial lower bounds under the APSP hypothesis as well. 

SETH is another popular hypothesis for proving dynamic problem lower bounds. 
Under SETH, Abboud and Vassilevska Williams~\cite{AbboudW14} showed tight lower bounds for some dynamic problems such as Dynamic Strongly Connected Components Counting. \cite{AnconaHRWW19} showed hardness for Dynamic Approximate Diameter and related problems under SETH.  

Abboud, Vassilevska Williams, and Yu \cite{AbboudWY18} considered an extremely weak hypothesis, which states that at least one of the $3$SUM hypothesis, the APSP hypothesis and SETH is true. Under this hypothesis, they first showed conditional lower bound for the so-called Triangle Collection problem, and then used Triangle Collection as an intermediate step to show conditional lower bounds for many dynamic problems such as the counting version of Dynamic  Single Source Reachability. Dahlgaard \cite{Dahlgaard16} later used Triangle Collection to show conditional lower bounds for dynamic and static diameter approximating  problems.

\section{Preliminaries}

In a graph $G=(V, E)$, we use $\mathcal{N}(v)$ to denote the set of neighbors of $v \in V$. For any subset $U \subseteq V$, we use $\mathcal{N}_U(v)$ to denote $\mathcal{N}(v) \cap U$. 

By known techniques (e.g. \cite{williams2018subcubic}), the combinatorial $k$-Clique hypothesis is equivalent to the following unbalanced version. 

\begin{hypo}[Combinatorial $k$-Clique Hypothesis, unbalanced version]
\label{hypo:unbalkclique}
Let $d_1,d_2,\dots,d_k > 0$ be constant real numbers. There is no $O(n^{d_1+d_2+\dots+d_k - \eps})$-time \emph{combinatorial} algorithm for $k$-Clique on $k$-partite graphs $(V_1\cup V_2\cup \dots \cup V_k, E)$ where $|V_i|=n^{d_i}$ for $i \in [k]$, for any $\eps>0$.
\end{hypo}

\begin{fact}
\label{fact:clique_unbalanced_eq}
For any fixed $k$ and fixed $d_1,d_2,\dots,d_k > 0$, 
\cref{hypo:kclique} is equivalent to \cref{hypo:unbalkclique}.
\end{fact}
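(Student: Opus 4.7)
The plan is to prove both directions of equivalence. The direction \cref{hypo:unbalkclique} $\Rightarrow$ \cref{hypo:kclique} is immediate: \cref{hypo:kclique} is precisely the special case of \cref{hypo:unbalkclique} obtained by setting $d_1 = d_2 = \cdots = d_k = 1$, so any refutation of the balanced hypothesis immediately refutes the unbalanced hypothesis for that particular choice of exponents. The nontrivial direction is \cref{hypo:kclique} $\Rightarrow$ \cref{hypo:unbalkclique} for arbitrary fixed $d_1,\dots,d_k>0$, and I would prove it by contrapositive via a standard block decomposition.

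First I would handle a WLOG reduction: by rescaling $N \mapsto N^{1/\max_j d_j}$, one may assume $\max_i d_i = 1$, so $d_i \in (0,1]$ for every $i$; this rescaling preserves the relative exponents in both the input size and the target running time, and transforms an $O(N^{\sum_i d_i - \eps})$ combinatorial algorithm into one of the same form with the normalized exponents. So suppose for contradiction that, with $d_i\in(0,1]$, there is a combinatorial algorithm $\mathcal{A}$ solving $k$-Clique on $k$-partite graphs with part sizes $N^{d_1},\dots,N^{d_k}$ in time $O(N^{\sum_i d_i - \eps})$ for some $\eps>0$.

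Given an $n$-vertex balanced $k$-partite instance $G=(V_1\cup\dots\cup V_k, E)$ with $|V_i|=n$, partition each $V_i$ arbitrarily into $\lceil n^{1-d_i}\rceil$ blocks of size at most $\lceil n^{d_i}\rceil$. For every tuple $(B_1,\dots,B_k)$ of blocks with $B_i\subseteq V_i$, invoke $\mathcal{A}$ on the induced $k$-partite subgraph $G[B_1,\dots,B_k]$ (viewing it as an unbalanced instance with $N=n$, after padding each $B_i$ to have size exactly $\lceil n^{d_i}\rceil$ with isolated vertices). Any $k$-clique of $G$ has exactly one vertex in each $V_i$, hence lies in exactly one such block tuple, so $G$ contains a $k$-clique iff at least one invocation of $\mathcal{A}$ returns YES. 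The number of invocations is $\prod_{i=1}^k \lceil n^{1-d_i}\rceil = O(n^{k-\sum_i d_i})$, and each runs in $O(n^{\sum_i d_i - \eps})$ time, for a total of $O(n^{k-\eps})$ time, contradicting \cref{hypo:kclique}.

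I expect no real obstacle here; the subtleties are minor. The main things to verify are that (a) the reduction is genuinely combinatorial, which it is, since it only enumerates block tuples and invokes $\mathcal{A}$ as a black box, introducing no algebraic matrix-multiplication ingredient; (b) the floor/ceiling issues and padding with isolated vertices only change constant factors and do not affect clique detection; and (c) the WLOG rescaling to $\max_i d_i=1$ is sound, which follows from the fact that \cref{hypo:unbalkclique} is stated for \emph{fixed} exponents, so a refutation for scaled $d_i$ yields a refutation for the original $d_i$ by simply substituting $N = n^{1/\max_j d_j}$. The overall structure mirrors the classical reduction from balanced to unbalanced matrix multiplication (cf.~\cite{williams2018subcubic}), specialized to the clique setting and kept combinatorial throughout.
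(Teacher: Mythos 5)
Your block-decomposition argument for the direction \cref{hypo:kclique} $\Rightarrow$ \cref{hypo:unbalkclique} is correct and is exactly the standard technique the paper alludes to (it cites \cite{williams2018subcubic} rather than writing out a proof); this is also the only direction the paper actually uses in its reductions. The rescaling to $\max_i d_i = 1$, the padding with isolated vertices, and the counting $\prod_i \lceil n^{1-d_i}\rceil \cdot O(n^{\sum_i d_i - \eps}) = O(n^{k-\eps})$ all check out, with one small omission: \cref{hypo:kclique} is stated for general $n$-vertex graphs, not $k$-partite ones, so to refute it you should first pass from an arbitrary $n$-vertex graph to a balanced $k$-partite graph by taking $k$ copies of the vertex set (a standard, purely combinatorial step that you should at least mention).

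The genuine gap is in the direction you call immediate, \cref{hypo:unbalkclique} $\Rightarrow$ \cref{hypo:kclique}. Your justification --- that \cref{hypo:kclique} is the special case $d_1=\dots=d_k=1$ of \cref{hypo:unbalkclique} --- only shows that refuting the balanced hypothesis refutes the unbalanced hypothesis \emph{for the exponent vector $(1,\dots,1)$}. But the fact asserts equivalence for an arbitrary fixed vector $d_1,\dots,d_k>0$, so in this direction you must show that an $O(n^{k-\eps})$ combinatorial algorithm for balanced $k$-Clique yields an $O(N^{\sum_i d_i-\eps'})$ combinatorial algorithm for the unbalanced instances with part sizes $N^{d_i}$. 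This is not automatic: running the balanced algorithm directly on the whole unbalanced graph (which has $\Theta(N^{\max_i d_i})$ vertices) gives time $O(N^{(k-\eps)\max_i d_i})$, which can vastly exceed $N^{\sum_i d_i}$ when the $d_i$ are lopsided --- e.g.\ the paper's instances with sizes $n^{1/3},n^{1/3},n^t,n^{2/3}$. The fix is the mirror image of your main argument: let $d_{\min}=\min_i d_i$, partition each part $V_i$ (of size $N^{d_i}$) into $O(N^{d_i-d_{\min}})$ blocks of size $N^{d_{\min}}$, and run the balanced algorithm on each of the $O(N^{\sum_i d_i - k d_{\min}})$ block tuples (each a graph on $O(N^{d_{\min}})$ vertices), for total time $O(N^{\sum_i d_i - d_{\min}\eps})$, which refutes \cref{hypo:unbalkclique} with $\eps'=d_{\min}\eps>0$. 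With that paragraph added (and the $k$-partition step above), your proof of the equivalence is complete.
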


\section{Lower Bounds under the \texorpdfstring{$k$}{k}-Clique Hypothesis}
In this section, we show tight combinatorial lower bounds for Dynamic Range Mode, $st$ Subgraph Connectivity, Dynamic $2$-Pattern Document Retrieval and Dynamic $2$D Orthogonal Range Color Counting under the combinatorial $4$-Clique hypothesis. 

Previously, there exist known combinatorial lower bounds under the combinatorial BMM hypothesis for the static versions of  Range Mode \cite{ChanDLMW14},  $2$-Pattern Document Retrieval~\cite{larsen2015hardness} and  $2$D Orthogonal Range Color Counting \cite{KaplanRSV08}. Based on these previous reductions, we show higher lower bounds for the dynamic variants of these problems. Intuitively, the static variants of these problems are able to simulate triangles by known reductions, and the dynamic operations are able to simulate the $4$th vertex in a $4$-clique. In Section~\ref{sec:high_dim_mode}, we show hardness for the $d$-dimensional generalizations of static and dynamic Range Mode, based on the combinatorial $(2d+1)$-Clique hypothesis and combinatorial $(2d+2)$-Clique hypothesis respectively. Our results showcase an intriguing pattern that relates $k$-Clique-based lower bound for a static problem and $(k+1)$-Clique-based lower bound for its dynamic variant. We believe this pattern could be useful for designing combinatorial clique-based lower bounds in the future.

\subsection{Range Mode}
We first recall the definition of Dynamic Range Mode.
\begin{prob}[Dynamic Range Mode]
Maintain a data structure for an integer array $a$ of size at most $n$, and support the following operations:
\begin{itemize}
    \item Insert or delete an integer;
    \item For each query specified by $l, r$, report the most frequent integer appearing in $a_l, a_{l+1},\ldots, a_r$, breaking ties arbitrarily. 
\end{itemize}
\end{prob}

In this section, we prove  Theorem~\ref{thm:mode}:
\modeLowerBound*

\begin{proof}
Suppose for the sake of contradiction that there is a combinatorial data structure for Dynamic Range Mode in $\poly(n)$ pre-processing time, $O(n^{2/3-\eps})$ query time and $O(n^{2/3-\eps})$ update time. Let the pre-processing time of the data structure be $O(n^t)$ for some fixed constant $t$. 

We reduce from an unbalanced instance of $4$-Clique Detection, where the $4$ vertex parts $A, B, C, D$ have sizes $n^{1/3}, n^{1/3}, n^t, n^{2/3}$ respectively. By \cref{fact:clique_unbalanced_eq}, combinatorial algorithms for such a unbalanced instance of $4$-Clique Detection requires $n^{t+4/3-o(1)}$ time under the combinatorial $4$-Clique hypothesis.

We initialize an array of size $|A||D| + |B||D|$ as follows. The array will consist of $|A| + |B|$ blocks, where each block corresponds to a permutation of $D$. For each $a \in A$, we create a permutation $P_a$ of $D$ where the neighbors of $a$ in $D$ all occur \emph{before} the non-neighbors of $a$ in $D$. Similarly, for each $b \in B$, we create a permutation $Q_b$ of $D$ where the neighbors of $b$ in $D$ all occur \emph{after} the non-neighbors of $b$ in $D$. The resulting array is the concatenation of all $P_a$ for $a \in A$, followed by all $Q_b$  for $b \in B$. This array has size $O(n)$ and thus running the pre-processing phase of the assumed data structure for Dynamic Range Mode on it takes $O(n^t)$ time. 

Then for every $c \in C$, we start a phase by performing the following operations on the data structure. First, we insert all neighbors of $c$ in $D$ into the ``middle'' of the array where the inserted elements are after all the $P_a$ but before all the $Q_b$. Then for every pair $a \in A, b \in B$, we perform a range mode query on the range that starts with the first neighbor of $a$ in $P_a$ and ends with the last neighbor of $b$ in $Q_b$. If the mode is a common neighbor of $a, b, c$ and $a, b, c$ form a triangle, then we have found a $4$-clique; otherwise, we declare that there is no $4$-clique involving $a, b, c$ and continue to the next pair of $(a, b)$. After we are done with $c$, we remove all neighbors of $c$ inserted in the phase for $c$.

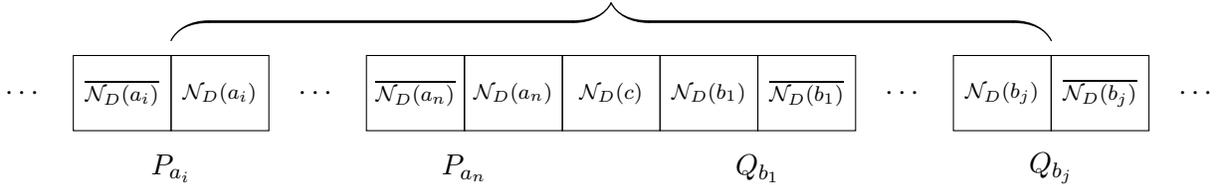
\begin{figure}[h]
\centering
\begin{tikzpicture}
	\def\D{1.3}
 	\node[] at (-\D,0) {$\cdots$};
 	
    \node[rectangle,draw, minimum width = \D cm, minimum height = 1cm] (r1) at (0*\D,0) {\scriptsize{$\overline{\mathcal{N}_D(a_i)}$}};
 	\node[rectangle,draw, minimum width = \D cm, minimum height = 1cm] (r2) at (1*\D,0) {\scriptsize{$\mathcal{N}_D(a_i)$}};
	\node[] at (\D/2,-1) {$P_{a_i}$};
	
 	\node[] at (2*\D,0) {$\cdots$};
 	
    \node[rectangle,draw, minimum width = \D cm, minimum height = 1cm] (r3) at (3*\D,0) {\scriptsize{$\overline{\mathcal{N}_D(a_n)}$}};
 	\node[rectangle,draw, minimum width = \D cm, minimum height = 1cm] (r4) at (4*\D,0) {\scriptsize{$\mathcal{N}_D(a_n)$}};
	\node[] at (3.5*\D,-1) {$P_{a_n}$};
	
	\node[rectangle,draw, minimum width = \D cm, minimum height = 1cm] (r5) at (5*\D,0) {\scriptsize{$\mathcal{N}_D(c)$}};
	
	\node[rectangle,draw, minimum width = \D cm, minimum height = 1cm] (r6) at (6*\D,0) {\scriptsize{$\mathcal{N}_D(b_1)$}};
	\node[rectangle,draw, minimum width = \D cm, minimum height = 1cm] (r7) at (7*\D,0) {\scriptsize{$\overline{\mathcal{N}_D(b_1)}$}};
	\node[] at (6.5*\D,-1) {$Q_{b_1}$};
	
	\node[] at (8*\D,0) {$\cdots$};
	
	\node[rectangle,draw, minimum width = \D cm, minimum height = 1cm] (r8) at (9*\D,0) {\scriptsize{$\mathcal{N}_D(b_j)$}};
 	\node[rectangle,draw, minimum width = \D cm, minimum height = 1cm] (r9) at (10*\D,0) {\scriptsize{$\overline{\mathcal{N}_D(b_j)}$}};
	\node[] at (9.5*\D,-1) {$Q_{b_j}$};
	
	\node[] at (11*\D,0) {$\cdots$};
	
	\draw [decorate, decoration = {calligraphic brace,raise=0.7cm, amplitude=0.5cm}, line width=1pt] (\D/2,0) --  (9.5*\D, 0);
    
\end{tikzpicture}
\caption{This figure depicts the range mode query corresponding to vertices $a_i, b_j$ and $c$. Here, $\mathcal{N}_D(v)$ denotes the set of neighbors of vertex $v$ in $D$, and $\overline{\mathcal{N}_D(v)}$ denotes the set of non-neighbors of vertex $v$ in $D$. }
\end{figure}

To show the correctness of this reduction, it suffices to show that if vertices $a, b, c$ have a common neighbor in $D$, then the range mode query corresponding to $a, b, c$ will find a common neighbor of them. This is clearly true because the range we query consists of the neighbors of $a$ in $D$, the neighbors of $b$ in $D$, the neighbors of $c$ in $D$ and some full permutations of $D$. 

The total number of updates is $O(|C||D|) = O(n^{t+2/3})$ and the total number of queries is $O(|A||B||C|) = O(n^{t+2/3})$. Therefore, the running time of the reduction is $O(n^t + n^{t+2/3} \cdot n^{2/3-\eps}) = O(n^{t+4/3-\eps})$, contradicting the combinatorial $4$-Clique hypothesis. Therefore, such an efficient combinatorial data structure for Dynamic Range Mode cannot exist under the combinatorial $4$-Clique hypothesis, leading to the claimed lower bound. 

A similar reduction works for the Dynamic Range Minority problem as well. For conciseness, we only list the main differences. First, we need to swap the order between the neighbors and non-neighbors inside each permutation $P_a$ for $a \in A$ and $Q_b$ for $b \in B$. To account for the fact that range minority queries ask for the least frequent element that needs to appear in a given range, we insert an arbitrary permutation of $D$ after all $P_a$ for $a \in A$ and before all $Q_b$ for $b \in B$, so that all elements appear in every query we make in the reduction. 
Then in phase $c$, we insert the non-neighbors of $c$ in $D$ instead of the neighbors. The query range for $a, b, c$ becomes the range  that starts with the first non-neighbor of $a$ in $P_a$ and ends with the last non-neighbor of $b$ in $Q_b$. The other parts of the reduction remain more or less the same. 
\end{proof}

\subsubsection{High-Dimensional Range Mode}
\label{sec:high_dim_mode}

A natural high-dimensional variant of Range Mode with orthogonal range queries
was studied in \cite{ChanDLMW14}, which gave a combinatorial data structure for the static version of $d$-Dimensional Range Mode with $\tO(ns^{2d-1})$ pre-processing time and $\tO(n^2/s)$ query time for any parameter $s \in [1, n]$ (their pre-processing time is implicit). By setting $s$ to be $n^{1/2d}$, their data structure implies an $\tO(n^{2-1/2d})$ time algorithm for the following Batch $d$-Dimensional  Orthogonal Range Mode problem.

\begin{prob}[Batch $d$-Dimensional  Orthogonal Range Mode]
Given $n$ points in $\R^d$ each labeled with an integer, and $n$ queries specified by $l_1,r_1,l_2,r_2,\dots,l_d,r_d$, we need to report the most frequent label appearing in the axis-aligned box $[l_1,r_1]\times [l_2,r_2]\times \dots \times [l_d,r_d]$ for each query, breaking ties arbitrarily.
\end{prob}

We show that the $\tO(n^{2-1/2d})$ time algorithm is in fact nearly-optimal under the combinatorial $(2d+1)$-Clique hypothesis.

\begin{restatable}{theorem}{StaticdmodeLowerBound}
\label{thm:static_dmode}
Assuming the $(2d+1)$-Clique hypothesis, there is no combinatorial data structure that solves Batch $d$-Dimensional  Orthogonal Range Mode in $O(n^{2-1/2d-\eps})$  time for $\eps > 0$. 
\end{restatable}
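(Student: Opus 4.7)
The plan is to extend the reduction of \Cref{thm:mode} to the static $d$-dimensional setting via a tensor product of the 1D construction, leveraging an (unbalanced) $(2d+1)$-Clique instance. Given a $(2d+1)$-partite graph with parts $A_1,B_1,\ldots,A_d,B_d,V$, I would organize axis $j$ into $|A_j|+|B_j|$ blocks of length $|V|$: each $P_{a_j}$-block is a permutation of $V$ with neighbors of $a_j$ placed first, and each $Q_{b_j}$-block places neighbors of $b_j$ last. Points are placed by tensor product: for each $v\in V$ and each block combination $(k_1,\ldots,k_d)$, a single point at $(\mathrm{pos}_1(v,k_1),\ldots,\mathrm{pos}_d(v,k_d))$ with label $v$. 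For each tuple $(a_1,b_1,\ldots,a_d,b_d)$ I issue one query whose axis-$j$ range runs from the first neighbor of $a_j$ in $P_{a_j}$ to the last neighbor of $b_j$ in $Q_{b_j}$.

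For correctness, the count of label $v$ in the box factors as $\prod_{j=1}^d(F_j+\mathbbm{1}[v\sim a_j]+\mathbbm{1}[v\sim b_j])$, where $F_j$ is the number of fully contained blocks along axis $j$. This product is uniquely maximized at $\prod_j(F_j+2)$ exactly when $v$ is a common neighbor of all $2d$ queried vertices, so the returned mode identifies a candidate clique vertex; an $O(1)$ edge check among the $2d$ queried vertices completes the $(2d+1)$-clique detection. Setting $|A_j|=|B_j|=n^{\alpha}$ and $|V|=n^{\beta}$ under the constraints $2d\alpha\le 1$ (queries $\le n$) and $\beta+d\alpha\le 1$ (points $\le n$), the reduction yields a $(2d+1)$-clique time of $n^{\beta+2d\alpha}$. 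A combinatorial $O(n^{2-1/(2d)-\eps})$ algorithm for Batch $d$-Dim Range Mode would then beat the clique hypothesis provided $\beta+2d\alpha>2-1/(2d)$.

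The main obstacle is that the naive linear program above maximizes $\beta+2d\alpha$ at only $3/2$, which matches $2-1/(2d)$ when $d=1$ but is strictly weaker for $d\ge 2$. Bridging this gap---to realize the advertised tight $n^{2-1/(2d)-\eps}$ bound---is the principal technical challenge of the proof. I expect it requires a more refined reduction that sparsifies the tensor-product point set using the input graph's adjacency structure (so that only ``adjacency-consistent'' tensor entries survive), or that distributes the $2d+1$ clique parts across the axes and the label space in a more asymmetric way than the plain tensor product. Once such a refinement raises the effective clique exponent to $2-1/(2d)$, the factored-count argument above immediately delivers the desired lower bound, and it remains only to verify that the correctness argument is unaffected by the sparsification.
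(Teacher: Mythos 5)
Your construction has a genuine gap, and you have correctly diagnosed where it is: the tensor-product placement makes the point count multiplicative, $|V|\cdot\prod_j(|A_j|+|B_j|)$, so the constraint ``at most $n$ points'' forces $\beta+d\alpha\le 1$ and caps the reachable clique exponent at $3/2$, which matches $2-1/2d$ only for $d=1$. Neither of your proposed repairs (sparsifying the tensor entries by adjacency, or distributing parts asymmetrically) is what is needed, and the sparsification idea in particular would break your factored-count correctness argument, which relies on every block combination being fully populated.

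The missing idea is to abandon the product structure entirely and exploit that a single axis-aligned box in $\R^d$ has $2d$ independent boundaries. Take an unbalanced $(2d+1)$-partite instance with $|V_1|=\cdots=|V_{2d}|=n^{1/2d}$ and $|V_{2d+1}|=n^{1-1/2d}$, and for each $i\in[2d]$ build the same one-dimensional array $A_i$ you describe (concatenated permutations of $V_{2d+1}$, neighbors of each $v_i$ first), of size $|V_i|\cdot|V_{2d+1}|=n$. Now place $A_i$ along a \emph{half-axis}: for odd $i$ put its $j$-th element at the point whose $\lceil i/2\rceil$-th coordinate is $j$ and all other coordinates are $0$; for even $i$ use coordinate $-j$. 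The total number of points is $\sum_i|A_i|=O(n)$, additive rather than multiplicative. A query for a tuple $(v_1,\dots,v_{2d})\in V_1\times\cdots\times V_{2d}$ is the single box $\{x_{\lceil i/2\rceil}\le b_i \text{ for odd } i,\; x_{\lceil i/2\rceil}\ge -b_i \text{ for even } i\}$, where $b_i$ is the position of the last neighbor of $v_i$ in its permutation block; the label multiset inside the box is exactly the union of the $2d$ array prefixes, so the count of a label $v$ is a constant (full permutations) plus the number of $i$ with $v\sim v_i$, and the mode is a common neighbor of all $v_1,\dots,v_{2d}$ iff one exists. This gives $n$ queries on $O(n)$ points, and since the hypothesis requires $n^{2d\cdot(1/2d)+(1-1/2d)-o(1)}=n^{2-1/2d-o(1)}$ time for this unbalanced clique instance, an $O(n^{2-1/2d-\eps})$ batch algorithm is ruled out. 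Note the count here is additive across half-axes, not the product $\prod_j(F_j+\cdots)$ you were aiming for, so the correctness argument is actually simpler than in your plan.
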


\begin{proof}
We reduce from an unbalanced instance of $(2d+1)$-Clique Detection, where the first $2d$ parts $V_1, \ldots, V_{2d}$ all have sizes $n^{1/2d}$, while the last part $V_{2d+1}$ has size $n^{1-1/2d}$. By \cref{fact:clique_unbalanced_eq}, combinatorial algorithms for such unbalanced instances of $(2d+1)$-Clique Detection require $n^{2-1/2d-o(1)}$ time under the combinatorial $(2d+1)$-Clique hypothesis.

For each $i \in [2d]$, we will first create an array $A_i$ of size $O(n)$ as follows. The elements in the array will be identified by vertices in $V_{2d+1}$. For each $v_i \in V_i$, we create a permutation of $V_{2d+1}$, such that the neighbors of $v_i$ in $V_{2d+1}$ appear before the non-neighbors of $v_i$ in $V_{2d+1}$. The array $A_i$ is then the concatenation of all the permutations. 

We can split each of the $d$ axes in $\R^d$ at the origin to get a total of $2d$ half-axes. We will put each $A_i$ on one of the half-axes as follows. For each odd $i \in [2d]$ and each $j \in [|A_i|]$, we add a point whose $\lceil i/2 \rceil$-th coordinate is $j$ and whose other coordinates are all zeros. We assign a label $A_i[j]$ to this point. For each even $i \in [2d]$ and each $j \in [|A_i|]$, we add a point whose $\lceil i/2 \rceil$-th coordinate is $-j$ and whose other coordinates are all zeros. We similarly assign a label $A_i[j]$ to this point. 

Fix a tuple $(v_1, \ldots, v_{2d}) \in V_1 \times \cdots \times V_{2d}$. For every $i \in [2d]$, we use $b_i$ to denote the index in $A_i$ of the last neighbor of $v_i$ in the permutation corresponding to $\mathcal{N}_{V_{2d+1}}(v_i)$. Then we ask a range mode query on the orthogonal range defined as the following:
\[\left\{
  \begin{array}{ll}
    x_{\lceil i/2 \rceil} \le b_i & : i \in [2d] \text{ is odd},\\
    x_{\lceil i/2 \rceil} \ge -b_i & : i \in [2d] \text{ is even}.\\
  \end{array}
\right.
\]
It is not hard to see that the multi-set of labels in this orthogonal range is exactly 
$$\{A_i[j]: 1 \le i \le 2d, 1 \le j \le b_i\}.$$
By construction of the arrays $A_i$, this multiset is the union of several full permutations of $V_{2d+1}$, and the neighborhoods of $v_1, \ldots, v_{2d}$ in $V_{2d+1}$. Thus, if $v_1, \ldots, v_{2d}$ have a common neighbor in $V_{2d+1}$, the  mode of the orthogonal range will also be a common neighbor. 

Therefore, by asking $O(n)$ range mode queries on this instance, we are able to determine whether each tuple $(v_1, \ldots, v_{2d}) \in V_1 \times \cdots \times V_{2d}$ has a common neighbor in $V_{2d+1}$, so that we can solve the $(2d+1)$-Clique Detection instance in $O(n)$ additional time. This concludes the lower bound proof for Batch $d$-Dimensional  Orthogonal Range Mode.
\end{proof}

Similar to Dynamic Range Mode, the Batch $d$-Dimensional  Orthogonal Range Mode has a natural dynamic variant. 

\begin{prob}[Dynamic $d$-Dimensional  Orthogonal Range Mode]
Maintain a data structure for a set of at most $n$ points in $\R^d$ each labeled with an integer and support the following operations:
\begin{itemize}
    \item Insert or delete a point;
    \item For each query specified by $l_1,r_1,l_2,r_2,\dots,l_d,r_d$, report the most frequent label appearing in the axis-aligned box $[l_1,r_1]\times [l_2,r_2]\times \dots \times [l_d,r_d]$, breaking ties arbitrarily. 
\end{itemize}
\end{prob}

We first show an $\tO(n^{1-1/(2d+1)})$ time per operation data structure for Dynamic $d$-Dimensional  Orthogonal Range Mode, and then show that this data structure is essentially optimal under the combinatorial $(2d+2)$-Clique hypothesis.

\begin{restatable}{prop}{dmodeUpperBound}
\label{prop:dmode_upper}
There exists a combinatorial data structure that solves  Dynamic $d$-Dimensional  Orthogonal Range Mode in $\tilde{O}(n^{2-2/(2d+1)})$  pre-processing time, $\tilde{O}(n^{1-1/(2d+1)})$  query time and $\tilde{O}(n^{1-1/(2d+1)})$  update time. 
\end{restatable}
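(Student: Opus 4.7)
\noindent\emph{Proof plan.}
My plan is to combine a parametric static $d$-dimensional range mode structure with periodic rebuilds, augmented by per-color range trees that certify the current frequency of each candidate color.

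\emph{Static ingredient.} From the construction underlying Theorem~\ref{thm:static_dmode} I would distill, for every parameter $s \in [1,n]$, a data structure with $\tilde O(n s^{2d-1})$ preprocessing that answers a single orthogonal range mode query in $\tilde O(n/s)$ time. The structure places the $n$ points in a rank-based grid of $s^d$ cells (each holding $\tilde O(n/s^d)$ points) and via an incremental sweep over the $s^{2d}$ canonical cell-unions (axis-aligned products of contiguous cell ranges) it records, for each such union $U$, the top $B{+}1$ most frequent colors in $U$ along with their counts, where $B$ is a rebuild threshold to be fixed below. Maintaining the top-$(B{+}1)$ list during the sweep via a balanced BST keyed by (count, color) adds only a polylog factor, so preprocessing stays $\tilde O(n s^{2d-1})$. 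A query on a box $R$ is then resolved by reading the stored top-$(B{+}1)$ list of the maximal canonical cell-union $R_{\mathrm{int}} \subseteq R$ together with the $\tilde O(n/s)$ colors of points in the boundary cells $R \setminus R_{\mathrm{int}}$.

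\emph{Dynamization.} I would rebuild the static structure on the current point set every $B$ updates (amortized rebuild cost $\tilde O(n s^{2d-1}/B)$ per update), maintain a buffer $\Delta$ of the $\le B$ pending updates, and, for every color $c$, keep a dynamic $d$-dimensional range tree on the points currently colored $c$ supporting $\tilde O(1)$ updates and $\tilde O(\log^d n)$ orthogonal frequency queries. For a query on $R$, I form a candidate set $C$ consisting of (i) the precomputed top-$(B{+}1)$ colors of $R_{\mathrm{int}}$, (ii) the $\tilde O(n/s)$ boundary colors, and (iii) the $\le B$ colors touched by $\Delta$; I read $f_c^{\mathrm{now}}(R)$ from each $c \in C$'s range tree and return an argmax. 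For correctness: for any $c \notin C$, $c$ is untouched by $\Delta$ and has no points in boundary cells, so $f_c^{\mathrm{now}}(R)=f_c^{\mathrm{old}}(R_{\mathrm{int}})$; since $c$ is not among the top $B{+}1$ colors of $R_{\mathrm{int}}$ and $\Delta$ affects at most $B$ distinct colors, at least one color $m$ in that top-list is untouched by $\Delta$, so $m \in C$ with $f_m^{\mathrm{now}}(R) \ge f_m^{\mathrm{old}}(R_{\mathrm{int}}) \ge f_c^{\mathrm{old}}(R_{\mathrm{int}}) = f_c^{\mathrm{now}}(R)$, placing the true mode in $C$.

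\emph{Balancing and main obstacle.} The amortized update cost is $\tilde O(n s^{2d-1}/B)$ and the query cost is $\tilde O((n/s + B)\log^d n)$. Setting $B = s^{2d}$ equates $n s^{2d-1}/B$ with $n/s$, and then $s = n^{1/(2d+1)}$ balances $n/s$ against $B = s^{2d}$, yielding update and query time $\tilde O(n^{1-1/(2d+1)})$ and preprocessing $\tilde O(n s^{2d-1}) = \tilde O(n^{2-2/(2d+1)})$. The hardest step will be the static ingredient: extracting a clean parametric trade-off version of the construction underlying Theorem~\ref{thm:static_dmode} and verifying that augmenting it with top-$(B{+}1)$ lists per canonical cell-union fits within the $\tilde O(n s^{2d-1})$ preprocessing budget. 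Given that structure, the periodic rebuilds, the correctness of $C$, and the two-stage parameter balance are routine.
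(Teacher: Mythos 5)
Your proposal is correct, but it proves \cref{prop:dmode_upper} by a genuinely different route than the paper. The paper's data structure is direct and rebuild-free: it classifies colors by their \emph{global} frequency against the threshold $B=n^{1/(2d+1)}$, keeps a $d$-dimensional range tree $\mathcal{T}_c$ per color (queried for all $O(n/B)$ globally frequent colors at query time), and for each infrequent color stores all $O(B^{2d})$ ``canonical'' boxes spanned by that color's own coordinates, with their counts, in a single $2d$-dimensional range tree supporting a max-query over boxes contained in the query range; this gives worst-case $\tO(B^{2d})=\tO(n^{1-1/(2d+1)})$ update and $\tO(n/B)$ query with no static trade-off structure and no rebuilding. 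You instead dynamize the static trade-off of \cite{ChanDLMW14} (note: you should cite that upper bound rather than ``the construction underlying \cref{thm:static_dmode}'', which is a lower-bound reduction) via periodic rebuilds every $B$ updates, top-$(B{+}1)$ candidate lists per canonical cell-union, an update buffer, and exact verification of candidates through per-color range trees; your correctness argument (an untouched color from the top-$(B{+}1)$ list dominates any color outside the candidate set) is sound, and the two-stage balance $B=s^{2d}$, $s=n^{1/(2d+1)}$ does land on the stated bounds. What each buys: the paper's scheme is simpler, needs no amortization, and isolates the $B^{2d}$-vs-$n/B$ tension per color; yours is a more generic static-to-dynamic transformation that would adapt to other decomposable ``top-$k$ certifiable'' problems, at the cost of amortized updates (fixable by Overmars's global rebuilding, as the paper does for \cref{prop:color_upper}) and of having to verify the augmented static structure. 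Two small quantitative caveats in your sketch: the cells of a rank-based grid need not hold $\tO(n/s^d)$ points (only the per-dimension slabs hold $n/s$, which is all you actually use), and extracting top-$(B{+}1)$ lists costs $\tO(s^{2d}B)$, which is not merely ``a polylog factor'' over $\tO(ns^{2d-1})$ in general—it happens to equal $ns^{2d-1}$ exactly at your final parameter choice, so the preprocessing bound still holds.
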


\begin{proof}
For each label $c$, we maintain a $d$-dimensional range tree $\mathcal{T}_c$ that stores all the points with label $c$. Let $B$ be a threshold parameter to be set later. For every label $c$ that appears less than $B$ times, there are at most $O(B)$ values on each coordinate that appear as the coordinate of some point with label $c$. We consider all $O(B^{2d})$ orthogonal ranges whose boundary coordinates all appear as the coordinate of some point with label $c$. We store these orthogonal ranges in a $2d$-dimensional range tree $\mathcal{T}'$ (each boundary coordinate can be viewed as a dimension, and thus it's a $2d$-dimensional range tree), and associate it with a value equal to the number of points in the range. Note that a single range tree $\mathcal{T}'$ holds these ranges over all labels $c$. 

Then upon each update, it clearly takes $\tilde{O}(1)$ time to update the structure $\mathcal{T}_c$, and $\tilde{O}(B^{2d})$ time to update the structure $\mathcal{T}'$. 

For each query, we first enumerate all labels $c$ that appear at least $B$ times in the whole point set, and query the number of points in the queried orthogonal range via $\mathcal{T}_c$ in $\tilde{O}(1)$ time. The first potential answer is the label that appears the most times in the queried orthogonal range among these frequent labels.  
There can be at most $O(n/B)$ such labels, so it takes $\tilde{O}(n/B)$ time to handle these frequent labels. On the other hand, for labels that appear at most $B$ times, we can simply query the largest value over all the orthogonal ranges stored in $\mathcal{T'}$ that are entirely contained inside the queried orthogonal range, which takes $\tO(1)$ time. The second potential answer is the label corresponding to this largest value. 
The final answer to the query is the better one between these two potential answers.

If one of the most frequent labels $c$ appears at least $B$ times in the whole point set, then our data structure is correct since the query to $\mathcal{T}_c$ gives the correct count of label $c$ in the queried orthogonal range (and since our algorithm clearly does not over-estimate the mode). Otherwise, let $c$ be the most frequent label to the query that appears less than $B$ times in the whole point set. Without loss of generality, assume $c$ appears at least once in the queried orthogonal range (since otherwise the mode is $0$ and our algorithm must be correct since it never over-estimates the mode). Say the queried orthogonal range is $[l_1,r_1]\times [l_2,r_2]\times \dots \times [l_d,r_d]$. For each $i \in [d]$, let $l_i'$ be the smallest value that is at least $l_i$ and appears as a coordinate of some point with label $c$. Similarly, for each $i \in [d]$, let $r_i'$ be the largest value that is at most $r_i$ and appears as a coordinate of some point with label $c$. Clearly, $[l_1',r_1']\times [l_2',r_2']\times \dots \times [l_d',r_d']$ contains the same number of points with label $c$ as the queried orthogonal range does, and it is stored in the data structure $\mathcal{T}'$. Therefore, the second potential answer will be a correct range mode in this case. 

The result follows by setting $B=n^{1/(2d+1)}$. 
\end{proof}

\begin{restatable}{theorem}{dmodeLowerBound}
\label{thm:dmode}
Assuming the combinatorial $(2d+2)$-Clique hypothesis, there is no combinatorial data structure that solves  Dynamic $d$-Dimensional  Orthogonal Range Mode in $\poly(n)$ pre-processing time, $O(n^{1-1/(2d+1)-\eps})$ amortized query time and $O(n^{1-1/(2d+1)-\eps})$ amortized update time for $\eps > 0$.  
\end{restatable}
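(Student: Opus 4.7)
The plan is to combine the two existing reductions in the paper: the static $d$-dimensional construction from Theorem~\ref{thm:static_dmode} (based on $(2d+1)$-Clique) for the base point set, together with the dynamic-phase technique from Theorem~\ref{thm:mode} (based on $4$-Clique) for handling one additional vertex part. Concretely, suppose for contradiction that a combinatorial data structure exists with $O(n^t)$ pre-processing (for some fixed $t$) and $O(n^{1-1/(2d+1)-\eps})$ amortized update and query time. I would reduce from an unbalanced combinatorial $(2d+2)$-Clique instance on parts $V_1,\ldots,V_{2d+2}$ with sizes $|V_i|=N^{1/(2d+1)}$ for $i\in[2d]$, $|V_{2d+1}|=N^q$ (for a constant $q$ chosen larger than $t$), and $|V_{2d+2}|=N^{2d/(2d+1)}$. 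By Fact~\ref{fact:clique_unbalanced_eq}, any combinatorial algorithm for this instance requires $N^{2d/(2d+1)+q+2d/(2d+1)-o(1)}=N^{q+4d/(2d+1)-o(1)}$ time.

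Next I would lay out the base point set exactly as in Theorem~\ref{thm:static_dmode}: for each $i\in[2d]$, on the appropriate half of axis $\lceil i/2\rceil$, place the concatenation $A_i$ of permutations $P_{v_{i,k}}$ of $V_{2d+2}$ in which neighbors of $v_{i,k}$ appear before non-neighbors. The total base size is $2d\cdot N^{1/(2d+1)}\cdot N^{2d/(2d+1)}=O(N)$, so $n=O(N)$. Then for each $v_{2d+1}\in V_{2d+1}$ in turn, run a phase that (i) inserts, for each $w\in \mathcal{N}_{V_{2d+2}}(v_{2d+1})$, a point located at the origin with label $w$; (ii) iterates over all tuples $(v_1,\ldots,v_{2d})\in V_1\times\dots\times V_{2d}$, issuing on each the range mode query over the box $\prod_{j=1}^{d}[-b_{2j},b_{2j-1}]$ defined as in Theorem~\ref{thm:static_dmode}, and checking whether the returned mode forms a $(2d+2)$-clique with $(v_1,\ldots,v_{2d+1})$; (iii) deletes the inserted origin points before the next phase. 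Pre-filtering vertices in each $V_i$ whose neighborhood in $V_{2d+2}$ is empty (they cannot participate in any clique) guarantees $b_i\ge 1$, so the origin lies in every query box. Hence the multiset of labels inside any query box is a union of several full permutations of $V_{2d+2}$ (each contributing the same count to every label) plus the $2d+1$ neighborhoods $\mathcal{N}_{V_{2d+2}}(v_1),\ldots,\mathcal{N}_{V_{2d+2}}(v_{2d+1})$; a common neighbor of all $2d+1$ vertices receives a $+(2d+1)$ bonus over the baseline while any other element gets at most $+2d$, so the returned mode is a common neighbor whenever one exists.

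For the time bound, each of the $N^q$ phases performs $O(N^{2d/(2d+1)})$ updates and $(N^{1/(2d+1)})^{2d}=N^{2d/(2d+1)}$ queries, each costing $O(n^{1-1/(2d+1)-\eps})=O(N^{2d/(2d+1)-\eps})$. The total runtime is $O(N^t)+O(N^{q+4d/(2d+1)-\eps})$, which is $O(N^{q+4d/(2d+1)-\eps})$ once $q>t$, contradicting the $(2d+2)$-Clique lower bound. I expect the main obstacle to be the geometric correctness step — verifying that the origin-inserted dynamic points always lie within the query box and that the mode-count argument cleanly separates common neighbors from other labels — but this follows almost directly from the corresponding argument for Theorem~\ref{thm:static_dmode}, with the dynamic insertions contributing exactly one extra copy of each label in $\mathcal{N}_{V_{2d+2}}(v_{2d+1})$, cleanly identifying $v_{2d+1}$'s contribution in the mode count.
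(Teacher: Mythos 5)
Your proposal is correct and follows essentially the same route as the paper: reduce from an unbalanced $(2d+2)$-Clique instance with part sizes $n^{1/(2d+1)}$ (for $V_1,\dots,V_{2d}$), $\mathrm{poly}(n)$ (for $V_{2d+1}$), and $n^{2d/(2d+1)}$ (for $V_{2d+2}$), reuse the static point construction of Theorem~\ref{thm:static_dmode}, and run one phase per $v_{2d+1}$ that inserts its neighborhood's labels near the origin (which lies in every query box), queries all tuples in $V_1\times\cdots\times V_{2d}$, and then deletes them. The only cosmetic differences (placing the inserted points exactly at the origin rather than at distinct coordinates in $(-1,1)^d$, and pre-filtering vertices with empty neighborhoods) do not affect correctness.
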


\begin{proof}
This proof combines the ideas from the proofs of \cref{thm:mode} and \cref{thm:static_dmode}. Suppose there is a combinatorial data structure for Dynamic $d$-dimensional  Orthogonal Range Mode in $\poly(n)$ pre-processing time, $O(n^{1-1/(2d+1)-\eps})$ query time and $O(n^{1-1/(2d+1)-\eps})$ update time for $\eps > 0$. Let the pre-processing time of the data structure be $O(n^t)$ for some fixed constant $t$.  

We reduce from an unbalanced instance of $(2d+2)$-Clique Detection, where the first $2d$ vertex parts $V_1, \ldots, V_{2d}$ have sizes $n^{1/(2d+1)}$. The $(2d+1)$-th vertex part $V_{2d+1}$ has size $n^t$ and the last vertex part $V_{2d+2}$ has size $n^{1-1/(2d+1)}$. 
By \cref{fact:clique_unbalanced_eq},  combinatorial algorithms for such  unbalanced instances of $(2d+2)$-Clique Detection require $n^{t+4d/(2d+1)-o(1)}$ time under the combinatorial $(2d+2)$-Clique hypothesis.

The labels of the Dynamic $d$-dimensional  Orthogonal Range Mode instance will correspond to vertices in $V_{2d+2}$. As in the proof of \cref{thm:static_dmode}, we create $O(n)$ labeled points in $d$-dimensional space that encode the neighbors of $V_1, \ldots, V_{2d}$ in $V_{2d+2}$. The key properties of the construction we need are
\begin{enumerate}
    \item We can construct these points in $O(n)$ time. 
    \item For every tuple $(v_1, \ldots, v_{2d}) \in V_1 \times \cdots \times V_{2d}$, we can find an orthogonal range in $O(1)$ time such that the multi-set of labels in the range is the union of the neighbors of $v_1, \ldots, v_{2d}$ in $V_{2d+2}$ and several copies of $V_{2d+2}$.
    \label{property2}
    \item All these orthogonal ranges contain the box $(-1, 1)^d$. \label{property3}
\end{enumerate}
Once we create these points, we use the pre-processing part of the assumed Dynamic $d$-dimensional  Orthogonal Range Mode data structure on them in $O(n^t)$ time. 

Then we start a phase for each $v_{2d+1} \in V_{2d+1}$. At the beginning of the phase, we add $|\mathcal{N}_{V_{2d+2}}(v_{2d+1}) |$ points to the data structure. The coordinates of these points can be arbitrary coordinates inside the box $(-1, 1)^d$, and the labels of these points correspond to the neighbors of $v_{2d+1}$ in $V_{2d+2}$. Then we perform several queries to the data structure. For every tuple $(v_1, \ldots, v_{2d}) \in V_1 \times \cdots \times V_{2d}$, we use the second property to find the orthogonal range corresponding to the tuple and query the mode in this range via the data structure. 
By properties~\ref{property2} and \ref{property3}, the multi-set of labels in the range will be the union of the neighbors of $v_1, \ldots, v_{2d+1}$ in $V_{2d+2}$ and several copies of $V_{2d+2}$. Thus, if $v_1, \ldots, v_{2d+1}$ have a common neighbor in $V_{2d+2}$, the data structure will return one of the common neighbors for the query. At the end of the phase, we delete all points added in this phase from the data structure. 

By previous discussion, after all phases are performed, we will know whether each tuple $(v_1, \ldots, v_{2d+1}) \in V_1 \times \cdots \times V_{2d+1}$ has a common neighbor in $V_{2d+2}$. We can then determine if the initial graph contains a $(2d+2)$-clique in $O(n^{t+2d/(2d+1)})$ time by checking if there is a tuple $v_1, \ldots, v_{2d+1}$ that forms a $(2d+1)$-clique among them and has a common neighbor in $V_{2d+2}$.  

The overall running time of the algorithm is $O(n^t+n^{t+2d/(2d+1)} + n^{t+2d/(2d+1)} \cdot n^{1-1/(2d+1)-\eps}) = O(n^{t+4d/(2d+1)-\eps})$, which contradicts the combinatorial $(2d+2)$-Clique hypothesis. 
\end{proof}

\subsection{\texorpdfstring{$st$}{st} Subgraph Connectivity}

Recall the definition of $st$ Subgraph Connectivity ($st$-SubConn)

\begin{prob}[$st$-SubConn]
Maintain a data structure for a static undirected graph $G=(V,E)$ with $|V|=n$ and $|E|=m$ with two fixed vertices $s, t \in V$, and a dynamic vertex subset $S\subseteq V$. Support the following operations:
\begin{itemize}
    \item Insert or delete a vertex to or from $S$;
    \item Report whether $s$ is connected to $t$ in the subgraph induced by $S$. 
\end{itemize}
\end{prob}

We will show the following lower bound for combinatorial algorithms, matching the best known upper bound~\cite{ChanPR11}. 

\SubConnLowerBound*

\begin{proof}
Suppose there is a combinatorial data structure for $st$-Subconn with $\poly(m)$ pre-processing time, $O(m^{2/3-\eps})$ update time and $O(m^{1-\eps})$ query time. Let the pre-processing time of the data structure be $O(m^r)$ for some fixed constant $r$. 

We reduce from an unbalanced instance of $4$-Clique Detection, where the $4$ vertex parts $A, B, C, D$ have sizes $m^{2/3}, m^{1/3}, m^{1/3}, m^{r}$ respectively. Let $E$ denote the edge set of this $4$-Clique Detection input instance. By \cref{fact:clique_unbalanced_eq}, any combinatorial algorithm solving such a unbalanced instance of $4$-Clique Detection requires $m^{r+4/3-o(1)}$ time under the combinatorial $4$-Clique hypothesis. 

We create an undirected graph $G$ with ``source vertex'' $s$, ``sink vertex'' $t$, and $O(m)$ edges as follows. The graph consists of disjoint vertex parts (or, ``layers'' from left to right) \[\{s\} \cup V_B \cup U_B \cup U_D \cup U_C \cup V_C \cup \{t\},\]
where $|V_B|=|B|=m^{1/3}, |V_C|=|C|=m^{1/3}, |U_B|=|U_D|=|U_C|=|A|=m^{2/3}$. We assume a natural bijection between $B$ and $V_B$, which maps $b\in B$ to $b^{V_B}\in V_B$. Similarly, $c\in C$ maps to $c^{V_C}\in V_C$, and $a\in A$ maps to $a^{U_B}\in U_B, a^{U_C}\in U_C,a^{U_D}\in U_D$. The undirected edges in $G$, defined as follows, only connect vertices between adjacent layers.
\begin{itemize}
    \item For every $b\in B$, add an edge $(s,b^{V_B})$.
    \item For every $c\in C$, add an edge $(c^{V_C},t)$.
    \item For every $b\in B,a\in A$ such that $(b,a)\in E$, add an edge $(b^{V_B},a^{U_B})$.
    \item For every $c\in C,a\in A$ such that $(c,a)\in E$, add an edge $(a^{U_C},c^{V_C})$.
    \item For every $a\in A$, add two edges $(a^{U_B},a^{U_D})$ and $(a^{U_D},a^{U_C})$.
\end{itemize}

To solve the input 4-Clique Detection instance, we use Algorithm~\ref{algo:stconn}  with the help of an $st$-SubConn data structure on $G$ maintaining an active vertex subset $S$ that undergoes insertions and deletions. 

\begin{algorithm}
Initialize the $st$-SubConn data structure on $G$, letting $S$ contain all vertices. \\
\For{$d\in D$}{
\For{$a\in A$}
{
Let $a^{U_D} \in S$ if and only if $(a,d) \in E$.\label{line:ad}
}
\For{$b\in B$ such that $(b,d)\in E$}{
Let $b^{V_B}\in S$.\label{line:b1}\\
\For{$b' \in B\setminus \{b\}$}{
Let $(b')^{V_B}\notin S$.\label{line:b3}\\}
\For{$c\in C$}{
 Let $c^{V_C}\in S$ if and only if $(c,d),(c,b)\in E$.\label{line:b2}
}
\If{$s,t$ are connected in the induced subgraph of $S$ \label{line:quer}}{
\Return{True}
}
}
}
\Return{False}
	\caption{the reduction from $4$-Clique Detection to $st$-SubConn}
\label{algo:stconn}	
\end{algorithm}

Now we prove the correctness of Algorithm~\ref{algo:stconn} solving 4-Clique Detection. First, assume the input graph contains a 4-clique with vertices $a_0,b_0,c_0,d_0$. Then, at Line~\ref{line:quer} when $d=d_0,b=b_0$, we must have $a_0^{U_D}\in S$ (due to Line~\ref{line:ad}), $b_0^{V_B}\in S$ (due to Line~\ref{line:b1}), and $c_0^{V_C} \in S$ (due to Line~\ref{line:b2}). From the definition of $G$, we also know that $G$ contains edges $(b_0^{V_B},a_0^{U_B}), (a_0^{U_C},c_0^{V_C})$ since $(b,a),(a,c)\in E$. Hence, there is a path $s\to b_0^{V_B}\to a_0^{U_B} \to a_0^{U_D} \to a_0^{U_C} \to c_0^{V_C} \to t$ in the induced subgraph of $S$, and the query at Line~\ref{line:quer} will return True.

Conversely, suppose the query at Line~\ref{line:quer} returns True. We will show that it implies the existence of a 4-clique. Let $p$ be the shortest path from $s$ to $t$ in the induced subgraph of $S$. Then, the shortest path $p$ must visit $V_B$ at most once, since otherwise we could take the last visit $b^{V_B}_{last}\in V_B$ and directly go from $s$ to $b^{V_B}_{last}$ along the edge $(s,b^{V_B}_{last})$. Also, any path from $s$ to $t$ must use at least one vertex in $V_B$, since the removal of $V_B$ disconnects $s$ and $t$.  Hence, $p$ visits exactly one vertex $b^{V_B}$ in $V_B$. By a similar argument, $p$ visits exactly one vertex $c^{V_C}$ in $V_C$. 
Then, inspecting the structure of the middle layers $U_B,U_D,U_C$, we see that between $b^{V_B}$ and $c^{V_C}$ the path $p$ must visit $a^{U_B}\to a^{U_D}\to a^{U_C}$ for some $a\in A$.

From Lines~\ref{line:b1}-\ref{line:b3} we observe that $b^{V_B}$ is the only vertex in $S\cap V_B$, and $(b,d)\in E$. Then, from Line~\ref{line:b2} and $c^{V_C}\in S$ we know $(c,d),(c,b)\in E$. From Line~\ref{line:ad} and $a^{U_D}\in S$ we know $(a,d)\in E$. Finally, from the definition of $G$, we know $(b,a),(a,c)\in E$ from the existence of edges $(b^{V_B},a^{U_B}),(a^{U_C},c^{V_C})$. So $a,b,c,d$ form a 4-clique.

It remains to analyze the time complexity of Algorithm~\ref{algo:stconn}. Line~\ref{line:ad} contributes $O(|D|\cdot |A|) = O(m^{2/3+r})$ update operations in total. Lines~\ref{line:b1}-\ref{line:b2} contribute $O(|D|\cdot |B|\cdot (|B|+|C|)) = O(m^{2/3+r})$ update operations in total.  Line~\ref{line:quer} contributes $O(|D|\cdot |B|) = O(m^{1/3+r})$ query operations in total. Hence, the overall running time of this algorithm is asymptotically at most
\[ m^r + m^{2/3+r}\cdot m^{2/3-\eps} + m^{1/3+r}\cdot m^{1-\eps} \le m^{4/3+r-\eps},\]
contradicting the combinatorial 4-Clique hypothesis.
\end{proof}

\subsection{Dynamic 2-Pattern Document Retrieval}

Recall the definition of Dynamic $2$-Pattern Document Retrieval:
\patterndef*

In this section, we will show a combinatorial data structure for Dynamic $2$-Pattern Document Retrieval with $\tilde{O}(|T_1| + |T_2| + n^{2/3})$ query time and $\tilde{O}(n^{2/3})$ update time. Then we will show that these running times are essentially optimal under the combinatorial $4$-Clique hypothesis.

\begin{prop}
\label{prop:2pattern_upper}
There is a combinatorial data structure for the Dynamic $2$-Pattern Document Retrieval problem in $\poly(n)$ pre-processing time, $\tilde{O}(|T_1| + |T_2| + n^{2/3})$ query time and $\tilde{O}(n^{2/3})$ update time.
\end{prop}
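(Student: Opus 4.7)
Set threshold $B = n^{1/3}$. In $\poly(n)$ time build the generalized suffix tree $\mathcal{T}$ of $S_1,\dots,S_D$ augmented with Muthukrishnan's document-listing structure, so that for any pattern $T$ we can find the locus $v(T) \in \mathcal{T}$ in $O(|T|)$ time, know $|D_{v(T)}|$ in $O(1)$, and enumerate the documents in $D_{v(T)}$ in $O(|D_{v(T)}|)$ extra time. Additionally, for each document $i$, store the list $L_i$ of its suffix leaves sorted in Euler-tour order: this allows us to test ``$i \in D_v$?'' in $O(\log n)$ time by binary searching $L_i$ against the Euler-tour interval of $v$. Finally, keep a bit-array of the documents' on/off states.

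\textbf{Query algorithm.} Given $(T_1, T_2)$, locate $v(T_1), v(T_2)$ in $O(|T_1|+|T_2|)$ and WLOG assume $|D_{v(T_1)}| \le |D_{v(T_2)}|$. In the \emph{light case} $|D_{v(T_1)}| \le B$, list $D_{v(T_1)}$ via document-listing, drop the documents currently off, and for every survivor $i$ test $i \in D_{v(T_2)}$ using $L_i$ in $O(\log n)$ time; the cost is $\tilde O(|T_1|+|T_2|+B) = \tilde O(|T_1|+|T_2|+n^{1/3})$. In the \emph{heavy case} both loci contain more than $B$ documents, and we return the answer from a precomputed counter table (described next) in $O(1)$ time after the locus lookups.

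\textbf{Heavy counter table and updates.} Pick a canonical family $H$ of heavy representatives in $\mathcal{T}$ --- e.g.\ the tops of the heavy-path decomposition of the heavy subtree $\{v\in\mathcal{T}:|D_v|>B\}$ --- designed to satisfy: (a) $|H|=\tilde O(n/B)=\tilde O(n^{2/3})$; (b) every heavy query locus $v$ maps in $O(1)$ to a representative $r\in H$ with $D_v = D_r$; (c) each document $i$ lies in at most $K=\tilde O(n^{1/3})$ elements of $H$. For every pair $(u,w)\in H\times H$ maintain the counter $c(u,w) = |D_u \cap D_w \cap \mathrm{On}|$; the initial table has $\tilde O(n^{4/3})$ entries, computable in polynomial time. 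A toggle of document $i$ updates the on/off bit and then adjusts $c(u,w)$ by $\pm 1$ for every $(u,w)\in H\times H$ with $i\in D_u \cap D_w$, taking $O(K^2)=\tilde O(n^{2/3})$ operations.

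\textbf{Main obstacle.} The delicate step is designing the canonical family $H$ simultaneously achieving (a) and (c); a naive choice such as ``all heavy suffix-tree nodes'' fails because a single document can lie in $\Omega(n)$ heavy ancestors along a long suffix-tree path (e.g.\ $S_i = \mathtt{aaaa}\cdots$). The fix is the suffix-tree-specific heavy-path choice, together with a charging argument that assigns each of document $i$'s suffix leaves to only $\tilde O(1)$ heavy-path tops, yielding the target $K=\tilde O(n^{1/3})$ per-document load and therefore the $\tilde O(n^{2/3})$ update time. Everything else --- the light-case enumeration, the $O(\log n)$ Euler-tour membership test, and maintaining the on/off bit array --- is a routine use of standard suffix-tree primitives.
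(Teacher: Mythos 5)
The gap is exactly the step you flag as the ``main obstacle,'' and it is not merely unproven: the three properties you require of $H$ cannot be achieved simultaneously. Property (b) demands a representative with \emph{exactly} the same document set as every heavy locus, and property (c) demands that each document lie in at most $\tO(n^{1/3})$ representatives; these are incompatible. Concretely, with $B=n^{1/3}$ and $m=\Theta(\sqrt{n})$, take one document $\mathtt{a}^{m}$, the ``departing'' documents $\mathtt{a}^{j}\mathtt{b}$ for $j=1,\dots,m-1$, and $n^{1/3}+1$ further documents each equal to $\mathtt{a}^{m}$ (duplicate strings are allowed); the total length is $\Theta(m^2)+O(n^{5/6})=\Theta(n)$. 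The loci $v_j$ of the patterns $\mathtt{a}^{j}$, $j=1,\dots,m$, are all heavy (each $D_{v_j}$ contains the $n^{1/3}+1$ long documents), their document sets are pairwise distinct ($\mathtt{a}^{j}\mathtt{b}\in D_{v_j}\setminus D_{v_{j+1}}$), and all of them contain the document $\mathtt{a}^{m}$. So any $H$ satisfying (b) has $\Omega(\sqrt{n})$ members whose sets contain that document, and toggling it forces $\Omega(n)$ counter updates under your per-pair maintenance scheme, destroying the $\tO(n^{2/3})$ update bound. The heavy-path-top ``fix'' does not rescue this: all the $v_j$ lie on one ascending path, and mapping $v_j$ to the path top gives a representative with a strictly larger document set, so the heavy-case answer $c(r_1,r_2)$ would overcount (it would include documents such as $\mathtt{a}^{j'}\mathtt{b}$ with $j'<j$ that do not contain the queried pattern $\mathtt{a}^{j}$).

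The underlying issue is that you are aiming for $O(1)$-time heavy-case queries, which is stronger than needed: the target query bound is $\tO(|T_1|+|T_2|+n^{2/3})$, and the paper spends that budget on the heavy part at query time instead of precomputing pairwise counters. Its proof uses the Ferragina--Koudas--Muthukrishnan--Srivastava reduction, which turns the instance into an array of $O(n)$ colored entries and each query into two intervals; colors occurring at most $n^{1/3}$ times are handled by storing, per on-color, the $O(n^{2/3})$ quadruples of consecutive occurrence positions in a 4D range tree (so a toggle costs $\tO(n^{2/3})$ and one 4D range-count query counts the light on-colors common to both intervals), while at query time it simply enumerates the at most $O(n^{2/3})$ heavy on-colors and tests each against both intervals in $\tO(1)$ time via a per-color balanced search tree. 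Your light-case enumeration with the Euler-tour membership test is fine, but the heavy case needs to be reworked so that the cost of heavy documents/colors is paid per query (as in the paper) rather than via a pairwise table maintained under updates.
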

\begin{proof}
In \cite{ferragina2003two}, 
Ferragina, Koudas, Muthukrishnan, and Srivastava reduced the  static $2$-Pattern Document Retrieval problem to the  Common Colors Query problem via a combinatorial reduction. Given the input strings, their reduction can produce in $\tO(n)$ time an array $A$ of size $O(n)$ whose elements are identified as colors  (and these colors correspond to the indices of input strings). For every query $(T_1, T_2)$, their reduction can produce two contiguous intervals of the array $A$ in $\tO(|T_1| + |T_2|)$ time. Then the set of input strings that contain both $T_1$ and $T_2$ has a bijection to the set of unique colors these two  intervals both contain.

Their reduction is also applicable to Dynamic $2$-Pattern Document Retrieval.
The colors of the array correspond to the input strings of the $2$-Pattern Document Retrieval instance, so turning on/off a string corresponds to turning on/off a color, and each query of the Dynamic $2$-Pattern Document Retrieval problem asks the number of colors that are turned on and the two given intervals share.

Based on their reduction, we construct the following data structure for Dynamic $2$-Pattern Document Retrieval. Given the input strings $S_1, \ldots, S_D$, we create the array $A$ in the Common Colors Query problem via their reduction in $\tO(n)$ time. We then aim to maintain the following sub-data structures during pre-processing or after each update:
\begin{itemize}
    \item For every color $c$ that appears at most $n^{1/3}$ times and is turned on, let $i_1, \ldots, i_k$ be all indices in $A$ that have color $c$, in increasing order. We also additionally set $i_{k+1} = |A|+1$ for notational convenience. Then for every pair $j_1, j_2 \in [k]$, we store a quadruple of integers $(i_{j_1}, i_{j_1+1}, i_{j_2}, i_{j_2+1})$ in a $4$D range tree $\mathcal{T}$. 
    
    It clearly only takes $\poly(n)$ time to create $\mathcal{T}$ during pre-processing. For each update, we turn on or off at most one color, and each color that appears at most $n^{1/3}$ times needs to store $O(n^{2/3})$ quadruples in $\mathcal{T}$, so it takes $\tO(n^{2/3})$ time to maintain $\mathcal{T}$ after each update. 
    \item For every color $c$, we also maintain a balanced search tree $\mathcal{B}_c$ that contains all the indices in the array with color $c$. 
    
    Clearly, it takes $\poly(n)$ time to create all $\mathcal{B}_c$ during pre-processing, and they don't need to be updated after each update. 
\end{itemize}

Now we describe how to handle a query given these sub-data structures. 
For each query with strings $T_1, T_2$, we first use the reduction in \cite{ferragina2003two} to compute two contiguous intervals $I_1 = [l_1, r_1]$ and $I_2=[l_2, r_2]$ in $\tO(|T_1| + |T_2|)$ time for the Common Colors Query problem. By the correctness of their reduction, it remains to determine the number of unique  colors these two intervals both contain. 
Then we  query the $4$D range tree $\mathcal{T}$ to count the number of quadruples in the box $[l_1, r_1] \times [r_1 + 1, \infty] \times [l_2, r_2] \times [r_2 + 1, \infty]$. It is not hard to see that this count equals the number of unique colors that appear at most $n^{1/3}$ times, is turned on, and are contained by both $I_1$ and $I_2$. It remains to consider colors that appear more than $n^{1/3}$ times. 
We iterate over every color $c$ that appears more than $n^{1/3}$ times and is turned on, and check whether $I_1$ and $I_2$ both contain color $c$. There are at most $O(n^{2/3})$ such colors, and each color $c$ can be checked in $\tilde{O}(1)$ time using $\mathcal{B}_c$. 

Overall, our data structure has $\tO(n^{2/3})$ update time and $\tO(|T_1| + |T_2| + n^{2/3})$ query time. 
\end{proof}

\documentLowerBound*

\begin{proof}
Suppose there is a combinatorial data structure for Dynamic $2$-Pattern Document Retrieval in $\poly(n)$ pre-processing time, $O(n^{2/3-\eps})$ query time and $O(n^{2/3-\eps})$ update time. Let the pre-processing time of the data structure be $O(n^t)$ for some fixed constant $t$. 

We reduce from an unbalanced instance of $4$-Clique Detection, where the $4$ vertex parts $A, B, C, D$ have sizes $n^{1/3}, n^{1/3}, n^t, n^{2/3}$ respectively. By \cref{fact:clique_unbalanced_eq}, combinatorial algorithms for such a unbalanced instance of $4$-Clique Detection requires $n^{t+4/3-o(1)}$ time, under the  combinatorial $4$-Clique hypothesis.

The number of strings in the Dynamic $2$-Pattern Document Retrieval instance will be $n^{2/3}$, corresponding to the vertices in $D$, and the alphabet of this instance will have size $2n^{1/3}$, with symbols corresponding to the vertices in $A \cup B$. 

For every $d \in D$, we create a string $S_d$ of length $|\mathcal{N}_{A\cup B}(d)|$ that contains one symbol for each neighbor of $d$ in $A \cup B$. The orders of these symbols can be arbitrary. Clearly, the total length of these strings is $O(|D|\cdot (|A|+|B|))\le O(n)$, so we can use the assumed data structure for Dynamic $2$-Pattern Document Retrieval to pre-process these strings in $O(n^t)$ time. 

We perform a phase for each $c \in C$. At the beginning of each phase, we perform $O(n^{2/3})$ updates on the data structure so that for every $d\in D$, $S_d$ is turned on if and only if $d$ is a neighbor of $c$. Then for every pair $(a, b) \in A \times B$, we query the data structure to determine whether there is any $S_d$ that is turned on and contains both symbols $a$ and $b$. Clearly, such a string exists if and only if $a, b, c$ have a common neighbor. 

Once we determine whether $a, b, c$ have a common neighbor for each $(a, b, c) \in A \times B \times C$, we can easily determine if the $4$-Clique Detection instance has a $4$-clique in $O(|A||B||C|)=O(n^{t+2/3})$ time. 

In total, we perform $O(n^{t+2/3})$ updates and queries on the data structure, so the overall running time of the reduction is $O(n^t + n^{t+2/3} \cdot n^{2/3-\eps}) = O(n^{t+4/3 -\eps})$. This contradicts the combinatorial $4$-Clique hypothesis, and thus the assumed data structure cannot exist under the combinatorial $4$-Clique hypothesis.
\end{proof}

\subsection{2D Orthogonal Range Color Counting}

\label{sec:color}

In this section, we show an algorithm for the Dynamic $2$D Orthogonal Range Color Counting problem and a matching combinatorial conditional lower bound. 

\begin{prob}[Dynamic $2$D Orthogonal Range Color Counting]
Maintain a set of at most $n$ colored points on the $2$D plane, and support the following operations:
\begin{itemize}
    \item Insert or delete a point;
    \item Given $(x_1,x_2,y_1,y_2)$, output the number of distinct colors appearing in the rectangle $[x_1,x_2]\times [y_1,y_2]$.
\end{itemize}
\end{prob}

\begin{prop}
\label{prop:color_upper}
     There exists a combinatorial data structure for Dynamic $2$D Orthogonal Range Color Counting with $\poly(n)$ pre-processing time and $\tO(n^{2/3})$ update and query time. 
\end{prop}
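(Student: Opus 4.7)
The plan is to dynamize the static 2D orthogonal range color counting structure of Kaplan, Rubin, Sharir, and Verbin~\cite{KaplanRSV08} by combining their tradeoff with a heavy-light decomposition on colors and a standard rebuild-with-buffer dynamization. I would first recall the KRSV static tradeoff and argue that setting its parameter appropriately yields a combinatorial static structure with $\poly(n)$ preprocessing and query time $\tilde O(n^{2/3})$.

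The construction maintains two kinds of substructures with threshold $B = n^{1/3}$ on the color frequency. For each \emph{heavy} color (more than $B$ currently active points; hence at most $n/B = n^{2/3}$ of them), I would store its active points in a 2D range tree supporting $\tilde O(1)$-time updates and $\tilde O(1)$-time ``is there any active point of this color in $R$?'' queries. For the \emph{light} colors (each with at most $B$ active points), I would build a single global KRSV static structure on all light points and rebuild it every $\Delta = n^{2/3}$ updates, keeping a buffer of the at most $\Delta$ operations performed since the last rebuild. A query with rectangle $R$ then proceeds in three passes: (i) sweep the heavy colors, contributing $\tilde O(n^{2/3})$; (ii) query the KRSV snapshot to obtain a count $C_0$ of distinct light colors in $R$ as of the last rebuild; (iii) walk the buffer and, using per-color 2D auxiliary trees that always reflect the current live state of every color, check in $\tilde O(1)$ per buffered entry whether that operation flips the ``present in $R$'' status of its color relative to the snapshot, correcting $C_0$ accordingly.

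The parameters balance to $\tilde O(n^{2/3})$ per update (dominated by the amortized KRSV rebuild) and $\tilde O(n^{2/3})$ per query (dominated by the heavy-color sweep and the buffer walk), with $\poly(n)$ one-time preprocessing. Colors that migrate between the heavy and light sides are handled by rebuilding the affected per-color structure on transition, amortized against the $\Omega(B)$ updates that caused the transition. The main obstacle I expect is that distinct-color counting is \emph{not} decomposable under insertions and deletions: a buffered deletion of a single point may or may not remove its color from the query rectangle depending on whether other active points of the same color still lie in $R$, and the KRSV snapshot alone cannot tell. The per-color auxiliary 2D trees are designed to resolve exactly this, converting each buffered correction into an $\tilde O(1)$ lookup against the up-to-date per-color state.
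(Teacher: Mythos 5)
Your high-level plan is the same as the paper's: build the Kaplan--Rubin--Sharir--Verbin static structure with query time $\tO(n^{2/3})$ (combinatorial pre-processing $\tO(n^{4/3})$), rebuild it every $\tO(n^{2/3})$ updates, and fix up the stale distinct-color count at query time using per-color range trees over the colors touched since the last rebuild. However, the specific correction mechanism you describe has a gap, and it is exactly at the spot you flag as the main obstacle. To correct the snapshot count you must, for each affected color $c$, compare ``was $c$ present in $R$ at snapshot time'' with ``is $c$ present in $R$ now.'' Your auxiliary trees reflect only the \emph{current} live state, so they answer the second question but not the first: if the buffer contains both insertions and deletions of color $c$, the snapshot-time presence of $c$ in $R$ cannot be read off in $\tO(1)$ from a current-state tree plus a single buffered entry. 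Moreover, doing the correction ``per buffered entry'' rather than per affected color risks adjusting the count several times for the same color. The paper avoids both issues by also freezing copies $\mathcal{T}_c^{\text{old}}$ of the per-color trees at each rebuild, so the correction is one old-vs-new emptiness comparison per touched color (at most $\tO(n^{2/3})$ of them). Your scheme can be patched the same way (or with timestamped/counting queries reconciled against the buffer), but as written the step does not go through.

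A second, smaller problem is the heavy/light machinery, which is both unnecessary and not quite sound as stated. The amortization ``against the $\Omega(B)$ updates that caused the transition'' fails without hysteresis: a color sitting at the threshold $B$ can flip between heavy and light on every single update. (This is harmless for the time bound only because rebuilding one light color's tree costs $\tO(B)=\tO(n^{1/3})$ anyway, but then the amortization argument is not what saves you.) More importantly, once classifications can change between the snapshot and the query, you must reconcile the snapshot's light-color count with the current heavy sweep -- e.g.\ a color that was light at rebuild time and is heavy now may be counted both by the KRSV snapshot and by the sweep, and a color that was heavy at rebuild time and is light now appears in neither structure -- and this reconciliation again needs snapshot-time information for those colors. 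All such colors are touched by the buffer, so this is fixable, but none of it is needed: as in the paper's proof, building the static structure on \emph{all} points and correcting only the buffered colors against old/new per-color trees gives the same $\tO(n^{2/3})$ bounds with no color-frequency threshold at all.
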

\begin{proof}
Kaplan, Rubin, Sharir, and Verbin \cite{KaplanRSV08}  gave a data structure for the static version of $2$D Orthogonal Range Color Counting with  a trade-off between pre-processing time and query time. For any trade-off parameter $1 \le X \le n$, they gave a data structure with $\tO(X)$ query time and 
$$
  \begin{cases} 
   \tO\left( \frac{n^{(\omega+1)/2}}{X^{(\omega - 1) / 2}} \right) & \text{if } X \geq n^{\frac{\omega-1}{\omega+1}}, \\
   \tO\left( \frac{n^{\frac{2-\alpha \beta + 2 \beta}{\beta + 1}}}{X^{\frac{2-\alpha \beta}{\beta + 1}}}\right)       & \text{if } n^{\frac{\alpha / 2}{\alpha / 2 + 1}}  \le X < n^{\frac{\omega-1}{\omega+1}}, \\
   \tO\left( n^2/X^2 \right) & \text{if } n^{\frac{\alpha / 2}{\alpha / 2 + 1}}  > X
  \end{cases}
$$
pre-processing time, where $\alpha \ge 0.31389$ \cite{gall2018improved} is defined as $\sup \{t \ge 0: \omega(1, t, 1) = 2\}$, and $\beta$ is defined as $\frac{\omega-2}{1-\alpha}$. When restricted to combinatorial algorithms, currently $\omega = 3, \alpha = 0$ and $\beta = 1$. Therefore, their pre-processing time when restricted to combinatorial algorithms becomes $\tO(n^2/X)$ no matter what $X$ is. 

We maintain the following sub-data structures:
\begin{enumerate}
    \item For each color $c$, we maintain a $2$D range tree $\mathcal{T}_c$, storing all the coordinates of points with color $c$. It can be updated in $\tO(1)$ time per update. 
    \item After every $\tO(n^{2/3})$ updates, we rebuild the static data structure $\mathcal{D}$ from \cite{KaplanRSV08} with trade-off parameter $X =\tO(n^{2/3})$. We also build $\mathcal{T}_c^{\text{old}}$, which are copies of $\mathcal{T}_c$ at the time when we rebuild $\mathcal{D}$. It takes $\tO(n^{2/3})$ amortized time per update to maintain $\mathcal{D}$ and $\mathcal{T}_c^{\text{old}}$. 
\end{enumerate}

For each $2$D Orthogonal Range Color Counting query, we first feed the query to data structure $\mathcal{D}$ and get an outdated count. Then we enumerate all colors $c$ of points that are inserted or deleted after we last build $\mathcal{D}$, and check whether the orthogonal range contains a point in $\mathcal{T}_c$ and $\mathcal{T}_c^{\text{old}}$ respectively. If the result is different for $\mathcal{T}_c$ and $\mathcal{T}_c^{\text{old}}$, we update the count accordingly. 

Thus, we have a data structure with $\tO(n^{2/3})$ amortized update time and $\tO(n^{2/3})$ query time. The update time can be easily made to be worst-case by applying the Overmars's \textit{global rebuilding} technique \cite{overmars1983design}. 
\end{proof}

Recall Theorem~\ref{thm:color}:
\colorLowerBound*

\begin{proof}
Suppose there is a combinatorial data structure for Dynamic $2$D Orthogonal  Range Color Counting in $\poly(n)$ pre-processing time, $O(n^{2/3-\eps})$ query time and $O(n^{2/3-\eps})$ update time. Let the pre-processing time of the data structure be $O(n^t)$ for some fixed constant $t$. 

We reduce from an unbalanced instance of $4$-Clique Detection, where the $4$ vertex parts $A, B, C, D$ have sizes $n^{1/3}, n^{1/3}, n^t, n^{2/3}$ respectively. By \cref{fact:clique_unbalanced_eq},  combinatorial algorithms for such an unbalanced instance of $4$-Clique requires $n^{t+4/3-o(1)}$ time under 
the  combinatorial $4$-Clique Detection hypothesis.

In our reduction we will create an instance where  points with different colors could share the same 2D coordinate; this could be easily avoided by adding small perturbations to the coordinates, which we omit here for simplicity.

For notational convenience, we identify the vertex set $A$ with the integer set $[|A|]$, and similarly identify the vertex set $B$ (or $C$,$D$) with the integer set $[|B|]$ (or $[|C|],[|D|]$).
We initialize a set of $|A||D|+|B||D|$ colored 2D points as follows. For every $a\in A, d\in D$ such that $a$ and $d$ are adjacent, we create a point with coordinate $(a,|A|+1-a)$ and color $d$. Similarly, for every $b\in B, d\in D$ such that $b$ and $d$ are adjacent, we create a point with coordinate $(-b,-|B|-1+b)$ and color $d$.   We add $O(n)$ points in total and thus building the assumed data structure for Dynamic $2$D Orthogonal  Range Color Counting on these points takes $O(n^t)$ time. 

Then for every $c \in C$, we start a phase by performing the following operations on the data structure. 
First, for every $a \in A, b \in B$, let $q_{ab}$ denote the answer of querying the rectangle $x_1=-b,y_1=-|B|-1+b,x_2=a,y_2=|A|+1-a$.
Then, for every $d\in D$ that is adjacent to $c\in C$, we add a point with coordinate $(0,0)$ and color $d$.  Then for every pair $a \in A, b \in B$, let $q_{abc}$ denote the answer of querying (again) the rectangle $x_1=-b,y_1=-|B|-1+b,x_2=a,y_2=|A|+1-a$, and let $q_{ac}$ denote the answer of querying  the rectangle $x_1=0,y_1=0,x_2=a,y_2=|A|+1-a$, and $q_{bc}$ denote the answer of querying  the rectangle $x_1=-b,y_1=-|B|-1+b,x_2=0,y_2=0$. Observe that, by construction, $q_{abc}$ equals the number of vertices in $D$ that are adjacent to at least one of $a,b,c$, $q_{ab}$ equals the number of vertices in $D$ that are adjacent to at least one of $a, b$, and similarly for $q_{ac}$ and $q_{bc}$. Let $q_{a}$ (and $q_b,q_c$) denote the number of neighbors of $a$ (and $b,c$) in $D$. Then by the inclusion-exclusion principle, the number of vertices in $D$ that are simultaneously adjacent to $a,b,c$ equals $q_{abc}-q_{ab}-q_{bc}-q_{ac}+q_a+q_b+q_c$. We return YES if this number is non-zero and $a,b,c$ form a triangle. 
After we are done with $c$, we remove all the points added at the beginning of the phase for $c$. If we have not returned YES after we finish all the phases for all $c$, we return NO. 

The correctness of the reduction is immediate since we essentially determined whether each triple $(a, b, c)$ has a common neighbor in $D$. 

The total number of updates is $O(|C||D|) = O(n^{t+2/3})$ and the total number of queries is $O(|A||B||C|) = O(n^{t+2/3})$. Therefore, the running time of the reduction is $O(n^t + n^{t+2/3} \cdot n^{2/3-\eps}) = O(n^{t+4/3-\eps})$, contradicting the combinatorial $4$-Clique hypothesis. Therefore, such an efficient combinatorial data structure for Dynamic $2$D Orthogonal  Range Color Counting cannot exist under the combinatorial $4$-Clique hypothesis, leading to the claimed lower bound. 
\end{proof}

\section{Higher Pre-Processing Time Lower Bounds}
\label{sec:prepo}

In this section, we improve the previous combinatorial lower bounds for $st$-Reach, Dynamic Strong Connectivity, and Dynamic Bipartite Perfect Matching, by showing higher pre-processing lower bounds. 

\begin{prob}[$st$-Reach]
Given a directed graph $G=(V, E)$ with $n$ vertices and two fixed nodes $s,t \in V$, we need to support edge insertions and edge deletions, and querying whether $t$ is reachable from $s$. 
\end{prob}

\begin{prob}[Dynamic Strong Connectivity]
Given a directed graph $G=(V, E)$ with $n$ vertices, we need to support edge insertions and edge deletions, and querying whether the graph is strongly connected. 
\end{prob}

\begin{prob}[Dynamic Bipartite Perfect Matching]
Given a bipartite graph $G=(V, E)$ with $n$ vertices, we need to support edge insertions and edge deletions, and querying whether the graph has a perfect matching. 
\end{prob}

Recall \cref{thm:polypre}:
\polypre*

\begin{proof}
We will only prove the statement for the $st$-Reach problem. The statements for Dynamic Strong Connectivity and Dynamic Bipartite Perfect Matching immediately follow via the reductions from $st$-Reach to Dynamic Strong Connectivity and Dynamic Bipartite Perfect Matching in  \cite{AbboudW14}.

Suppose there is a combinatorial data structure for $st$-Reach with $\poly(n)$ pre-processing time, $O(n^{2-\eps})$ query time and $O(n^{2-\eps})$ update time.
Let the pre-processing time of the data structure be $O(n^r)$ for some fixed constant $r$. 

We reduce from an unbalanced instance of $4$-Clique Detection, where the $4$ vertex parts $A, B, C, D$ have sizes $n, n, n, n^{r}$ respectively.
By \cref{fact:clique_unbalanced_eq},  combinatorial algorithms for such an unbalanced instance of $4$-Clique Detection requires $n^{r+3-o(1)}$ time under the combinatorial $4$-Clique hypothesis.

To solve this 4-Clique Detection instance, we create an $st$-Reach instance on a directed graph with eight layers of vertices, from left to right: \[\{s\},A_1,B_1,B_2,C_1,C_2,A_2,\{t\},\] where $s,t$ are the fixed source node and sink node respectively, and $A_1,A_2$ (resp.\ $B_1,B_2$ and $C_1,C_2$) are copies of the vertex set $A$ (resp.\ $B$ and $C$) of the 4-Clique Detection instance.
The edges in this directed graph will only connect adjacent layers from left to right. Between $B_2$ and $C_1$, we copy the edges between $B$ and $C$ in the 4-Clique Detection instance. Between $A_1$ and $B_1$, we copy the edges between $A$ and $B$ in the 4-Clique Detection instance. Between $C_2$ and $A_2$, we copy the edges between $C$ and $A$ in the 4-Clique Detection instance. We use the pre-processing stage of the $st$-Reach data structure to pre-process this directed graph in $O(n^r)$ time.

We iterate over all $d\in D$ and do the following for each $d$. For every $b\in B$, we use the insertion/deletion operation of the $st$-Reach data structure to connect an edge from $b\in B_1$ to $b\in B_2$ if and only if $d$ is adjacent to $b$. Similarly, we connect an edge from $c\in C_1$ to $c\in C_2$ if and only if $d$ is adjacent to $c$.  Then, for every $a\in A$ that is adjacent to $d$, we do the following: add an edge from $s$ to $a\in A_1$ and an edge from $a\in A_2$ to $t$, ask whether $t$ is reachable from $s$, and then remove the two edges just added.  Observe that there is a path $s\to a \to b\to b\to c\to c\to a\to t$ if and only if $(a,b,c,d)$ forms a 4-clique in the $4$-Clique Detection instance.

The above reduction performs one pre-processing step, $O(|D|\cdot n)$ edge updates and $O(|D|\cdot n)$ queries. Hence, we can solve the 4-Clique Detection instance in $O(n^{r} + n^{r}\cdot n\cdot n^{2-\eps})$ time, contradicting the $n^{r+3-o(1)}$ lower bound.
\end{proof}

\section{Geometric Problems and OuMv\texorpdfstring{$_k$}{k} Hypothesis}
In this section, we will show OuMv$_k$-based conditional lower bounds for Dynamic Skyline Points Counting, Dynamic Klee's measure for unit hypercubes, and Chan’s Halfspace problem. 
For certain low-dimensional cases of these problems, our lower bounds are actually based on the OMv hypothesis (i.e., OuMv$_2$). We remark that even these OMv-based lower bounds for the low-dimension problems were  not known previously in the literature. 

Starting from this section, all conditional lower bounds hold for all algorithms (not necessarily combinatorial algorithms). 
\subsection{Skyline Points Counting}

In this section, we study the Dynamic Skyline Points Counting problem. We first give its formal definition. 

\begin{definition}
Given a set of points $P$ in $\mathbb{R}^d$, a point $p \in P$ is called a \emph{skyline point} or \emph{maximal point} if there does not exist another point $q \in P\setminus \{p\}$ such that $p_i \le q_i$ for every $i \in [d]$ (a.k.a. $q$ dominates $p$). 
\end{definition}

\begin{prob}[Dynamic Skyline Points Counting]
For a constant integer parameter $d \ge 1$, maintain a data structure for a set of at most $n$ points in $\mathbb{R}^d$ and support inserting a point, deleting a point, and querying the number of skyline (maximal) points. 
\end{prob}

We first show an upper bound for Dynamic Skyline Points Counting in $\R^{2k-1}$ in the semi-online model. We need the following two lemmas.

\begin{lemma}[{\cite[Lemma 2.1]{dyn:Chan03}}]
Consider a problem $\Pi$ with the following property, where $\alpha \ge 1$ and $0 < \beta \le 1$ are constants: there exists a data structure that can pre-process a set $S$ of $n$ points in $\tilde O(n)$ time, such that given any additional set $S'$ of $b$ points, the data structure can solve $\Pi$ on the set $ S \cup S'$
(block query) in $\tilde O(b^\alpha n^{1-\beta})$ time.

Then, we can solve $\Pi$ on a set of $n$ points under semi-online updates in $\tilde O(n^{1-\beta/(1+\alpha)} )$ time per update.
\label{lem:semimi}
\end{lemma}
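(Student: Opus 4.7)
The plan is to exploit the semi-online assumption via a phase-based global rebuilding scheme. The key advantage of the semi-online model is that, at the moment a point is inserted, we already know when it will be deleted; this lets us peel out future deletions in advance and reduce everything to the hypothesized "add only" block query. First I would partition the sequence of operations into phases of $b$ consecutive updates, where $b$ is a parameter to be tuned at the end. Consider one phase, and let $P$ be the current point set at the start of the phase (with $|P|\le n$). Because of the semi-online promise, at the start of the phase I know exactly which points of $P$ will be deleted at some moment during this phase; call that set $P_1\subseteq P$, and let $P_0 = P\setminus P_1$ be the ``surviving'' points. Clearly $|P_1|\le b$.

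Next I would run the hypothesized preprocessor on $P_0$ in $\tilde O(|P_0|)=\tilde O(n)$ time, and maintain an explicit auxiliary set $A$ of ``extras'': the points currently alive but not in $P_0$. Throughout the phase, $A$ consists of the not-yet-deleted members of $P_1$ together with the not-yet-deleted points inserted during the current phase, so $|A|\le |P_1|+b\le 2b$, and the true current point set is always exactly $P_0\cup A$. Insertions and deletions during the phase reduce to $O(1)$ bookkeeping on $A$. To answer a query, I would invoke the block query on the preprocessed $P_0$ with the extras $A$, solving $\Pi$ on $P_0\cup A$ in $\tilde O(|A|^{\alpha}|P_0|^{1-\beta})=\tilde O(b^{\alpha}n^{1-\beta})$ time.

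Amortizing the $\tilde O(n)$ preprocessing over the $b$ operations of the phase gives $\tilde O(n/b)$ per operation, and the block query contributes $\tilde O(b^{\alpha}n^{1-\beta})$ per operation, for a total cost of $\tilde O\bigl(n/b+b^{\alpha}n^{1-\beta}\bigr)$ per operation. Choosing $b=\lceil n^{\beta/(1+\alpha)}\rceil$ balances the two terms and yields the claimed $\tilde O(n^{1-\beta/(1+\alpha)})$ bound. To accommodate the fact that the set size varies over the course of the algorithm, I would invoke a standard global-rebuilding envelope: whenever the current size crosses a power of two I restart the whole scheme with the new value of $n$, paying only a constant-factor overhead.

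The conceptually crucial step, and the only nonroutine one, is the preemptive separation of $P_1$ from $P_0$ at the start of each phase: the hypothesized block query only supports augmenting a preprocessed set, not deleting from it, and the semi-online assumption is exactly what enables this separation. I do not anticipate substantive technical obstacles beyond this; the handling of varying $n$ and the interleaving of queries with updates are standard, and since the lemma's statement requires only an amortized bound there is no need to deamortize.
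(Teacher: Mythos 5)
Your proposal is correct: the phase-based scheme (peel off in advance the at most $b$ points that will die during the phase, preprocess the survivors in $\tilde O(n)$ time, buffer all other points in a set of size $O(b)$, answer each update with a block query, and balance $n/b$ against $b^{\alpha}n^{1-\beta}$ by taking $b=\lceil n^{\beta/(1+\alpha)}\rceil$) is exactly the argument behind the cited result, which the paper itself does not reprove but takes from Chan's Lemma 2.1. The amortization and the global-rebuilding handling of varying $n$ are also as in the standard treatment, so there is nothing to fix.
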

\begin{lemma}[{\cite[Theorem 2.1]{KaplanRSV08}}]
\label{lem:decompo}
Let $A$ be a set of $n$ points in $\R^{2k-1}$. For $a = (x_1,x_2,\dots,x_{2k-1})\in A$, let $Q(a)$ denote the orthant $(-\infty,x_1]\times (-\infty,x_2]\times \cdots \times (-\infty,x_{2k-1}]$. 

We can decompose $\bigcup_{a\in A} Q(a)$ into $O(n^{k-1})$ pairwise disjoint boxes in $\tilde O(n^{k-1} )$ time.
\end{lemma}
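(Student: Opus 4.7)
The plan is to prove the lemma by induction on $k$, using a sweep along the last coordinate to reduce a $(2k-1)$-dimensional decomposition to incremental decompositions of $(2k-2)$-dimensional dominance regions. The base case $k = 1$ is immediate, since in $\R^1$ the union $\bigcup_{a\in A}Q(a) = (-\infty, \max_a a_1]$ is already a single box.

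For the inductive step, I would sort the points in decreasing order of $x_{2k-1}$ as $a_1,\dots,a_n$ with $y_i := (a_i)_{2k-1}$, and partition $\R^{2k-1}$ into slabs $\sigma_i := \R^{2k-2}\times (y_{i+1}, y_i]$ (where $y_{n+1} = -\infty$). Writing $P(a_j)$ for the projection of $Q(a_j)$ onto the first $2k-2$ coordinates, the restriction of $\bigcup_a Q(a)$ to $\sigma_i$ equals the cylinder $\bigl(\bigcup_{j\le i}P(a_j)\bigr)\times (y_{i+1}, y_i]$. The strategy is to maintain a disjoint box decomposition of the $(2k-2)$-dim union $\bigcup_{j\le i}P(a_j)$ incrementally as $i$ grows: whenever a new $(2k-2)$-dim box is created at step $i$, we emit a $(2k-1)$-dim box obtained by tensoring with the interval of the last coordinate during which that box remains alive; whenever a $(2k-2)$-dim box gets fully covered by a later insertion, we cap its last-coordinate extent at the current sweep height. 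Every $(2k-1)$-dim box in the final output is then in bijection with a single creation event in the $(2k-2)$-dim incremental process.

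The crux is to arrange the $(2k-2)$-dim incremental decomposition so that the total number of box creations across all $n$ insertions is $O(n^{k-1})$. Under the natural scheme where, at step $i$, we decompose only the newly exposed region $P(a_i) \setminus \bigcup_{j<i}P(a_j)$ and leave all previously created boxes untouched, I would argue that this newly exposed region admits a disjoint decomposition of size $O(i^{k-2})$ by applying the inductive hypothesis to a $(2k-3)$-dim cross-section: the boundary of $P(a_i) \setminus \bigcup_{j<i}P(a_j)$, sliced at a suitably chosen coordinate, reduces to a $(2k-3)$-dim dominance envelope of at most $i-1$ points, which by the inductive hypothesis in $\R^{2k-3}$ decomposes into $O(i^{k-2})$ boxes. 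Summing gives $\sum_{i=1}^n O(i^{k-2}) = O(n^{k-1})$, matching the target box count; the $\tilde O(n^{k-1})$ running time follows by driving the sweep with persistent multi-level range trees that support the necessary $(2k-3)$-dim skyline queries at $\tilde O(1)$ overhead per emitted box.

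The main obstacle I anticipate is precisely this amortization argument. A naive analysis that bounds the newly exposed region via the $(2k-2)$-dim envelope complexity would give $O(i^{k-1})$ per insertion and thus $O(n^k)$ total, which is one factor of $n$ too many. The rescue comes from the odd-dimension projection provided by the sweep coordinate: the cross-section of the newly exposed boundary drops one dimension and lands in $\R^{2k-3}$, which is the correct regime for the inductive hypothesis to apply. Pinning down the cross-sectional argument rigorously---so that previously created boxes are never destroyed or subdivided, and so that the count genuinely telescopes to $O(n^{k-1})$ rather than leaking extra factors---is where the bulk of the technical work will go.
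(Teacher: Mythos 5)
First, note that the paper itself does not prove this lemma: it is imported verbatim as Theorem~2.1 of \cite{KaplanRSV08}, so there is no in-paper argument to compare against and your proposal has to stand on its own. Its sweep-and-extrude skeleton is sound: since the cross-section $\bigcup_{j\le i}P(a_j)$ only grows as the sweep descends, every $(2k-2)$-dimensional box created at step $i$ can simply be extruded over $(-\infty,y_i]$ and never touched again. (Your side remark about ``capping'' a box when it gets fully covered is both unnecessary and inconsistent with ``leave all previously created boxes untouched'': capping without re-covering the capped portion would leave holes, so you should just extrude every created box all the way down.)

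The genuine gap is the key counting claim that the newly exposed region $R_i = P(a_i)\setminus\bigcup_{j<i}P(a_j)\subseteq\R^{2k-2}$ can be decomposed into $O(i^{k-2})$ disjoint boxes. This is already false for $k=2$: take previous projections $p_j=(-j,-(i-j))$ for $j<i$ and $p_i=(0,0)$; then $R_i$ is the quadrant below the origin minus a staircase with $i-1$ steps, so its upper boundary has $i-1$ distinct heights and any disjoint box cover of $R_i$ has size $\Theta(i)$, not the claimed $O(i^{k-2})=O(1)$. The cross-section argument does not rescue this: a slice of $R_i$ at a fixed value of one coordinate is an orthant minus a union of orthants in $\R^{2k-3}$ (not itself a union of orthants, so the inductive hypothesis does not directly apply to it), and, more fatally, there can be $\Theta(i)$ combinatorially distinct slices along the chosen coordinate, which is exactly the factor you were trying not to pay. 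In general the per-insertion cost can be as large as the boundary complexity of the union of $i-1$ orthants in $\R^{2k-2}$, i.e.\ on the order of $i^{k-1}$, so the pointwise sum $\sum_i O(i^{k-2})$ is not the right accounting and your argument as written only yields $O(n^{k})$ boxes. The $O(n^{k-1})$ bound has to come from an amortized/charging analysis: each emitted box is charged to a feature (e.g.\ a vertex) of the evolving $(2k-2)$-dimensional union that is created or destroyed at most once over the entire sweep, and the total number of such events is bounded by the union complexity; in the $k=2$ example above, the $\Theta(i)$ boxes are paid for by the $i-1$ staircase vertices that become interior at that step and are never charged again. This charging step (or an equivalent global argument, as in \cite{KaplanRSV08}) is precisely what is missing from your proposal.
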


\begin{prop}
\label{prop:skyline_upper}
For any $k \ge 2$, there exists a data structure for Dynamic Skyline Points Counting in $\R^{2k-1}$  in the semi-online model with $\poly(n)$ pre-processing time and $\tO(n^{1-1/k})$ update and query time. 
\end{prop}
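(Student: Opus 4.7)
The plan is to apply \cref{lem:semimi} with parameters $\alpha = k-1$ and $\beta = 1$, which yields exactly the target semi-online update time $\tO(n^{1-\beta/(1+\alpha)}) = \tO(n^{1-1/k})$. To meet the hypothesis, I will design a block-query data structure that preprocesses an $n$-point set $S \subset \R^{2k-1}$ in $\tO(n)$ time and, given any auxiliary set $S'$ of size $b$, returns the number of skyline points of $S \cup S'$ in $\tO(b^{k-1})$ time.

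The preprocessing of $S$ will maintain three standard ingredients in fixed dimension $d = 2k-1$: (i) the skyline $S_{\mathrm{sky}}$ of $S$; (ii) a dominance-query structure on $S$; and (iii) a $d$-dimensional orthogonal range counting structure on $S_{\mathrm{sky}}$. All three can be built in $\tO(n)$ time with $\tO(1)$ query time via range trees with fractional cascading in constant dimension, so the $\tO(n)$ preprocessing budget is respected.

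For the block query, I split the skyline of $S \cup S'$ into contributions from $S$ and from $S'$. A point of $S$ survives to the final skyline iff it lies in $S_{\mathrm{sky}}$ and is not dominated by any point of $S'$, so the count of such survivors equals $|S_{\mathrm{sky}}| - \bigl|S_{\mathrm{sky}} \cap \bigcup_{s' \in S'} Q(s')\bigr|$. I invoke \cref{lem:decompo} on $S'$ to decompose $\bigcup_{s' \in S'} Q(s')$ into $O(b^{k-1})$ pairwise disjoint boxes in $\tO(b^{k-1})$ time, and then sum the range counts of $S_{\mathrm{sky}}$ across these boxes, each in $\tO(1)$ time. Symmetrically, a point $s' \in S'$ survives iff $s'$ lies on the skyline of $S'$ (computable from scratch in $\tO(b)$ time in fixed dimension) and is undominated by $S$ (one $\tO(1)$ dominance query on the preprocessed structure). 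Summing the two contributions yields the total count in $\tO(b^{k-1})$ time, as required for $k \ge 2$.

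The main delicate point is that \cref{lem:decompo} delivers \emph{pairwise disjoint} boxes so the counts add without any inclusion--exclusion overhead; without this disjointness the running time would blow up and this step would be the bottleneck. Handling coordinate ties in the definition of skyline is a minor issue that can be resolved by a standard symbolic perturbation placing the input in general position, after which distinct points never dominate one another reflexively. Everything else is a routine composition of standard constant-dimension range-searching machinery.
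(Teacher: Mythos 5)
Your proposal is correct and follows essentially the same route as the paper: apply \cref{lem:semimi} with $\alpha=k-1$, $\beta=1$, preprocess the skyline of $S$ with constant-dimension range trees, and answer a block query by decomposing the union of orthants of the new points via \cref{lem:decompo} into $O(b^{k-1})$ disjoint boxes and summing range counts. The only (immaterial) difference is that you decompose $\bigcup_{s'\in S'}Q(s')$ over all of $S'$ rather than over its pruned skyline, which changes nothing in the bounds or correctness.
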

\begin{proof}
We verify that the Skyline Points Counting problem satisfies the property required by \cref{lem:semimi} with $\beta = 1$ and $\alpha = k-1$, which would directly imply the statement.

Given a set $S$ of $n$ points in $\R^{2k-1}$, we first remove all the points that are dominated by some other points, and let $S_0$ denote the remaining points (i.e., $S_0$ contains all the skyline points of $S$). To check whether a point is dominated, we can use standard $(2k-1)$-dimensional range trees in $\polylog(n)$ time per query after an $\tilde O(n)$ time pre-processing. Hence, $S_0$ can be constructed in $\tilde O(n)$ time.

Then, given any additional set $S'$ of $b$ points, we solve the Skyline Point Counting problem on $S\cup S'$ as follows. First, remove all points in $S'$ that are dominated by some other points in $S'\cup S$, and let $S'_0$ denote the remaining points. The set $S'_0$ can be similarly computed as before, in $\tilde O(b)$ time. Then, observe that the Skyline Points of $S'\cup S$ consist of
\begin{compactitem}
    \item The points in $S'_0$.
    \item Points in $S_0$ that are not in $\bigcup_{p\in S'_0} Q(p)$.
\end{compactitem}
We use \cref{lem:decompo} to decompose $\bigcup_{p\in S'_0} Q(p)$ into $\tilde O(b^{k-1})$ disjoint boxes  in $\tilde O(b^{k-1})$ time. For each of the boxes, we count the number of points in $S_0$ it contains, using the $(2k-1)$-dimensional range tree. Hence, we can count the total number of skyline points of $S'\cup S$ in $\tilde O(b^{k-1})$ time.
\end{proof}

Then we show that the upper bound is nearly-optimal in the semi-online model:
\skyline*

\begin{proof}
Assume for the sake of contradiction that such an efficient data structure exists. 
We will reduce from an OuMv$_k$ instance of dimension $N = \Theta(n^{1/k})$. Let $M \subseteq [N]^k$ be the input set of OuMv$_k$. 

Let $\delta = o(1/ N)$ be a sufficiently small positive real number. 
For every tuple $(a_1, \ldots, a_k) \in M$, we create a point $$(a_1 - \delta a_k, N-a_1, a_2, N-a_2, \ldots, a_{k-1}, N - a_{k-1}, a_k) \in \R^{2k-1}$$ for the Dynamic Skyline Points Counting instance. We call these points \textit{initial points}. Then we use the pre-processing part of the assumed data structure to pre-process these points in $\poly(n)$ time. 

We first show these initial points do not dominate each other. 
\begin{claim}
Two distinct initial points do not dominate each other. 
\end{claim}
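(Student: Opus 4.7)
The plan is a straightforward case analysis on what it means for one initial point to dominate another, exploiting the symmetric pairing $(a_i, N-a_i)$ built into the coordinates.

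Suppose, for contradiction, that two initial points coming from distinct tuples $(a_1,\dots,a_k), (a_1',\dots,a_k') \in M$ satisfy that the first dominates the second. First I would look at the ``middle'' coordinate pairs $(2i-1, 2i)$ for $i \in \{2,\dots,k-1\}$: here the first point has values $(a_i, N-a_i)$ and the second has $(a_i', N-a_i')$. Domination requires $a_i \ge a_i'$ and $N-a_i \ge N-a_i'$ simultaneously, which forces $a_i = a_i'$ for every such $i$.

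Next I would handle the coupling between $a_1$ and $a_k$ through the perturbation $\delta a_k$ in coordinate $1$. Coordinate $2$ gives $N-a_1 \ge N-a_1'$, i.e.\ $a_1 \le a_1'$. The last coordinate $2k-1$ gives $a_k \ge a_k'$. Plugging into coordinate $1$, $a_1 - \delta a_k \ge a_1' - \delta a_k'$ rearranges to $\delta(a_k' - a_k) \ge a_1' - a_1 \ge 0$, which (using $\delta > 0$) yields $a_k' \ge a_k$. Combined with $a_k \ge a_k'$ this gives $a_k = a_k'$. Substituting back into coordinate $1$ then forces $a_1 \ge a_1'$, hence $a_1 = a_1'$. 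Thus all coordinates of the two tuples agree, contradicting the assumption that the tuples were distinct.

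The main (and only) obstacle worth noting is checking that the role of $\delta$ is truly benign: it must be positive so that the sign analysis above goes through, but small enough that it does not interfere with the planned later arguments (e.g., it does not upset arguments comparing coordinate $1$ across different points in a way that breaks the reduction). Taking $\delta = o(1/N)$ as stipulated in the construction makes $\delta a_k$ a vanishingly small additive term relative to unit differences in $a_1$, so the present claim and subsequent use are consistent; the claim itself is then immediate from the argument above.
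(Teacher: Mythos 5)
Your proof is correct and follows essentially the same route as the paper: the paired coordinates $(a_i, N-a_i)$ force $a_i = a_i'$ for the middle indices, and then the first, second, and last coordinates (coupled through $\delta$) force the remaining equalities. The only, harmless, difference is that your sign analysis needs just $\delta > 0$, whereas the paper invokes integrality of the entries and $\delta \ll 1/N$ at this step.
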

\begin{proof}
Suppose $(a_1 - \delta a_k, N-a_1, a_2, N-a_2, \ldots, a_{k-1}, N - a_{k-1}, a_k)$ is dominated by $(b_1 - \delta b_k, N-b_1, b_2, N-b_2, \ldots, b_{k-1}, N - b_{k-1}, b_k)$. For each integer $i \in [2, k-1]$, if we consider the $(2i-1)$-th and $(2i)$-th coordinates, we must have $a_i \le b_i$ and $N-a_i \le N-b_i$, which lead to $a_i = b_i$. Then we consider the first, second and last coordinates. We have $a_1 - \delta a_k \le b_1 - \delta b_k, N - a_1 \le N - b_1$ and $a_k \le b_k$. Since $a_1, a_k, b_1, b_k$ are all integers from $[N]$ and $\delta  \ll 1/N$, these inequalities imply $a_1 = b_1$ and $a_k = b_k$. 

Thus, two points can dominate each other only if they are the same. 
\end{proof}

For every OuMv$_k$ query $U^{(1)} \times \cdots \times U^{(k)}$, we perform the following phase. For every $i \in [k-1]$, and $j'  \in [N] \setminus U^{(i)}$, we insert the following point to the data structure:
$$(\underbrace{\infty, \ldots, \infty}_{2i-2 \text{ coordinates of } \infty}, j', N-j', \infty, \ldots,\infty),$$
i.e., it is a point where the $(2i-1)$-th and $(2i)$-th coordinates are $j'$ and $N-j'$ respectively, and all other coordinates are $\infty$. Then, for each $j \in \{0, \ldots, N\}$, we perform an insertion, a query, and a deletion, as follows: first insert a point 
$$(\infty, \ldots, \infty, j), $$
i.e., it is a point where the last coordinate is $j$ and all other coordinates are $\infty$. After this insertion, we query the data structure for the number of maximal points, and denote the answer of the query by $c_j$. After the query, we delete the point $(\infty, \ldots, \infty, j)$ and proceed to the next $j$. After we finish for all $j \in \{0, \ldots, N\}$, we delete all points added in the current phase and end the phase. 

Then we show that given $c_1, \ldots, c_N$, we can determine whether $U^{(1)} \times \cdots \times U^{(k)}$ intersects $M$ in $O(N)$ additional time. We first show the following claim:
\begin{claim}
\label{cl:skyline}
For any $j \in [N]$, $$c_j = - \left (\sum_{1 \le i \le k-1}|U_i| \right ) + (k-1)N  + 1 + \left| M \cap \left(U^{(1)} \times \dots \times U^{(k-1)} \times [j + 1, N] \right)\right|.$$
\end{claim}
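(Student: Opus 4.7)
The plan is to partition the current point set into three disjoint groups and count, for each group, how many of its members are skyline points. The three groups are: (a) the probe point $p^\star = (\infty, \ldots, \infty, j)$ just inserted; (b) the query points $q_{i,j'}$ for $i \in [k-1]$ and $j' \in [N] \setminus U^{(i)}$, where $q_{i,j'}$ has $j'$ and $N-j'$ at positions $2i-1$ and $2i$ and $\infty$ elsewhere; and (c) the initial points corresponding to the tuples of $M$.

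First I would show that every point in groups (a) and (b) is a skyline point, contributing exactly $1 + \sum_{i=1}^{k-1}(N - |U^{(i)}|) = (k-1)N - \sum_{i=1}^{k-1}|U^{(i)}| + 1$ to $c_j$. The probe $p^\star$ is the unique point in the instance with $\infty$ in all of the first $2k-2$ coordinates (query points have a finite pair of entries at positions $\{2i-1, 2i\}$; initial points are everywhere finite), so nothing can dominate it. A query point $q_{i,j'}$ is not dominated by $p^\star$ since $p^\star$ has the finite value $j$ at the last coordinate while $q_{i,j'}$ has $\infty$ there; is not dominated by $q_{i', j''}$ with $i'\ne i$ because $q_{i', j''}$ has a finite entry at position $2i'-1$ while $q_{i,j'}$ has $\infty$; is not dominated by $q_{i, j''}$ with $j'' \ne j'$ because $j'' \ge j'$ and $N-j'' \ge N-j'$ jointly force $j'' = j'$; and is not dominated by any initial point since initial points have only finite entries.

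The main step is to characterise which initial points are skyline. Invoking the preceding claim that distinct initial points never dominate each other, I only have to rule out domination by $p^\star$ and by the query points. Comparing last coordinates, $p^\star$ dominates the initial point for $(a_1, \ldots, a_k)$ iff $a_k \le j$, so surviving $p^\star$ is equivalent to $a_k \in [j+1, N]$. For $q_{i', j''}$ with $i' \in [2, k-1]$, the pair of inequalities at positions $(2i'-1, 2i')$ becomes $j'' \ge a_{i'}$ and $N-j'' \ge N-a_{i'}$, forcing $a_{i'} = j''$; for $i' = 1$, using $\delta a_k < 1$ and the integrality of $a_1, j''$, the inequality $j'' \ge a_1 - \delta a_k$ reduces to $j'' \ge a_1$, so together with $a_1 \ge j''$ it still gives $a_1 = j''$. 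Since $j''$ ranges over $[N] \setminus U^{(i')}$, surviving every query point is equivalent to $a_{i'} \in U^{(i')}$ for all $i' \in [k-1]$. Combining the two conditions, the initial point for $(a_1, \ldots, a_k)$ is skyline iff $(a_1, \ldots, a_k) \in U^{(1)} \times \cdots \times U^{(k-1)} \times [j+1, N]$, so group (c) contributes exactly $|M \cap (U^{(1)} \times \cdots \times U^{(k-1)} \times [j+1, N])|$ skyline points, and summing the three groups gives the claimed formula.

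The main obstacle is the coordinate-by-coordinate casework over possible dominators, and in particular calibrating the $\delta$-perturbation at the first coordinate: $\delta$ must be small enough (any $\delta < 1/N$ suffices) that no query point acquires an unintended domination of an initial point via the first coordinate, while also being positive so that it plays its role in the previous claim separating distinct initial points. Once this integrality point is pinned down, the remaining verifications are routine inequality comparisons.
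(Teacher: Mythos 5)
Your proposal is correct and follows essentially the same route as the paper's proof: show the phase points (probe plus query points) are always maximal, contributing $(k-1)N-\sum_i|U^{(i)}|+1$, then use the preceding claim that initial points never dominate each other to reduce to checking which initial points are dominated by phase points, yielding the bijection with $M\cap\bigl(U^{(1)}\times\cdots\times U^{(k-1)}\times[j+1,N]\bigr)$. Your treatment is in fact slightly more careful than the paper's at the first coordinate, where you verify explicitly that the $\delta$-perturbation (any $0<\delta<1/N$) does not create unintended dominations by the query points $q_{1,j''}$.
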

\begin{proof}
First, it is easy to verify that the points added within each phase are always maximal points, which contributes $\sum_{1 \le i \le k-1}(N-|U_i|)   + 1$ to $c_j$. 

We then analyze which of the initial points are maximal points. Since initial points do not dominate each other,  it suffices to consider how the points added within each phase dominate the initial points. 

For every $i \in [k-1]$, and $j' \not \in U^{(i)}$, we inserted a point where the $(2i-1)$-th and $(2i)$-th coordinates are $j'$ and $N-j'$ respectively, and all other coordinates are $\infty$. By our construction of initial points, these points precisely dominate those initial points whose $(2i-1)$-th and $(2i)$-th coordinates are $j'$ and $N-j'$ respectively. These points in turn correspond to tuples in $M$ whose $i$-th entries equal $j'$. In the query for $c_j$, we also inserted another point $(\infty, \ldots, \infty, j)$, which precisely dominates those initial points whose last coordinate is at most $j$. These points in turn correspond to tuples in $M$ whose last entries are at most $j$. Therefore, the set of undominated initial points has a one-to-one correspondence with $M \cap \left(U^{(1)} \times \dots \times U^{(k-1)} \times [j+1, N] \right)$. This gives the final term of $c_j$ in the claim statement. 
\end{proof}

By Claim~\ref{cl:skyline}, for every $j \in [N]$, $c_{j-1} - c_{j} = \left| M \cap \left(U^{(1)} \times \dots \times U^{(k-1)} \times \{j\} \right)\right|$. Therefore, 
\begin{equation*}
    \begin{split}
        \left| M \cap \left(U^{(1)} \times \dots \times U^{(k-1)} \times U^k \right)\right| &= \sum_{j \in U^{(k)}}\left| M \cap \left(U^{(1)} \times \dots \times U^{(k-1)} \times \{j\} \right)\right|\\
        &= \sum_{j \in U^{(k)}} (c_{j-1} - c_{j}),
    \end{split}
\end{equation*}
which can be computed in $O(N)$ additional time. This concludes the correctness proof of the reduction. 

The pre-processing time of the reduction is clearly $\poly(n) = \poly(N)$. For each OuMv$_k$ query, we spend $O(N)$ data structure updates and queries, which take $O(N \cdot n^{1-1/k-\epsilon})=O(N^{k-k\epsilon})$ time. We also spend $O(N)$ additional time, so the running time for each query is $O(N+N^{k-k\epsilon})$. This clearly contradicts the OuMv$_k$ hypothesis. Therefore, assuming the OuMv$_k$ hypothesis, there is no data structure for Dynamic Skyline Points Counting in $\mathbb{R}^{2k-1}$ with $\poly(n)$ pre-processing time, $O(n^{1-1/k -\epsilon})$ update and query time for $\epsilon > 0$. 

Clearly, the lower bound also works for data structures in the semi-online model, since in fact, we know the deletion time of  all points when they are inserted. 
\end{proof}

\subsection{Klee's Measure for Unit Hypercubes}

We first formally define Dynamic Klee's measure for unit hypercubes.  

\begin{prob}[Dynamic Klee's measure for unit hypercubes]
For a constant integer parameter $d \ge 1$, maintain a data structure for a set of at most $n$ axis-parallel unit hypercubes in $\mathbb{R}^d$ and support inserting a unit hypercube, deleting a unit hypercube, and querying the volume of the union of the unit hypercubes. 
\end{prob}

Recall our lower bound for this problem:
\klee*
\begin{proof}
The main idea of the proof is similar to the proof of \cref{thm:skyline}. Assume for the sake of contradiction that such an efficient data structure exists. 
We will reduce from an OuMv$_k$ instance over $[N]^k$ for $N = \Theta(n^{1/k})$. Let $M \subseteq [N]^k$ be the input set of the OuMv$_k$ instance. 

Without loss of generality, we assume the side lengths of the hypercubes are $N$ by scaling up every dimension by a factor of $N$. In the proof, a hypercube with largest corner $(p_1, \ldots, p_{2k-1})$ is the hypercube $$[p_1 - N, p_1] \times \cdots \times [p_{2k-1} - N, p_{2k-1}].$$

For every point $p \in \{0, N\}^{2k-1} \setminus \{N\}^{2k-1}$, we add a hypercube with largest corner $p$. The union of these hypercubes covers all space in $[-N, N]^{2k-1}$ except the nonnegative orthant $[0, N]^{2k-1}$.  
Let $\delta = o(1/N)$ be a sufficiently small real number. For every tuple $(a_1, \ldots, a_k) \in M$, we add a hypercube with largest corner $$(a_1 - \delta a_k, N-a_1, a_2, N-a_2, \ldots, a_{k-1}, N - a_{k-1}, a_k).$$
We call all hypercubes considered so far \textit{initial hypercubes}. We then use the pre-processing part of the assumed  data structure for Dynamic Klee's measure for unit hypercubes to pre-process the initial hypercubes in $\poly(n)$ time. 

For every OuMv$_k$ query $U^{(1)} \times \dots \times U^{(k)}$, we perform the following phase. For every $i \in [k-1]$, and $j'  \in [N] \setminus U^{(i)}$, we insert the  hypercube with the following largest corner to the data structure:
$$(\underbrace{N, \ldots, N}_{2i-2 \text{ coordinates of } N}, j', N-j', N, \ldots,N).$$
Let $Q$ be the union of hypercubes in the current state of the data structure. We query the data structure to get $V_0 = \vol(Q)$. 
Then, for each $j \in [N]$, we insert a hypercube with the largest corner 
$$(N, \ldots, N, j).$$
 After this insertion, we query the data structure for the  volume of the union of the hypercubes, and denote the answer of the query by $V_j$. After the query, we delete the hypercube with the largest corner 
$(N, \ldots, N, j)$ and proceed to the next $j$. After we finish for every $j$, we delete all points added in the current phase and end the phase.

Let $f_Q(j) = \vol\left(Q \cap \left([0, N]^{2k-2} \times [j-1, j]\right)\right)$ for $j \in [N]$, where $Q$ was defined above. We then show  the following claims. 
\begin{claim}
\label{cl:klee1}
 The value of $f_Q(j)$ equals the volume of the set of points in $[0, N]^{2k-2}$ dominated by at least one of the following points in $\R^{2k-2}$:
\[\left\{
  \begin{array}{llr}
    (\underbrace{N, \ldots, N}_{2i-2 \text{ coordinates of } N}, j', N-j', N, \ldots)  &: i \in [k-1], j'  \in [N] \setminus U^{(i)} &\\
    (a_1 - \delta a_k, N-a_1, a_2, N-a_2, \ldots, & a_{k-1}, N - a_{k-1}) & \\
     &: (a_1, \ldots, a_k) \in M \cap \left(U^{(1)} \times \cdots \times U^{(k-1)} \times \{j, \ldots, N\}\right). &
  \end{array}
\right.
\]
\end{claim}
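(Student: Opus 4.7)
The plan is to reduce $f_Q(j)$ to a $(2k-2)$-dimensional dominance-volume computation in three steps. First, each corner hypercube (with largest corner $p \in \{0, N\}^{2k-1} \setminus \{N\}^{2k-1}$) has at least one coordinate of $p$ equal to $0$, so it lies entirely in some half-space $\{x : x_l \le 0\}$ and contributes measure zero to the intersection with $[0, N]^{2k-2} \times [j-1, j]$. The remaining initial and query hypercubes have largest corners in $[0, N]^{2k-1}$, and within the nonnegative orthant such a hypercube covers a point $x$ precisely when $x$ is coordinatewise dominated by its largest corner.

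Second, I would slice in the last coordinate. A query hypercube has last coordinate $N$ and so trivially covers the slice $[j-1, j]$ in that coordinate. An initial hypercube corresponding to $(a_1, \ldots, a_k) \in M$ has last coordinate $a_k \in [N]$, which covers this slice (away from a measure-zero boundary) iff $a_k \ge j$. Since the slice has unit thickness in the last coordinate, $f_Q(j)$ equals the $(2k-2)$-dimensional volume of the set of points in $[0, N]^{2k-2}$ dominated by at least one of the following projected largest corners: the projections $(N,\ldots,N,j',N-j',N,\ldots,N)$ for all $i \in [k-1]$ and $j' \in [N] \setminus U^{(i)}$, together with the projections $(a_1 - \delta a_k, N-a_1, \ldots, a_{k-1}, N-a_{k-1})$ over all $(a_1, \ldots, a_k) \in M$ with $a_k \ge j$.

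The final step is to restrict the initial projections further to those with $(a_1, \ldots, a_{k-1}) \in U^{(1)} \times \cdots \times U^{(k-1)}$. Suppose some $a_i \notin U^{(i)}$ for $i \in [k-1]$; then $a_i \in [N] \setminus U^{(i)}$, so we have inserted the query hypercube with this $i$ and $j' = a_i$, whose projection $(N,\ldots,N,a_i,N-a_i,N,\ldots,N)$ agrees with the initial projection in coordinates $2i-1, 2i$ (namely at $a_i$ and $N - a_i$) and equals $N$ in every other coordinate, and hence dominates the initial projection coordinatewise (using $a_1 - \delta a_k \le N$ in coordinate $1$). The dominated region of the initial projection is therefore subsumed and the initial hypercube can be dropped without changing the union. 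The main subtle point is this coordinatewise domination check: the whole argument hinges on the redundant symmetric encoding $(a_i, N - a_i)$ in adjacent coordinates $2i-1, 2i$, which is precisely what lets a mismatched query projection match the initial projection on those two coordinates and dominate (via its $N$ entries) in all remaining coordinates. Everything else reduces to standard slicing bookkeeping using that the relevant hypercubes restrict to downward-closed orthants in the nonnegative region.
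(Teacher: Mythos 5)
Your proposal is correct and takes essentially the same approach as the paper's proof: slice the slab $[j-1,j]$ in the last coordinate (every relevant hypercube covers it fully or only in a measure-zero set, with the corner hypercubes contributing nothing), reduce $f_Q(j)$ to the dominance volume in $[0,N]^{2k-2}$ of the projected largest corners whose last coordinate is at least $j$, and discard the $M$-points with some $a_i \notin U^{(i)}$ because the inserted corner $(N,\dots,N,a_i,N-a_i,N,\dots,N)$ dominates them coordinatewise. The only minor imprecision is that for $i=1$ the query projection does not literally ``agree'' with the initial projection in coordinate $1$ (it is $a_1$ versus $a_1-\delta a_k$), but the required inequality $a_1-\delta a_k \le a_1$ still gives domination, so the argument is unaffected.
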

\begin{proof}
Since the largest corners of the hypercubes we add all have integral $(2k-1)$-th coordinate, and all hypercubes have integral side lengths $N$,  the intersections of the last ($(2k-1)$-th) dimension  of these hypercubes with $[j-1, j]$ have lengths either $0$ or $1$. Therefore, it suffices to consider those hypercubes that completely cover $[j-1, j]$ in the last dimension, and their projection onto the first $(2k-2)$ dimensions. Therefore, $f_Q(j)$ equals the volume of the set of points in $[0, N]^{2k-2}$ dominated by one of the following points (which are the projections of all largest corners in $Q$ that lie in $[0, N]^{2k-2} \times [j, N]$), times the length of $[j-1, j]$ (which is $1$):
\[\left\{
  \begin{array}{llr}
    (\underbrace{N, \ldots, N}_{2i-2 \text{ coordinates of } N}, j', N-j', N, \ldots) & : i \in [k-1], j'  \in [N] \setminus U^{(i)} &\\
    (a_1 - \delta a_k, N-a_1, a_2, N-a_2, \ldots, & a_{k-1}, N - a_{k-1}) & \\
    & : (a_1, \ldots, a_k) \in M \cap \left([N]^{k-1} \times \{j, \ldots, N\}\right). &
  \end{array}
\right.
\]
It remains to show that the volume does not change if we exclude $(a_1 - \delta a_k, N-a_1, a_2, N-a_2, \ldots, a_{k-1}, N - a_{k-1})$ from the above list where $(a_1, \ldots, a_k) \in M \cap \left([N]^{k-1} \times \{j, \ldots, N\}\right)$ such that $a_i \not \in U^{(i)}$ for some $i \in [k-1]$. Such a point $p$ is dominated by another point $q = (\underbrace{N, \ldots, N}_{2i-2 \text{ coordinates of } N}, a_i, N-a_i, N, \ldots)$ in the list, so $q$ dominates all points dominated by $p$. Therefore, excluding $p$ does not change the volume. 
\end{proof}

\begin{claim}
\label{cl:klee2}
For $j \in [N-1]$, $V_j - V_{j-1} = V_{j+1}-V_j$ if and only if $M \cap \left(U^{(1)} \times \cdots \times U^{(k-1)} \times \{j\}\right)$ is empty. 
\end{claim}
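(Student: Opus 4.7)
The plan is to reduce the double-difference equality to comparing cross-section volumes, and then compare the dominating-point lists from Claim~\ref{cl:klee1}. First, since the initial corners in $\{0,N\}^{2k-1}\setminus\{N\}^{2k-1}$ together cover $[-N,N]^{2k-1}\setminus [0,N]^{2k-1}$ (up to measure zero), inserting $H_j=[0,N]^{2k-2}\times[j-N,j]$ adds new volume only inside the positive orthant, so that $V_j-V_0=\vol\bigl(([0,N]^{2k-2}\times[0,j])\setminus Q\bigr)$. Because every hypercube in $Q$ has integer endpoints in the $(2k-1)$-th coordinate, the slice $Q\cap ([0,N]^{2k-2}\times[j-1,j])$ is a prism with base area $f_Q(j)$ and height $1$; subtracting gives
\[V_j - V_{j-1} \;=\; N^{2k-2} - f_Q(j).\]
Consequently the equality $V_j-V_{j-1}=V_{j+1}-V_j$ is equivalent to $f_Q(j)=f_Q(j+1)$, reducing the claim to comparing these two cross-section areas.

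Inspecting the dominating points from Claim~\ref{cl:klee1}, the blocker-type points are the same for $j$ and $j+1$, while the $M$-type points differ exactly by those coming from $(a_1,\dots,a_k)\in M\cap (U^{(1)}\times\cdots\times U^{(k-1)}\times\{j\})$. If this set is empty, the two lists coincide and $f_Q(j)=f_Q(j+1)$, which gives the ``if'' direction immediately. The hard direction -- and the main obstacle -- is to show that a non-empty such set forces $f_Q(j)>f_Q(j+1)$. The plan is to pick any witness $(a_1,\dots,a_k)$ in the set, let $p_a=(a_1-\delta a_k,\,N-a_1,\,a_2,\,N-a_2,\,\dots,\,a_{k-1},\,N-a_{k-1})$, and exhibit a positive-volume region that $p_a$ dominates but no other point in the $f_Q(j+1)$-list does. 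Concretely, I would take a half-open axis-aligned box $R$ of side length $\eta$ with maximum corner $p_a$, where $0<\eta<\delta$.

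Every $x\in R$ is dominated by $p_a$ by construction, so it remains to rule out the other dominating points of the $f_Q(j+1)$-list. A blocker $(N,\dots,N,j',N-j',N,\dots,N)$ with $j'\notin U^{(i)}$ can only dominate $x$ if $j'\ge x_{2i-1}$ and $N-j'\ge x_{2i}$; since $R$'s coordinates at positions $(2i-1,2i)$ lie within $\eta<1$ of $(a_i,N-a_i)$ and $j'$ is an integer, this forces $j'=a_i$, contradicting $a_i\in U^{(i)}$. For any other $p_b$ with $(b_1,\dots,b_k)\in M\cap (U^{(1)}\times\cdots\times U^{(k-1)}\times[j+1,N])$, the tight coordinate pairs at positions $(2i-1,2i)$ for $i\in[2,k-1]$ and at position $2$ would force $b_i=a_i$ for $i\in[2,k-1]$ and $b_1\le a_1$; combined with $b_k\ge j+1 > j = a_k$, this gives $b_1-\delta b_k \le a_1-\delta(a_k+1)=a_1-\delta a_k-\delta$, which is strictly below the first coordinate of every $x\in R$ thanks to $\eta<\delta$. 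The perturbation $\delta a_k$ on the first coordinate is exactly what separates $p_a$ from the $p_b$'s with larger $b_k$; without it $p_a$ and $p_b$ would coincide in their other coordinates and $R$ would be dominated by $p_b$. Hence $R$ has positive measure (shrinking sides that meet the boundary of $[0,N]^{2k-2}$, which is a routine boundary adjustment one can also sidestep by embedding the OuMv$_k$ instance into $[2,N+1]^k$), contributes to $f_Q(j)$ but not to $f_Q(j+1)$, and the desired strict inequality follows.
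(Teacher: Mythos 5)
Your proof is correct and takes essentially the same route as the paper's: you reduce the condition $V_j-V_{j-1}=V_{j+1}-V_j$ to $f_Q(j)=f_Q(j+1)$ via the identity $V_j-V_{j-1}=N^{2k-2}-f_Q(j)$, and then compare the dominating-point lists from Claim~\ref{cl:klee1}, whose difference is exactly the points coming from $M \cap \bigl(U^{(1)} \times \cdots \times U^{(k-1)} \times \{j\}\bigr)$. Your explicit box $R$ of side $\eta<\delta$ below $p_a$ (using the $\delta a_k$ perturbation to separate $p_a$ from points with larger last coordinate) is simply a fleshed-out version of the paper's assertion that the removed points are maximal among the list, so that deleting any of them strictly decreases the dominated volume.
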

\begin{proof}
First, for any $j \in [N]$, $V_j = \vol(Q \cup ([0, N]^{2k-2} \times [j-N, j]))$ by definition. Since $Q$ already covers all volume in $[-N, N]^{2k-1}$ except the nonnegative orthant, $V_j$ can be further written as $\vol(Q \cup ([0, N]^{2k-2} \times [0, j]))$. Note that this is also true for $j=0$. 

Now we can write $V_j - V_{j-1}$ as $\vol(Q \cup ([0, N]^{2k-2} \times [0, j])) - \vol(Q \cup ([0, N]^{2k-2} \times [0, j-1]))$ for any $j \in [N]$. Note that $Q \cup ([0, N]^{2k-2} \times [0, j])$ and $Q \cup ([0, N]^{2k-2} \times [0, j-1])$ are identical except in the region $[0, N]^{2k-2} \times [j-1, j]$. Therefore, 
\begin{equation*}
    \begin{split}
        V_j - V_{j-1} =& \vol(Q \cup ([0, N]^{2k-2} \times [0, j])) - \vol(Q \cup ([0, N]^{2k-2} \times [0, j-1]))\\
        =& \vol\left( \left(Q \cup \left([0, N]^{2k-2} \times [0, j]\right)\right) \cap \left([0, N]^{2k-2} \times [j-1, j]\right)\right)\\
        & - \vol\left( \left(Q \cup \left([0, N]^{2k-2} \times [0, j-1]\right)\right) \cap \left([0, N]^{2k-2} \times [j-1, j]\right)\right)\\
        =& N^{2k-2} - f_Q(j).
    \end{split}
\end{equation*}
Thus, we can replace the condition $V_j - V_{j-1} = V_{j+1}-V_j$ with $f_Q(j) = f_Q(j+1)$. 

First, suppose $M \cap \left(U^{(1)} \times \cdots \times U^{(k-1)} \times \{j\}\right)$ is empty. In this case, the list of points in the statement of \cref{cl:klee1} is the same for $f_Q(j)$ and $f_Q(j+1)$, and thus $f_Q(j)=f_Q(j+1)$. 

Conversely, suppose $M \cap \left(U^{(1)} \times \cdots \times U^{(k-1)} \times \{j\}\right)$ is not empty. In this case, the list of points in the statement of \cref{cl:klee1} for $f_Q(j+1)$ is a proper subset of the list of points for $f_Q(j)$. The points that are in $f_Q(j)$'s list while not in $f_Q(j+1)$'s list are 
$$ (a_1 - \delta a_k, N-a_1, a_2, N-a_2, \ldots, a_{k-1}, N - a_{k-1}), $$
where $(a_1, \ldots, a_k) \in M \cap \left(U^{(1)} \times \cdots \times U^{(k-1)} \times \{j\}\right).$ 
It is not difficult to verify that these points are maximal points among the list of points for $f_Q(j)$ (using ideas from the proof of Theorem~\ref{thm:skyline}), so removing any of them strictly decreases the volume of the set of points in $[0, N]^{2k-2}$ dominated by the points in the list. It implies that $f_Q(j+1) < f_Q(j)$. 
\end{proof}

Now let us complete the reduction. For each $j \in [N-1]$, we can use \cref{cl:klee2} to test whether $M \cap \left(U^{(1)} \times \cdots \times U^{(k-1)} \times \{j\}\right)$ is empty in $O(1)$ time. Furthermore, we can also test whether $M \cap \left(U^{(1)} \times \cdots \times U^{(k-1)} \times \{N\}\right)$ is empty via an $O(N^{k-1})$ time brute-force algorithm. Given these results, we can determine if $M \cap \left(U^{(1)} \times \cdots \times U^{(k)}\right)$ is empty in $O(N)$ additional time. 

The pre-processing time of the reduction is $\poly(n) = \poly(N)$. For each OuMv$_k$ query, we call the data structure $O(N)$ times, which cost $O(N \cdot n^{1-1/k-\epsilon}) = O(N^{k-k\epsilon})$ time. We also spend $O(N^{k-1})$ additional time for each query. Therefore, each query takes $O(N^{k-k\epsilon} + N^{k-1})$ time for $\epsilon > 0$, which contradicts the OuMv$_k$ hypothesis. Thus, assuming the OuMv$_k$ hypothesis, there is no data structure for Dynamic Klee's measure for unit hypercubes in $\mathbb{R}^{2k-1}$ with $\poly(n)$ pre-processing time, $O(n^{1-1/k -\epsilon})$ update and query time for $\epsilon > 0$. 

Clearly, the lower bound also works for data structures in the semi-online model, since in fact, we know the deletion time of  all hypercubes when they are inserted. 
\end{proof}

\subsection{Chan's Halfspace Problem}

We finally show a lower bound for Chan's Halfspace Problem, which was considered in \cite{dyn:Chan03}. 

\begin{prob}[Chan's Halfspace Problem]
Fix a constant integer parameter $d \ge 1$, and let $\square$ be a fixed function from multiple numbers to one number that is decomposable and allows computing $\square (S+j)$ from $\square S$ in constant time. Maintain  a dynamic set $H$ of hyperplanes in $\R^d$, and  a dynamic set $Q$  of points in $\R^d$. Each update operation can insert (resp.\ delete) a hyperplane to (resp.\ from) $H$ or a point to (resp.\ from) $Q$. Each query asks to compute $\square c_H(Q) = \square \{c_H(q): q \in Q\}$, where $c_H(q)$ is the number of hyperplanes in $H$ that contains $q$. 
\end{prob}

\chan*
\begin{proof}
Assume for the sake of contradiction that such an efficient data structure exists. 
We will reduce from an OuMv$_k$ instance of dimension $N = \Theta(n^{1/k})$. Let $M \subseteq [N]^k$ be the input set in OuMv$_k$.

For every tuple $(a_1, \ldots, a_k) \in M$, we add a  point $(a_1,\dots, a_k)\in \R^k$ into  point set $Q$.  Then we use the pre-processing part of the assumed data structure to pre-process these points in $\poly(n)$ time. 

For every OuMv$_k$ query $U^{(1)} \times \dots \times U^{(k)}$, we perform the following phase. For every $i \in [k]$, and $j\in U^{(i)}$, we insert the following two halfspaces into $H$:
\begin{align*}
 \{ (x_1,\dots,x_k) \in \R^k  : x_i < j-0.5 \},\\
 \{ (x_1,\dots,x_k) \in \R^k  : x_i > j+0.5 \}.
\end{align*}
Observe that, considering the $2|U^{(i)}|$ halfspaces inserted for each $i$, point $(a_1,\dots,a_k)$ is contained in $|U^{(i)}|$ of them if $a_i\notin U^{(i)}$, or is contained in $|U^{(i)}|-1$ of them if $a_i\in U^{(i)}$. 
Then, we use the query operation to obtain $\min c_H(Q)$, which equals $\sum_{i=1}^k (|U^{(i)}|-1)$ if and only if there exists $(a_1,\dots,a_k) \in M \cap
U^{(1)} \times \dots \times U^{(k)}$, i.e., the answer to this OuMv$_k$ query is YES.  At the end of this phase, we remove the added halfspaces from $H$.

The pre-processing time of the reduction is $\poly(n) = \poly(N)$. For each OuMv$_k$ query, we call the data structure $O(N)$ times, which cost $O(N \cdot n^{1-1/k-\epsilon}) = O(N^{k-k\epsilon})$ time.  Therefore, each query takes $O(N^{k-k\epsilon})$ time for $\epsilon > 0$, which contradicts  the OuMv$_k$ hypothesis. Thus, assuming the OuMv$_k$ hypothesis, there is no data structure for Chan's Halfspace problem in $\mathbb{R}^{k}$ with $\poly(n)$ pre-processing time, $O(n^{1-1/k -\epsilon})$ update and query time for $\epsilon > 0$. 
\end{proof}

\section{Generalizations of OMv-Hard Problems}
In this section, we show hardness for generalizations of problems that were known to be OMv-hard \cite{HenzingerKNS15}, including generalizations of $s$-Triangle Detection, Erickson’s problem and Langerman’s problem. 

\subsection{Dynamic \texorpdfstring{$s$}{s}-\texorpdfstring{$k$}{k}-Uniform \texorpdfstring{$(k+1)$}{(k+1)}-Hyperclique}

\begin{prob}[Dynamic $s$-$k$-Uniform $(k+1)$-Hyperclique]
Maintain an $n$-node $k$-uniform hypergraph with a fixed node $s$ and support inserting a hyperedge, deleting a hyperedge, and querying whether $s$ is in a $k$-uniform $(k+1)$-hyperclique. 
\end{prob}

There are two naive algorithms for this problem. The first algorithm has $\tO(1)$ update time and $\tO(n^k)$ query time: during an update, it does not do any real work besides recording the update; during a query, it enumerates all tuples of $k$ vertices in the graph and check whether they form a $k$-uniform $(k+1)$-hyperclique with $s$. The second algorithm has $\tO(n)$ update time and $\tO(1)$ query time: it maintains the number of $k$-uniform $(k+1)$-hypercliques each vertex is in; for each update inserting or deleting a $k$-uniform hyperedge,  it enumerates all the $O(n)$ tuples of $(k+1)$ of vertices that contain this hyperedge and updates the number of $k$-uniform $(k+1)$-hypercliques each vertex is in accordingly; during a query, the algorithm only needs to read the count for $s$, which takes $\tO(1)$ time. 

We show that these two simple algorithms are actually optimal under the OuMv$_k$ hypothesis, generalizing the OMv-hardness of Dynamic $s$-Triangle Detection \cite{HenzingerKNS15}. 

\begin{theorem}
\label{thm:s-k-hyperclique}
Let $k \ge 2$ be a positive integer. Assuming the OuMv$_k$ hypothesis, there is no data structure for Dynamic $s$-$k$-Uniform $(k+1)$-Hyperclique with $\poly(n)$ pre-processing time, $O(n^{1-\eps})$ amortized update time and $O(n^{k-\eps})$ amortized query time for $\eps > 0$. 
\end{theorem}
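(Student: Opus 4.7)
The plan is to reduce from OuMv$_k$, generalizing the standard OMv-based reduction used in \cite{HenzingerKNS15} for Dynamic $s$-Triangle Detection (which is exactly the $k=2$ case). Given an OuMv$_k$ instance with set $M \subseteq [n]^k$, I build a $k$-uniform hypergraph whose vertex set consists of the distinguished vertex $s$ together with $k$ disjoint parts $V_1,\dots,V_k$, each identified with $[n]$. During pre-processing, for every $(v_1,\dots,v_k) \in M$, insert the \emph{base} hyperedge $\{v_1,\dots,v_k\}$; this uses $|M|\le n^k$ hyperedges, so the assumed data structure can pre-process it in $\poly(n)$ time.

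For each online OuMv$_k$ query $(U^{(1)},\dots,U^{(k)})$, for every $i\in[k]$ and every $(v_1,\dots,v_{i-1},v_{i+1},\dots,v_k)\in \prod_{j\ne i}U^{(j)}$, insert the \emph{side} hyperedge $\{s,v_1,\dots,v_{i-1},v_{i+1},\dots,v_k\}$. Then issue a single hyperclique query asking whether $s$ lies in a $(k+1)$-hyperclique, and finally delete all side hyperedges just inserted to restore the hypergraph to its base state.

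For correctness, consider any $(k+1)$-hyperclique containing $s$; call it $\{s,v_1,\dots,v_k\}$. Its non-$s$-containing face $\{v_1,\dots,v_k\}$ must be a hyperedge, and since the only hyperedges not containing $s$ are base hyperedges, this forces $(v_1,\dots,v_k)\in M$ (and in particular one vertex from each $V_i$, after relabeling). For each $i\in[k]$, the side face $\{s,v_1,\dots,\widehat{v_i},\dots,v_k\}$ must also be a hyperedge; by construction this requires $v_j\in U^{(j)}$ for every $j\ne i$. Taking the conjunction over all $i\in[k]$ (each index $j$ is tested by every $i\ne j$, which is nonempty since $k\ge 2$), this is equivalent to $v_j\in U^{(j)}$ for all $j$. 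Thus $s$ is in a $(k+1)$-hyperclique iff $M\cap (U^{(1)}\times\dots\times U^{(k)})\ne\emptyset$, which is the OuMv$_k$ answer.

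For the accounting, each OuMv$_k$ query performs $\sum_{i=1}^k \prod_{j\ne i}|U^{(j)}| \le k\cdot n^{k-1}=O(n^{k-1})$ insertions (and the same number of deletions) plus one hyperclique query. Over $n$ OuMv$_k$ queries this totals $O(n^k)$ updates and $O(n)$ queries on the data structure. If the data structure had $O(n^{1-\eps})$ amortized update time and $O(n^{k-\eps})$ amortized query time, the overall running time would be $\poly(n)+O(n^k\cdot n^{1-\eps})+O(n\cdot n^{k-\eps})=O(n^{k+1-\eps})$, contradicting the OuMv$_k$ hypothesis. There is no real obstacle beyond checking that distributing the membership constraints $v_j\in U^{(j)}$ across the $k$ different side hyperedges does enforce every coordinate; this works precisely because each coordinate $j$ is covered by the $k-1\ge 1$ side hyperedges with $i\ne j$.
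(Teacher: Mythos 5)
Your reduction is correct and essentially identical to the paper's own proof: the same base hyperedges encoding $M$ on a $(k+1)$-partite vertex set, the same $O(n^{k-1})$ side hyperedges $\{s\}\cup\{v_j: j\ne i\}$ per OuMv$_k$ query, one hyperclique query per phase, and the same time accounting yielding a contradiction. The only cosmetic differences are that you derive the one-vertex-per-part structure of a hyperclique from the form of the base hyperedges rather than from $(k+1)$-partiteness, and you total the cost over all $n$ queries rather than per phase; neither changes the argument.
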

\begin{proof}
Assume for the sake of contradiction that such an efficient data structure exists. We will reduce from an OuMv$_k$ instance over $[N]^k$ for $N = \Theta(n)$. Let $M \subseteq [N]^k$ be the input set of OuMv$_k$. 

We construct a Dynamic $s$-$k$-Uniform $(k+1)$-Hyperclique instance on vertex set $\{s\} \cup [N] \times [k]$. For every tuple $(a_1, \ldots, a_k) \in M$, we add the following hyperedge to the initial graph:
$$\left\{(a_1, 1), \ldots, (a_k, k) \right\}.$$
Then we let the assumed data structure to pre-process this hypergraph in $\poly(N)$ time. 

For every OuMv$_k$ query $U^{(1)} \times \dots \times U^{(k)}$, we perform the following phase. For every size $k-1$ subset of $[k]$ consisting of elements $t_1, \ldots, t_{k-1}$, we enumerate all tuples $(a_{t_1}, \ldots, a_{t_{k-1}}) \in U^{(t_1)} \times \cdots \times U^{(t_{k-1})}$, and insert the following hyperedge to the hypergraph via the data structure:
$$\left\{s, (a_{t_1}, t_1), \ldots, (a_{t_{k-1}}, t_{k-1})\right\}.$$
After we insert all the edges, we query whether $s$ is in a $k$-uniform $(k+1)$-hyperclique. We claim that $s$ is in a $k$-uniform $(k+1)$-hyperclique if and only if the answer to the OuMv$_k$ query is YES. First, suppose $(a_1, \ldots, a_k) \in \left( U^{(1)} \times \ldots \times U^{(k)}\right) \cap M$, then it is easy to check the vertices $\left\{s, (a_1, 1), \ldots, (a_k, k)\right\}$ form a $k$-uniform $(k+1)$-hyperclique. Now we consider the converse direction. Since the graph is a $(k+1)$-partite  $k$-uniform hypergraph by construction, with $s$ being on its own part and vertices $[N] \times \{i\}$ form a part for each $i \in [k]$, any $(k+1)$-hyperclique must use one vertex from each part and thus have the form $\left\{s, (a_1, 1), \ldots, (a_k, k)\right\}$. Therefore, $(a_1, \ldots, a_k)$ is in both $U^{(1)} \times \dots \times U^{(k)}$ and $M$, and thus the answer to the OuMv$_k$ query is YES. 

At the end of each phase, we remove all hyperedges inserted in this phase. 

The total number of data structure updates we make for each OuMv$_k$ query is $O(n^{k-1})$, and the total number of data structure queries is $O(1)$. Therefore, the running time of each phase is 
$O(n^{k-1} \cdot n^{1-\eps} + n^{k-\eps}) = O(n^{k-\eps})$, which contradicts the OuMv$_k$ hypothesis. Therefore, assuming the OuMv$_k$ hypothesis, there is no data structure for Dynamic $s$-$k$-Uniform $(k+1)$-Hyperclique with $\poly(n)$ pre-processing time, $O(n^{1-\eps})$ update time and $O(n^{k-\eps})$ query time for $\eps > 0$. 
\end{proof}

\subsection{\texorpdfstring{$k$}{k}-Dimensional Erickson’s Problem}

We first define the following generalization of Erickson’s problem. The original Erickson’s problem \cite{patrascu2010towards} corresponds to $2$-Dimensional Erickson’s problem. 

\begin{prob}[$k$-Dimensional Erickson’s Problem]
Maintain a $k$-dimensional tensor on integers of size $n \times \cdots \times n$ and support incrementing all entries whose $i$-th coordinate is $x$ for some $i$ and $x$ and querying the maximum value in the tensor. 
\end{prob}

There are two possible brute-force algorithms. The first one maintains a set of all the entries in the tensor, so that it runs in $\tO(n^{k-1})$ time per update to update all the changed entries in the tensor
and $\tO(1)$ time per query to extract the maximum value from the set. The second brute-force algorithm maintains the amount of increments we perform for each $i$ and $x$, so that given an entry, we can compute its value in $\tO(k) = \tO(1)$ time since $k$ is a constant. Thus, this algorithm runs in $\tO(1)$ time per update and $\tO(n^k)$ time per query. 

We show that these two algorithms are essentially optimal. 

\begin{theorem}
\label{thm:erickson}
Let $k \ge 2$ be a positive integer. Assuming the OuMv$_k$ hypothesis, there is no data structure for $k$-Dimensional Erickson’s problem with $\poly(n)$ pre-processing time, $O(n^{k-1-\eps})$ amortized update time and $O(n^{k-\eps})$ amortized query time for $\eps > 0$. 
\end{theorem}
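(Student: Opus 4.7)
The plan is to reduce OuMv$_k$ (Hypothesis~\ref{hypo:oumvk2}) to $k$-dimensional Erickson's problem in the same style as the other reductions in this section. Suppose for contradiction there is an Erickson data structure with $\poly(n)$ pre-processing, $O(n^{k-1-\eps_0})$ amortized update time, and $O(n^{k-\eps_0})$ amortized query time for some fixed $\eps_0>0$. Given an OuMv$_k$ instance $M\subseteq [N]^k$ with $Q=N^\gamma$ online queries, where $\gamma>0$ is a sufficiently small constant to be chosen in terms of $\eps_0$, I would instantiate Erickson on a tensor of side length $n=NQ$ and identify each coordinate with a pair $(t,x)\in[Q]\times[N]$. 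The initial tensor $T_0$ is defined as follows: for every $t\in[Q]$ and every $(x_1,\ldots,x_k)\in M$, set $T_0\bigl((t,x_1),\ldots,(t,x_k)\bigr)=Ct$ for a fixed constant $C>k$; every remaining entry is set to $-L$ for some $L>k$. Constructing $T_0$ and running Erickson's pre-processing takes $\poly(N)$ time.

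For the $t$-th OuMv$_k$ query with sets $(U_t^{(1)},\ldots,U_t^{(k)})$, I would perform, for each $i\in[k]$ and $x\in U_t^{(i)}$, a single slice increment on ``$i$-th coordinate equals $(t,x)$,'' followed by one max query. The key invariant is that after query $t$'s updates, the slice $(i,(t',x))$ has been incremented $\mathbbm{1}[x\in U_{t'}^{(i)}]$ times for $t'\le t$ and not at all for $t'>t$. Hence every diagonal-slab entry $((t',x_1),\ldots,(t',x_k))$ with $(x_1,\ldots,x_k)\in M$ and $t'\le t$ evaluates to $Ct'+\sum_i\mathbbm{1}[x_i\in U_{t'}^{(i)}]$, while every off-diagonal entry stays below $-L+k<0$. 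Because $C>k$, slab $t$'s entries strictly dominate slab $t'$'s for $t'<t$, so the tensor max equals $Ct+V_t$, where $V_t=\max_{(x_1,\ldots,x_k)\in M}\sum_i\mathbbm{1}[x_i\in U_t^{(i)}]\in\{0,1,\ldots,k\}$. Since $V_t=k$ iff the OuMv$_k$ answer for query $t$ is YES, checking whether the max equals $Ct+k$ recovers the answer.

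For the running-time analysis, the reduction uses $O(N)$ Erickson updates and one Erickson query per OuMv$_k$ query, so total Erickson cost is $O\bigl(QN\cdot n^{k-1-\eps_0}+Q\cdot n^{k-\eps_0}\bigr)=O\bigl(N^{\gamma+(1+\gamma)(k-\eps_0)}\bigr)$. Choosing $\gamma$ small enough in terms of $\eps_0$ (e.g.\ $\gamma=\eps_0/(4k)$) makes the exponent at most $\gamma+k-\eps$ for some $\eps>0$, yielding an OuMv$_k$ algorithm with $\poly(N)$ pre-processing and $O(N^{\gamma+k-\eps})$ total query time, contradicting Hypothesis~\ref{hypo:oumvk2}. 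The main obstacle is the irreversibility of $+1$ slice increments: naively reusing a single $N^k$ tensor across queries would let accumulated increments from past queries determine the max over $M$-entries independently of the current query's witness. The remedy is the ``diagonal-slab and staircase base value'' construction above, which gives each OuMv$_k$ query its own disjoint slab of the enlarged $(NQ)^k$ tensor and uses the additive offsets $Ct$ to force slab $t$ to dominate the max exactly at the moment query $t$ is issued.
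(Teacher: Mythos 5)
There is a genuine gap: your ``diagonal-slab and staircase base value'' fix does not work. At the moment you issue the max query for OuMv$_k$ query $t$, the slabs of all \emph{future} queries $t''>t$ are already present in the tensor (they were written at pre-processing) and have received no increments, so each $M$-entry in slab $t''$ still has value $Ct''\ge C(t+1)=Ct+C>Ct+k\ge Ct+V_t$, using your own requirement $C>k$. Hence, whenever $M\neq\emptyset$ and $Q\ge 2$, the global maximum after query $t$'s updates is $CQ$, not $Ct+V_t$, and your test ``is the max equal to $Ct+k$?'' answers NO for every $t<Q$ regardless of the true OuMv$_k$ answer. Moreover, no choice of the constant offset can rescue this scheme: to have slab $t$ beat the already-incremented past slabs you need the slab-to-slab gap to exceed $k$ (i.e.\ $C>k$), but to have slab $t$ beat the untouched future slabs you would need the at most $k$ increments of query $t$ to exceed that same gap (i.e.\ $C<V_t\le k$) --- a contradiction. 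Reversing the sign of the offsets merely swaps the roles of past and future slabs. Since the $+1$ slice increments are irreversible and the query is a global maximum, you cannot confine each OuMv$_k$ query to its own slab this way; the time analysis is fine, but the correctness of the reduction breaks.

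The paper avoids slabs entirely and handles irreversibility with a uniform ``completion'' trick on a single $n\times\cdots\times n$ tensor: initialize $T$ to the indicator of $M$; in phase $f$, increment slice $(i,x)$ for every $x\in U^{(i)}$, ask one max query, and \emph{afterwards} increment slice $(i,x)$ for every $x\notin U^{(i)}$. Thus every entry gains exactly $k$ per phase, the relative order is restored after each phase, and at the query of phase $f$ the maximum equals $k+1+(f-1)k$ if and only if some tuple of $M$ lies in $U^{(1)}\times\cdots\times U^{(k)}$; the phase costs $O(n)$ updates and $O(1)$ queries, giving the stated contradiction. If you want to salvage your write-up, replacing the slab construction by this completion step (or some other mechanism that keeps a predictable, query-independent threshold for the maximum) is the missing idea.
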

\begin{proof}
Assume for the sake of contradiction that such an efficient data structure exists. We will reduce from an OuMv$_k$ instance over $[n]^k$. Let $M \subseteq [n]^k$ be the input set of OuMv$_k$. 

First, we create an $n \times \cdots \times n$ tensor $T$ where all entries are $0$. Then for every $(a_1, \ldots, a_k) \in M$, we set $T[(a_1, \ldots, a_k)]$ to be $1$. We then use the assumed data structure for $k$-Dimensional Erickson’s problem to pre-process $T$ in $\poly(n)$ time. 

For every OuMv$_k$ query $U^{(1)} \times \dots \times U^{(k)}$, we perform the following phase. For each $i \in [k]$, and each $x \in U^{(i)}$,  we increment all entries in $T$ whose $i$-th coordinate is $x$. Then we query the maximum value in the tensor. After the query, for each $i \in [k]$, and each $x \not \in U^{(i)}$,  we increment all entries in $T$ whose $i$-th coordinate is $x$. 

Clearly, the updates in each phase increment all entries in $T$ a number of  $k$ times. During the $f$-th phase, we increment all entries in $T$ whose $i$-th coordinate is $x$ for $i \in [k]$ and $x \in  U^{(i)}$ before we ask the query from the data structure. Therefore, at the time of that query, an entry $(a_1, \ldots, a_k)$ of the tensor has value $k+1+(f-1)k$ if and only $(a_1, \ldots, a_k) \in M$ and $a_i \in U^{(i)}$ for every $i \in [k]$. Also, $k+1+(f-1)k$ is clearly  an upper bound for all values in the tensor. Therefore, the maximum value returned by the query is $k+1+(f-1)k$ if and only if the answer to the OuMv$_k$ is YES. 

The total number of data structure updates we make for each OuMv$_k$ query is $O(n)$, and the total number of data structure queries is $O(1)$. Therefore, the running time of each phase is $O(n \cdot n^{k-1-\eps} + n^{k-\eps}) = O(n^{k-\eps})$, contradicting  the OuMv$_k$ hypothesis. Therefore, assuming the OuMv$_k$ hypothesis, there is no data structure for $k$-Dimensional Erickson’s problem with $\poly(n)$ pre-processing time, $O(n^{k-1-\eps})$ update time and $O(n^{k-\eps})$ query time for $\eps > 0$. 
\end{proof}

\subsection{\texorpdfstring{$d$}{d}-Dimensional Langerman’s Problem}

We define the high-dimensional variant of the Langerman’s problem. The original Langerman’s problem~\cite{patrascu2010towards} corresponds to $d=1$. 

\begin{prob}[$d$-Dimensional Langerman’s problem]
Maintain a $d$-dimensional tensor $T$ on integers of size $n \times \cdots \times n$ and support updating the value of an entry. For each query, determine whether there exists $x \in [n]^d$, such that $P[x] \stackrel{\text{def}}{=} \sum_{y \text{ dominated by } x} T[y] = 0$. 
\end{prob}

\begin{prop}
\label{prop:Langerman_algo}
There exists a data structure for the $d$-Dimensional Langerman’s problem with polynomial pre-processing time and $\tO(n^{d^2/(d+1)})$ update and query time. 
\end{prop}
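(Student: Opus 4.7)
The plan is to generalize the classical one-dimensional $\tO(\sqrt{n})$ square-root decomposition for Langerman's problem to $d$ dimensions, with a block side length $B$ chosen at the end to balance update and query costs. I would partition $[n]^d$ into $(n/B)^d$ axis-aligned blocks of shape $B \times B \times \cdots \times B$. For each block $b$ I would maintain (i) a balanced search tree $\mathcal{B}_b$ storing the ``residual'' values $\{P[x] - \ell_b : x \in b\}$, and (ii) a scalar ``block offset'' $\ell_b$ recording the additive shift that has been applied to the entire block in bulk, so that the invariant $P[x] = \mathrm{stored}[x] + \ell_b$ holds for every $x$ in block $b$. The family $\{\ell_b\}_b$ is itself a $d$-dimensional array indexed by block coordinates; I would maintain it implicitly by a $d$-dimensional Fenwick tree whose point updates correspond to orthant-bulk updates on the blocks.

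When $T[y]$ is incremented by $\Delta$, the affected region $\{x \in [n]^d : x \ge y\}$ splits naturally into blocks that lie entirely inside this orthant and blocks that intersect it only partially. Blocks of the first type would be handled in bulk by a single point update to the Fenwick tree representing $\{\ell_b\}_b$, at cost $\tO(1)$. Blocks of the second type must have at least one block coordinate equal to $\lceil y_i / B \rceil$, so they form the union of at most $d$ codimension-one ``slabs'' totalling $O((n/B)^{d-1})$ blocks. For each such block I would walk over the (at most $B^d$) cells in the block that are dominated by $y$, increment their stored values by $\Delta$, and re-insert them into $\mathcal{B}_b$ at cost $O(\log B)$ per cell. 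This yields a total update cost of $\tO(n^{d-1} B)$.

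To answer a query, I would iterate over all $(n/B)^d$ blocks, either obtaining each $\ell_b$ by a Fenwick prefix query or (equivalently) pre-sweeping the whole array of offsets from scratch in $\tO((n/B)^d)$ time; for each block I would then test in $O(\log B)$ time whether the value $-\ell_b$ appears as a key in $\mathcal{B}_b$. The algorithm returns ``yes'' iff some block reports a hit, since by the stored-value invariant $\mathrm{stored}[x] = -\ell_b$ is exactly the condition $P[x] = 0$. This costs $\tO((n/B)^d)$ per query. Balancing $n^{d-1} B$ against $(n/B)^d$ gives $B^{d+1} = n$, i.e.\ $B = n^{1/(d+1)}$, producing the claimed $\tO(n^{d^2/(d+1)})$ bound on both operations. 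Pre-processing amounts to computing $P$ from the initial $T$ and populating the $\mathcal{B}_b$'s, which is clearly polynomial.

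The main step that actually needs care is the counting of partial blocks and the bookkeeping they require: I have to verify that ``partial'' really forces at least one block coordinate to equal $\lceil y_i / B \rceil$, so that the $O((n/B)^{d-1})$ bound on partial blocks holds, and that rewriting each of their at most $B^d$ cells costs only a logarithmic factor. The Fenwick-tree component is standard and contributes only $\polylog$ factors. Once these two points are nailed down, the proposition follows from elementary accounting.
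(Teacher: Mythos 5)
Your proposal is correct and follows essentially the same route as the paper's proof: the same $B\times\cdots\times B$ block decomposition with $B=n^{1/(d+1)}$, a per-block multiset of residual prefix sums, and a query that tests membership of the negated block offset in each block, balancing the $\tO(n^{d-1}B)$ per-cell work against the $\tO((n/B)^d)$ per-block work. The only difference is bookkeeping: you maintain the block offsets lazily via a $d$-dimensional Fenwick range-update structure, whereas the paper simply updates all $(n/B)^d$ block-anchor values $P[B\lfloor x/B\rfloor]$ explicitly on each update, which costs the same $\tO(n^{d^2/(d+1)})$ and so does not change the final bound.
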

\begin{proof}

Let $B = \Theta(n^{1/(d+1)})$ be a parameter. Without loss of generality, we assume $n$ is a multiple of $B$. 
For any $d$-dimensional vector $x$, we use $\lfloor x / B \rfloor$ to denote the vector $(\lfloor x_1 / B \rfloor, \ldots, \lfloor x_d / B\rfloor)$. 
We split $T$ to $O((n/B)^{d})$ pieces of sub-tensors of sizes $B \times \cdots \times B$, so that for any two entries $x, x'$ in the same sub-tensor, $\lfloor x/B\rfloor = \lfloor x'/B \rfloor$. 

We also maintain the following sub-data structures:
\begin{enumerate}
    \item For every $y \in [n/B]^d$, maintain $P[By]$.
    \item For every $x \in [n]^d$, maintain $P[x] - P[B \lfloor x / B \rfloor]$. 
    \item For every piece of sub-tensor $T'$ of size $B \times \cdots \times B$, maintain a multi-set containing all values of $P[x] - P[B \lfloor x / B \rfloor]$ for $x \in T'$. 
\end{enumerate}
We can clearly initialize these sub-data structures in $\tO(n^d)$ time. 

For each update changing $T[z]$ from value $p$ to value $q$, we can update these sub-data structures as follows:
\begin{enumerate}
    \item For every $y \in [n/B]^d$ such that $By$ dominates $z$, we add $q-p$ to $P[By]$. This step takes $\tO(n^{d^2/(d+1)})$ time. 
    \item For some $x$, the value $P[x] - P[B \lfloor x / B \rfloor]$ will be affected if and only if $z$ is dominated by $x$ while $z$ is not dominated by $B \lfloor x / B \rfloor$. If this happens, then at least one coordinate of $x$ differs by at most $B$ from $z$. Thus, for every $x$ that has at least one coordinate differing by at most $B$ from that coordinate of $z$, we check whether $P[x] - P[B \lfloor x / B \rfloor]$ will be affected by the change of $T[z]$, and add $q-p$ to $P[x] - P[B \lfloor x / B \rfloor]$ if the check passes. The number of such $x$ is at most $O(n^{d-1}B) = O(n^{d^2/(d+1)})$. Therefore, this step takes $\tO(n^{d^2/(d+1)})$ time. 
    \item This step is relatively easy given the results from the previous step, and will take $\tO(n^{d^2/(d+1)})$ time. 
\end{enumerate}

Now we discuss how the data structure handles queries. Consider each sub-tensor $T'$. For any $x \in T'$, $P[x] = \left(P[x] - P[B \lfloor x / B \rfloor]\right) + P[B \lfloor x / B \rfloor]$. Also, the value $P[B \lfloor x / B \rfloor]$ is the same for every $x \in T'$ by the definition of a sub-tensor. Denote this value by $V$. Therefore, we essentially need to determine whether there exists $x \in T'$ such that $P[x] - P[B \lfloor x / B \rfloor] = -V$. This can be answered in $\tO(1)$ time per sub-tensor using the third sub-data structure. Since there are $O((n/B)^d) = O(n^{d^2/(d+1)})$ sub-tensors in total, each query takes $\tO(n^{d^2/(d+1)})$ time. 
\end{proof}

We show that the data structure in \cref{prop:Langerman_algo} is nearly-optimal under the OuMv$_k$ hypothesis. 

\begin{theorem}
\label{thm:langerman}
Let $k \ge 2$ be a positive integer. Assuming the OuMv$_k$ hypothesis, there is no data structure for $d$-Dimensional Langerman’s problem  with $\poly(n)$ pre-processing time, $O(n^{d^2/(d+1)-\eps})$ amortized update  and  query time for $\eps > 0$ where $d = k - 1$. 
\end{theorem}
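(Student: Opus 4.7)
The plan is to reduce from OuMv$_k$ to $d$-dimensional Langerman's problem with $d=k-1$, mirroring the general structure of the earlier OuMv$_k$-based reductions in this section (compare with Theorems~\ref{thm:skyline} and \ref{thm:klee}). Suppose for contradiction that there is a data structure with $\poly(n)$ pre-processing and amortized update/query time $O(n^{(k-1)^2/k-\epsilon})$. Choose the OuMv$_k$ dimension $N$ so that the tensor side length is $n = \Theta(N^{k/(k-1)})$; then each operation costs $O(N^{k-1-\epsilon'})$ and the number of tensor entries $n^{k-1} \asymp N^k$ is large enough to inject $[N]^k$ into $[n]^{k-1}$. If we can handle each OuMv$_k$ query using $\tilde O(N)$ data structure operations, the total time over $N$ queries becomes $\tilde O(N^2 \cdot N^{k-1-\epsilon'}) = \tilde O(N^{k+1-\epsilon'})$, contradicting the OuMv$_k$ hypothesis.

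For the encoding, I would use the bijection $f\colon [N]^k \to [n]^{k-1}$ that spreads the digits of $a_k$ across the $k-1$ coordinates: write $a_k - 1$ in base $N^{1/(k-1)}$ as $(c_1,\dots,c_{k-1})$ and set $f(a_1,\dots,a_k)_i = (a_i - 1)N^{1/(k-1)} + c_i$, which is injective. In pre-processing, place $+1$ at $T[f(a_1,\dots,a_k)]$ for each $(a_1,\dots,a_k) \in M$, together with a fixed collection of large ``offset'' values chosen so that prefix sums are generically well separated from $0$. For each OuMv$_k$ query $(U^{(1)},\dots,U^{(k)})$, perform $\tilde O(N)$ updates that together realize the following property: at a distinguished target position $x^\star$, the prefix sum $P[x^\star]$ equals $|M \cap (U^{(1)} \times \cdots \times U^{(k)})|$ plus a fixed (query-independent) constant, hence $P[x^\star] = 0$ iff the OuMv$_k$ answer is YES. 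Then issue a single Langerman's query to test whether any prefix sum vanishes, and undo the $\tilde O(N)$ updates before processing the next query. The natural candidates for these updates are, for each dimension $i \in [k]$ and each $v \notin U^{(i)}$, a compensating ``exclusion'' value whose magnitude is the precomputed count $|\{(a_1,\dots,a_k) \in M : a_i = v\}|$, placed at a position engineered so that the prefix box defining $P[x^\star]$ exactly collects the contributions that should be cancelled.

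The first hurdle is the usual inclusion--exclusion issue: naively cancelling all $(a_i = v)$ slabs for $v \notin U^{(i)}$ double-subtracts tuples whose coordinates violate $U^{(i)}$ in more than one dimension. I expect this to be resolved by the tensor layout itself -- placing the exclusion entries in the ``outer'' region of each dimension so that they dominate only the entries the prefix $[1,x^\star]$ would otherwise over-count -- much as the dominance pairs $(a_i, N-a_i)$ are used in the Skyline reduction. The genuinely new obstacle, specific to Langerman's, is that the query asks for the \emph{existence} of any zero prefix sum rather than the value of a specific one, so the construction must also guarantee that $P[x] \ne 0$ for every $x \ne x^\star$, regardless of $M$ and of the query. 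The expected remedy is to scale all ``meaningful'' contributions by a large integer and to place small $\{-1,+1\}$ perturbations at a sparse set of anchor positions so that for $x \ne x^\star$ the prefix sum is congruent to a nonzero residue modulo the scale; verifying that this scheme is compatible with the $\tilde O(N)$ budget on query-time updates is the most delicate step of the proof.
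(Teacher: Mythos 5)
Your setup matches the paper's (the side length $n=\Theta(N^{k/(k-1)})$, the block encoding of $a_k$ inside a $(k-1)$-dimensional cell indexed by $(a_1,\dots,a_{k-1})$, the $\tilde{O}(N)$ operation budget per OuMv$_k$ query), but the central mechanism you propose has a genuine gap. You want a single distinguished position $x^\star$ whose prefix sum equals $|M\cap(U^{(1)}\times\cdots\times U^{(k)})|$ plus a query-independent constant, obtained by adding, for each $i$ and each $v\notin U^{(i)}$, a correction of magnitude $|\{a\in M: a_i=v\}|$. This over-subtracts every tuple violating two or more of the sets $U^{(i)}$, and the hoped-for geometric fix does not exist for an additive quantity: a prefix box either contains an entry or it does not, so a tuple with two bad coordinates either contributes $+1$ and is corrected by $-2$, or must be excluded from the box outright --- which is impossible since the $U^{(i)}$ are arbitrary subsets, not prefixes. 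The skyline analogy fails precisely here: domination is idempotent (being killed once or twice is the same), whereas prefix sums add up the corrections. Making the correction exact would require inclusion--exclusion terms such as $\sum_{v\notin U^{(i)},\,w\notin U^{(j)}}|\{a\in M: a_i=v,\,a_j=w\}|$, whose query-time evaluation is itself an OuMv-type computation, so no $\tilde{O}(N)$ slab-correction scheme can produce the exact intersection count at one position.

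The paper avoids exact counting entirely and instead exploits the existential semantics of the Langerman query, which your single-target design throws away. It stores in each cell indexed by $a\in[N]^{k-1}$ the \emph{discrete difference} $A_a$ of a local tensor $S_a$ (where $S_a[f^{-1}(a_k)]=a_k$ for $(a,a_k)\in M$), so that every global prefix sum $P[(B+1)(a-\vec{\mathbf{1}})+y]$ telescopes to the local value $S_a[y]\in\{0,1,\dots,N\}$; this also disposes of your second worry about spurious zeros without any modular perturbation. Per query it then (i) adds a huge penalty $1000N$ (via two entry updates per excluded value $j\notin U^{(i)}$, $i\le k-1$) whose effect saturates --- being hit by one or several penalties equally prevents a zero, which is the correct "OR-like'' substitute for your exclusion counts --- and (ii) scans $j\in U^{(k)}$ by adding $-j$ at the all-ones entry and issuing one query per $j$: a vanishing prefix sum then exists iff some cell $a\in U^{(1)}\times\cdots\times U^{(k-1)}$ has $(a,j)\in M$. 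This stays within $O(N)$ updates and queries per OuMv$_k$ query, giving the contradiction you computed. To repair your proposal you would essentially have to abandon the single-position counting idea and adopt this existential, penalty-based scheme.
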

\begin{proof}
Assume for the sake of contradiction that such an efficient data structure exists. We will reduce from an OuMv$_k$ instance over $[N]^k$. 
Let $M \subseteq [N]^k$ be the input set of OuMv$_k$. 

Let $B = N^{1/d}$ be a parameter (recall $d = k - 1$). Without loss of generality, assume $B$ is an integer. Let $f: [B]^d \rightarrow [N]$ be an arbitrary bijection between $[B]^d$ and $[N]$. 
For every $a=(a_1, \ldots, a_d) \in [N]^d$, we create the following tensor $S_{a}$ of dimension $(B+1)^d$.  For any $(a_1, \ldots, a_{d+1}) \in M$, we set $S_{a}[f^{-1}(a_{d+1})]$ to be $a_{d+1}$. All other entries of $S_{a}$ are zeros. We also create the following tensor $A_{a}$ of dimension $(B+1)^d$, where 
$$A_{a}[y] = \sum_{b \in \{0, 1\}^d} (-1)^{\sum_i b_i} S_{a}[y - b], $$
for every $y \in [B+1]^d$, where $S_{a}[y - b]$ is regarded as $0$ if $y - b$ is out of bound. We can view $S_{a}$ as the tensor that corresponds to the $d$-dimensional prefix sums of $A_{a}[y]$. 

Define $n = (B+1)N$. Then we create the following tensor $T$ of dimensions $n^d  = \left( (B+1) N \right)^d$ for the $d$-Dimensional Langerman’s problem. Informally, this tensor can be viewed as a tensor of dimension $N^d$, with its $(a_1, \ldots, a_d)$-th entry replaced by $A_{a_1, \ldots, a_d}$. More formally, for any $a \in [N]^d$ and any $y \in [B+1]^d$, we set $T[(B+1)(a-\vec{\mathbf{1}}) + y] = A_a[y]$, where $\vec{\mathbf{1}}$ is the all-ones vector. 

Recall for any $x \in [n]^d$, we define $P[x]$ to be $\sum_{x' \le x} T[x']$ where $x' \le x$ if and only if $x'_i \le x_i$ for every $i \in [d]$. This tensor $T$ has the following nice property: for any $a \in [N]^d$ and any $y \in [B+1]^d$, 
\begin{equation*}
    \begin{split}
        P[(B+1)(a-\vec{\mathbf{1}}) + y] &= \sum_{a' \in [N]^d} \left( \sum_{\substack{y' \in [B+1]^d \\ (B+1)(a'-\vec{\mathbf{1}}) + y' \le (B+1)(a-\vec{\mathbf{1}})+y}} T[(B+1)(a'-\vec{\mathbf{1}}) + y']\right)\\
        &= \sum_{\substack{a' \in [N]^d \\ a' \le a}} \left( \sum_{\substack{y' \in [B+1]^d \\ \forall i, y'_i \le y_i \text{ if } a'_i = a_i }} A_{a'}[y']\right)\\
        &= \sum_{\substack{a' \in [N]^d \\ a' \le a}} S_{a'}[g(a, a', y)],
    \end{split}
\end{equation*}
where $g(a, a', y)_i = y_i$ if $a_i = a_i'$ and $g(a, a', y)_i = B+1$ otherwise. Furthermore, by the definition of $S_{a'}$, if any coordinate of $g(a, a', y)$ is $B+1$, $S_{a'}[g(a, a', y)] = 0$. Therefore, the above formula can be further simplified to $S_a[y]$, which is $f(y)$ if  $y \in [B]^d$ and $(a_1, \ldots, a_d, f(y)) \in M$ and $0$ otherwise. 

We then feed the tensor $T$ to the pre-processing phase of the assumed data structure, which takes $\poly(n)$ time.

For every OuMv$_k$ query $U^{(1)} \times \dots \times U^{(d+1)}$, we perform the following phase. For every $i \in [d]$, and every $j \not \in U^{(i)}$, we add $1000N$ to the entry $\left(1, \ldots, 1, \underbrace{(j-1)(B+1)+1}_{i\text{th coordinate}} , 1, \ldots, 1 \right)$, and add $-1000N$ to the entry  $\left(1, \ldots, 1, \underbrace{j(B+1)}_{i\text{th coordinate}} , 1, \ldots, 1 \right)$. Then for every $j \in U^{(d+1)}$, we add $-j$ to $T[(1, \ldots, 1)]$ and immediately perform a query in the data structure. We claim that at this point, $T$ has a zero prefix sum if and only if there exists $(a_1, \ldots, a_d) \in U^{(1)} \times \dots \times U^{(d)}$ such that $(a_1, \ldots, a_d, j) \in M$. 

First, suppose there exists $(a_1, \ldots, a_d) \in U^{(1)} \times \dots \times U^{(d)}$ such that $(a_1, \ldots, a_d, j) \in M$. Since $P[(B+1)(a-\vec{\mathbf{1}}) + f^{-1}(j)]$ has value $j$ before the phase, and the changes made in this phase increase its value by $-j$,  $T$ has a zero prefix sum. Conversely, suppose there exist $a \in [N]^d$ and $y \in [B+1]^d$ such that $P[(B+1)(a-\vec{\mathbf{1}})+y] = 0$. For any $i \in [d]$, $a_i$ must be contained in $U^{(i)}$, since otherwise, $1000N$ is added to $P[(B+1)(a-\vec{\mathbf{1}})+y]$, making it impossible to be zero. Then the current value of $P[(B+1)(a-\vec{\mathbf{1}})+y]$ equals its value before the phase plus $-j$. In order for its current value to be $0$, its value before the phase must be $j$, which implies $f(y) = j$ and $(a_1, \ldots, a_d, j) \in M$. 

After the query, we add $j$ back to $T[(1, \ldots, 1)]$ and proceed to the next $j$. Before the end of the phase, we revert all changes we make during the phase. 

The total number of data structure updates and queries are $\Theta(N)$. Therefore, the running time of each phase is $O(N \cdot n^{d^2/(d+1)-\eps}) = O(N^{d+1- (d+1)\eps / d})$, contradicting the OuMv$_k$ hypothesis. Therefore, assuming the OuMv$_k$ hypothesis, there is no data structure for $d$-Dimensional Langerman’s problem  with $\poly(n)$ pre-processing time, $O(n^{d^2/(d+1)-\eps})$ update  and  query time for $\eps > 0$ where $d = k - 1$. 
\end{proof}

\section*{Acknowledgements}
We would like to thank Virginia Vassilevska Williams for many helpful discussions during the early phase of this project. We also thank her for valuable comments on a draft of this paper.

\bibliographystyle{alpha} 
\bibliography{main}

\appendix
\section{Appendix}
\begin{prop}
\cref{hypo:oumvk1} and \cref{hypo:oumvk2} are equivalent. 
\end{prop}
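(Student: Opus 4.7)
The proof establishes both directions separately.

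The easy direction $\ref{hypo:oumvk2}\Rightarrow\ref{hypo:oumvk1}$ is immediate in contrapositive form: any algorithm witnessing $\neg\ref{hypo:oumvk1}$ runs in total time $O(n^{1+k-\eps})$ on $n$ queries, so in particular its pre-processing cost is at most $O(n^{1+k-\eps})=\poly(n)$ and its total query time on $n=n^1$ queries is $O(n^{1+k-\eps})$, which witnesses $\neg\ref{hypo:oumvk2}$ with $\gamma=1$.

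For the hard direction $\ref{hypo:oumvk1}\Rightarrow\ref{hypo:oumvk2}$, I work contrapositively: suppose $A$ is an algorithm with pre-processing time $O(n^c)$ (for some constant $c$) and total query time $O(n^{\gamma+k-\eps})$ on $n^\gamma$ online queries, for some fixed $\gamma,\eps>0$. The plan is to use $A$ to solve any $N$-dimensional OuMv$_k$ instance with $N$ online queries in total time $O(N^{k+1-\eps'})$ for some $\eps'>0$, contradicting Hypothesis~\ref{hypo:oumvk1}. The key step is a block decomposition of $[N]^k$: partition $[N]$ into $N/s$ contiguous blocks of length $s$ (a parameter to be chosen), inducing a partition of $[N]^k$ into $(N/s)^k$ sub-cubes $B_{j_1}\times\cdots\times B_{j_k}$. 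For each sub-cube the restriction of the input set $M$ (relabelled to $[s]^k$), together with the natural restrictions $U^{(i)}\cap B_{j_i}$ of every online query, yields an independent OuMv$_k$ instance of dimension $s$; the original query answer is simply the OR of the sub-instance answers across the $(N/s)^k$ sub-cubes. I run $A$ on each sub-instance to answer all $N$ restricted queries, processing them in batches of $s^\gamma$ queries and re-running pre-processing at the start of each batch. Summing the per-sub-instance cost $O\bigl(\lceil N/s^\gamma\rceil(s^c+s^{\gamma+k-\eps})\bigr)$ over the $(N/s)^k$ sub-cubes yields total running time $O\bigl(N^{k+1}(s^{c-k-\gamma}+s^{-\eps})\bigr)$.

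The final step is to set $s=N^\beta$ for a sufficiently small constant $\beta>0$: the second term contributes exponent $k+1-\beta\eps<k+1$ automatically, while the first contributes $k+1-\beta(k+\gamma-c)$, which is also below $k+1$ as long as $c<k+\gamma$. If this does not hold directly, one first ``boosts'' $\gamma$ by running $A$ in sequence to obtain a variant witnessing $\neg\ref{hypo:oumvk2}$ at a larger value of $\gamma$, reducing to the previous case. I expect the main (minor) obstacle to be purely the parameter balancing: picking $\beta$ small enough to absorb the pre-processing cost while still leaving a positive $\eps'$ in the query term. The block-decomposition plus batching strategy, and the bookkeeping around boosting $\gamma$, parallel the analogous OMv equivalence proof of \cite{HenzingerKNS15}.
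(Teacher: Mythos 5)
Your block decomposition, batching, and the easy direction all match the paper's argument, and your analysis is fine whenever the pre-processing exponent satisfies $c<k+\gamma$. The genuine gap is in the remaining case $c\ge k+\gamma$, which your proposal delegates to a ``boosting'' step that is circular as described. Boosting $\gamma$ to some $\gamma'>\gamma$ ``by running $A$ in sequence'' means re-running the $O(n^c)$ pre-processing at the start of each of the $n^{\gamma'-\gamma}$ batches; since those re-runs occur during query processing, they count toward the query time of the boosted variant, contributing $n^{\gamma'-\gamma+c}$. For the boosted variant to witness $\neg$\cref{hypo:oumvk2} at $\gamma'$ you need $\gamma'-\gamma+c\le \gamma'+k-\eps'$, i.e.\ exactly the condition $c<k+\gamma$ you were trying to avoid, so the fallback never applies in the case where it is needed (and padding the instance to a larger size $m$ with $m^\gamma=n^{\gamma'}$ makes the query exponent worse, so that route fails too).

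The paper sidesteps this entirely: it pre-processes each sub-tensor \emph{once} (total cost $N^{k(1-\delta)}\cdot N^{\delta t}=N^{k+(t-k)\delta}$, which is $O(N^{k+1-\eps''})$ once $\delta$ is chosen with $(t-k)\delta<1$), and after each batch of $n^\gamma$ queries it \emph{restores the data structure to its state at the start of the phase}, e.g.\ by logging and undoing memory writes, at cost proportional to the phase's query time. With this rollback, the per-sub-cube cost in your notation becomes $O(s^c+\lceil N/s^\gamma\rceil\, s^{\gamma+k-\eps})$ rather than $O(\lceil N/s^\gamma\rceil(s^c+s^{\gamma+k-\eps}))$, the troublesome term $N^{k+1}s^{c-k-\gamma}$ is replaced by $(N/s)^k s^c$, and choosing $s=N^\beta$ with $\beta(c-k)<1$ and $\beta\gamma\le 1$ finishes the argument for every constant $c$, with no case distinction and no boosting. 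Incorporating this state-restoration trick (which, once available, also makes any boosting of $\gamma$ unnecessary) is what your write-up is missing.
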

\begin{proof}
Clearly, \cref{hypo:oumvk2} implies \cref{hypo:oumvk1}. Thus, it suffices to show \cref{hypo:oumvk1} implies \cref{hypo:oumvk2}. We will prove the contrapositive.

Suppose there is an algorithm for the OuMv$_k$ problem with $O(n^t)$ pre-processing time and $O(n^{\gamma + k-\eps})$ total query time for $n^\gamma$ queries, for some $t, \gamma, \eps>0$. We aim to show an algorithm for OuMv$_k$ on input $M \subseteq [N]^k$ and $N$ queries that run in $O(N^{1+k - \eps'})$ time.

Let $\delta \in (0, 1)$ be a small enough constant to be fixed later. We first split $[N]^k$ to sub-tensors of dimensions $N^\delta \times \cdots \times N^\delta$. Let $n = N^\delta$. For each of the $N^{k(1-\delta)}$ sub-tensors, we run the assumed $O(n^t)$ time pre-processing algorithm for OuMv$_k$. Overall, this step takes $O(N^{k+(t-k)\delta})$ time. 

Then in each phase, we handle $n^\gamma$ queries. Here, we need $n^\gamma = O(N)$, which is equivalent to $\delta \gamma \le 1$. For each of the $n^\gamma$ queries, we run the assumed algorithm for OuMv$_k$ on every sub-tensor with its corresponding portion in the query. Over all the $n^\gamma$ queries, the running time is $O(n^{\gamma + k - \eps} \cdot N^{k(1-\delta)})$. After each $n^\gamma$ queries, we recover the data structure to its state at the beginning of the phase, so that the data structure will be ready for the next $n^\gamma$ queries. The recovering time is also $O(n^{\gamma + k - \eps} \cdot N^{k(1-\delta)})$. Therefore, the overall time for handling $N$ queries is $O(N^{1+k-\delta \eps})$. 

Therefore, in order for the total time of our algorithm to be $O(N^{1+k-\eps'})$ for some $\eps' > 0$, it suffices to take any $\delta \in (0, 1)$ such that $(t-k) \delta < 1$ and $\delta \gamma \le 1$, which is clearly possible.  
\end{proof}

\end{document}